\mathchardef\za="710B  %\alpha
\mathchardef\zb="710C  %\beta
\mathchardef\zg="710D  %\gamma
\mathchardef\zd="710E  %\delta
\mathchardef\zve="710F %\epsilon
\mathchardef\zz="7110  %\zeta
\mathchardef\zh="7111  %\eta
\mathchardef\zvy="7112 %\theta
\mathchardef\zi="7113  %\iota
\mathchardef\zk="7114  %\kappa
\mathchardef\zl="7115  %\lambda
\mathchardef\zm="7116  %\mu
\mathchardef\zn="7117  %\nu
\mathchardef\zx="7118  %\xi
\mathchardef\zp="7119  %\pi
\mathchardef\zr="711A  %\rho
\mathchardef\zs="711B  %\sigma
\mathchardef\zt="711C  %\tau
\mathchardef\zu="711D  %\upsilon
\mathchardef\zvf="711E %\phi
\mathchardef\zq="711F  %\chi
\mathchardef\zc="7120  %\psi
\mathchardef\zw="7121  %\omega
\mathchardef\ze="7122  %\varepsilon
\mathchardef\zy="7123  %\vartheta
\mathchardef\zf="7124  %\varomega
\mathchardef\zvr="7125 %\varrho
\mathchardef\zvs="7126 %\varsigma
\mathchardef\zf="7127  %\varphi
\mathchardef\zG="7000  %\Gamma
\mathchardef\zD="7001  %\Delta
\mathchardef\zY="7002  %\Theta
\mathchardef\zL="7003  %\Lambda
\mathchardef\zX="7004  %\Xi
\mathchardef\zP="7005  %\Pi
\mathchardef\zS="7006  %\Sigma
\mathchardef\zU="7007  %\Upsilon
\mathchardef\zF="7008  %\Phi
\mathchardef\zW="700A  %\Omega
\newcommand{\be}{\begin{equation}}
\newcommand{\ee}{\end{equation}}
\newcommand{\lra}{\longrightarrow}
\newcommand{\bea}{\begin{eqnarray}}
\newcommand{\eea}{\end{eqnarray}}
\newcommand{\beas}{\begin{eqnarray*}}
\newcommand{\eeas}{\end{eqnarray*}}
\newcommand{\Z}{\mathbb{Z}}
\newcommand{\R}{\mathbb{R}}
\newcommand{\pa}{\partial}
\newcommand{\ti}{\times}
\def\A{{\mathcal A}}
\def\cD{{\mathcal D}}
\def\cO{{\mathcal O}}
\def\cD{{\mathcal D}}
\def\cP{{\mathcal P}}
\def\cG{{\mathcal G}}
\def\cU{{\mathcal U}}
\def\sA{{\mathsf{A}}}
\def\sD{{\mathsf D}}
\def\sJ{{\mathsf J}}
\def\sT{{\mathsf T}}
\def\sj{{\mathsf j}}
\def\sp{{\mathsf p}}
\def\s{\mathsf{s}}
\def\spi{\mathsf{\pi}}
\def\sh{\mathsf{h}}
\def\bh{{\mathbf h}}
\def\bp{{\mathbf p}}
\def\bz{{\mathbf z}}
\def\bt{{\boldsymbol \tau}}
\def\bpi{{\boldsymbol \pi}}
\newcommand{\bk}[2]{\ensuremath{\langle #1 , #2\rangle}}
\newcommand{\Bk}[2]{\ensuremath{\La #1 , #2\Ra}}
\newcommand{\La}{\big\langle}
\newcommand{\Ra}{\big\rangle}
\newcommand{\la}{\langle}
\newcommand{\ran}{\rangle}
\def\lna{\lbrack\! \lbrack}
\def\rna{\rbrack\! \rbrack}
\newcommand{\lb}{[\cdot,\cdot]}
\newcommand{\pb}{\{\cdot,\cdot\}}
\newcommand{\pair}{\bk{\cdot}{\cdot}}
\newcommand{\Ci}{C^{\infty}}
\newcommand{\mn}{{\medskip\noindent}}
\newcommand{\no}{{\noindent}}
\newcommand{\xd}{\textnormal{d}}
\newcommand{\half}{{\frac{1}{2}}}
\newcommand{\g}{\mathfrak{g}}
\newcommand{\nn}{\nonumber}
\newcommand{\ot}{\otimes}
\newcommand{\we}{\wedge}
\def\Sec{\operatorname{Sec}}
\def\Rt{{\R^\ti}}
\def\dt{\xd_{\sT}}
\def\Lst{{L^\boxtimes}}
\def\on{\operatorname}
\def\xdp{{\on{d}_\Pi}}
\def\p{\mathbf{p}}
\def\x{\mathbf{x}}
\def\h{\mathbf{h}}
\def\s{\mathfrak{s}}
\def\hs{{\mathbf{h}^*}}
\def\hh{{\widehat{\mathbf h}}}
\def\hhs{{{\widehat{\mathbf h}^*}}}
\def\hzt{{\widehat{\mathbf \bt}}}
\def\hts{{{\widehat{\mathbf \bt}^*}}}
\newcommand{\Ll}{{\pounds}}
\def\sb{\mathsf{sb}}
\newcommand{\hlra}{\lhook\joinrel\longrightarrow}
\def\z{\mathbf{z}}
\newtheorem{theorem}{Theorem}[section]
\newtheorem{proposition}[theorem]{Proposition}
\theoremstyle{definition}
\newtheorem{example}[theorem]{Example}
\newtheorem{remark}[theorem]{Remark}
\newtheorem{definition}[theorem]{Definition}
\begin{document}
\title{Jacobi algebroids\\ and Jacobi sigma models}
\author{Fabio Di Cosmo\footnote{email: fabio.di@uah.es}\\
\textit{Departamento de Física y Matemática, Universidad de Alcalá}
\\ \\
Katarzyna Grabowska\footnote{email:konieczn@fuw.edu.pl }\\
\textit{Faculty of Physics,
                University of Warsaw}
\\ \\
Janusz Grabowski\footnote{email: jagrab@impan.pl} \\
\textit{Institute of Mathematics, Polish Academy of Sciences}}
\date{}
\maketitle
\begin{abstract} The definition of an action functional for the Jacobi sigma models, known for Jacobi brackets of functions, is generalized to \emph{Jacobi bundles}, i.e., Lie brackets on sections of (possibly nontrivial) line bundles, with the particular case of contact manifolds. Different approaches are proposed, but all of them share a common feature: the presence of a \emph{homogeneity structure} appearing as a principal action of the Lie group $\Rt=\on{GL}(1;\R)$. Consequently, solutions of the equations of motions are morphisms of certain \emph{Jacobi algebroids}, i.e., principal $\Rt$-bundles equipped additionally with a compatible Lie algebroid structure. Despite the different approaches we propose, there is a one-to-one correspondence between the space of solutions of the different models. The definition can be immediately extended to \emph{almost Poisson} and \emph{almost Jacobi brackets}, i.e., to brackets that do not satisfy the Jacobi identity. Our sigma models are geometric and fully covariant.

\medskip\noindent
{\bf Keywords:}
\emph{Jacobi bundle, principal bundle, jet bundle, Poisson tensor, Lie algebroid, Poisson sigma model}\par

\smallskip\noindent
{\bf MSC 2020:} 81T45; 81T40; 70S99; 57R56; 53D17

\end{abstract}
\section{Introduction}
Non-linear sigma models are field theoretic models that ubiquitously appear in physics and have been rediscovered several times over the past decades. In a nutshell, they are field theories whose fields are maps from a source manifold, possibly with a boundary, to a target manifold usually endowed with additional geometric structures, for instance, a Riemannian metric or other tensor fields. Different motivations lead to the introduction of such models. For instance, non-linear sigma models appear as effective theories of some string models, as models of lower-dimensional gravity, or as effective models to describe the dynamics of some particles.

In this paper, we are interested in non-linear sigma models defined on two-dimensional surfaces $\Sigma$ and whose target manifolds are related to Jacobi bundles. As contact structures in full generality (i.e., defined as `maximally nonintegrable' corank 1 distributions) are particular Jacobi structures, \emph{contact sigma models} form an important subclass of our Jacobi sigma models. Different action functionals have been independently proposed a few years ago \cite{Chatzistavrakidis:2020, Bascone:2021} to define Jacobi sigma models as generalizations of Poisson sigma models. Poisson sigma models were introduced at the end of the previous century by Schaller and Strobl \cite{Schaller:1994} and Ikeda \cite{Ikeda:1994} to study gravity-Yang Mills systems in reduced dimensions. Under some circumstances, this model is topological, in the sense that the space of equivalence classes of solutions under gauge transformations is a finite-dimensional manifold. A few years later, the same model was understood as a special case of a more general approach to the master equation for systems possessing BRST symmetries \cite{Alexandrov:1997}. After that, it was found out that the semiclassical expansion of the quantum correlators of this model obtained via Feynmann path integral gives Kontsevich's formula for the deformation of the algebra of functions on a Poisson manifold \cite{Cattaneo:2000}. Eventually, for suitable choices of the source surface, the space of solutions of the model was shown to be the symplectic groupoid of the Poisson manifold \cite{Cattaneo:2001}. From a physical perspective, the Poisson sigma model has found several applications also to the description of string background and branes (see for instance \cite{Calvo:2005, Calvo:2006} ).

Therefore, an obvious question arises, whether a Jacobi sigma model can be defined and in what sense it generalizes the properties of the Poisson sigma model. The approaches previously mentioned have differences: in the paper \cite{Chatzistavrakidis:2020} a la `Kaluza-Klein' approach is adopted to define the action functional of the model, whereas in the work \cite{Bascone:2021} a `constrained' approach is considered. In the first case, the action functional is interpreted, in the end, as an action functional of the Poisson sigma model for the Poissonization of the Jacobi structure, whereas, in the second situation one of the Euler-Lagrange equations is a constraint for the fields under analysis. Another difference appears when one looks at the equivalence classes of solutions under gauge transformations, since they have different dimensions. On the other hand, there are also some similarities, since in both cases the authors were interested in Jacobi brackets for functions on the target manifold of the sigma model and in this sense, the Jacobi sigma model was considered as a generalization of the Poisson sigma model.

\medskip In this paper, however, we will consider a more general situation, that of a Jacobi bundle instead of a Jacobi structure $(\zL,E)$. We will come back to this point later in the paper, but it is worth anticipating here that in the sense of Kirillov \cite{Kirillov:1976}, a \emph{Jacobi bracket} is a bracket on sections of a line bundle $L\to M$, and it is expressed in terms of a skew-symmetric bi-differential operator. Such a general description is also called a \emph{Jacobi bundle} \cite{Marle:1991}. When the line bundle is trivial, its sections can be identified with functions on the base space and one recovers the description of a Jacobi bracket in terms of a bivector field $\Lambda$ and a vector field $E$ on a manifold $M$. %satisfying the properties
%\begin{equation}
%\left[ \Lambda,\Lambda \right]_{SN} = 2 E \wedge \Lambda\,,\quad \left[ E,\Lambda \right]_{NS}=0\,.
%\end{equation}
The corresponding Jacobi bracket of functions on $M$ reads
\[\{f,g\}=\zL(\xd f,\xd g)+fE(g)-gE(f).
\]
Such structures have been called by Lichnerowicz \cite{Lichnerowicz:1978} \emph{Jacobi manifolds} and used in the previous description of the Jacobi sigma model. In order to have a Jacobi sigma model for a (generally nontrivial) Jacobi bundle, we follow here the approach to Jacobi brackets elaborated by some of the authors (see, for instance, \cite{Bruce:2017}). This approach fits much better to the module, rather than the algebra character of Jacobi brackets from a purely algebraic point of view. For trivial line bundles, i.e., Jacobi brackets on $\Ci(M)$, it is easy to mix the $\Ci(M)$-module structure on $\Ci(M)$ with the associative algebra structure on $\Ci(M)$.

According to this point of view, the structure of a Jacobi bundle $(L,\pb)$ can be equivalently described as a principal $\Rt$-bundle $P\to M$ endowed with a Poisson tensor $\zP$ which is homogeneous of degree $-1$, where the homogeneous structure is understood in terms of the action of the structure group $\Rt$. If the Poisson tensor $\zP$ is invertible, then $\zw=\zP^{-1}$ is a 1-homogeneous symplectic structure, and we deal with a contact structure on $M$. In fact, a canonical choice of $P$ is $P=\big(L^*\big)^\ti$, where, for a vector bundle $V$, with $V^\ti$ we denote $V$ with the zero-section removed, i.e., the open submanifold of non-zero vectors in $V$. Let us recall that
by $\Rt$ we mean the multiplicative group $\R \setminus \{0\}=\on{GL}(1;\R)$ of non-zero reals. The $\Rt$-bundle
$\big(L^*\big)^\ti$ will be denoted shortly $\Lst$.

Once recognized the r\^ole of the homogeneous Poisson structure, one defines the Jacobi sigma model for a Jacobi bundle as a Poisson sigma model with the target manifold being the principal $\Rt$-bundle $\Lst$, and a homogeneous Poisson bivector field $\zP$. The Hamiltonian description of the system allows us to identify the space of equivalence classes of solutions under gauge transformations with a symplectic $\Rt$-groupoid endowed with a homogeneous multiplicative symplectic structure of degree 1. Therefore, as one would have expected, the space of solutions of the Jacobi sigma model for certain source manifolds and proper boundary conditions possesses the structure of a \emph{contact groupoid} in the sense of \cite{Bruce:2017}.
Adopting a more geometric perspective, the solutions of the Poisson sigma models are Lie algebroid morphisms $$\zF:\sT\Sigma\to\sT^*\Lst,$$
where $\sT^*\Lst$ is the Lie algebroid associated with the Poisson manifold $\Lst$ (see for instance \cite{Bojowald:2005}).

In this case, however, the cotangent bundle $\sT^*\Lst$ is not only a Lie algebroid but carries also an $\Rt$-bundle structure, which turns $\sT^*\Lst$ into a \emph{Jacobi algebroid}, i.e., the group $\Rt$ acts by automorphisms of the Lie algebroid structure. It seems therefore reasonable to include the $\Rt$-action into the picture and consider fields to be a morphism of $\Rt$-vector bundles
$$\Psi:\sT\Sigma\ti \Rt\to\sT^*\Lst\,,$$
expecting that solutions will be morphisms of Jacobi algebroids.
However, an unexpected discovery (Theorem \ref{the}) is that $\Psi$ is a morphism of $\Rt$-Lie algebroids if and only if its restriction to $\sT\zS$, say $\zF$, is a Lie algebroid morphism, so the solutions for both models are essentially the same.

On the other hand, one can take the above viewpoint and pass to the reduced data, namely to fields consisting of
vector bundle morphisms
$$\Psi_0:\sT\zS\to\sJ^1L=\sT^*\Lst/\Rt,$$
where $\sJ^1L$ is the first jet bundle of sections of $L$, and maps $\psi:\zS\to\Lst$. The latter map is equivalent to a regular morphism of line bundles $\zf:\zS\ti\R\to L$, i.e., a VB-morphism which is regular on fibers (here: isomorphism on fibers).

Undoubtedly, the proposed models share some features with the `a la Kaluza-Klein' approach. Indeed, when considering the case of a Jacobi structure on functions of the target manifold, the action functional in \cite{Chatzistavrakidis:2020} almost coincides with the one proposed in this paper. However, the construction of the model is very different, since the action functional here is postulated on the basis of a geometric analysis of Jacobi bundles, whereas in \cite{Chatzistavrakidis:2020} the action functional is derived from gauge symmetry reasons. One could also interpret the proposal for the Jacobi sigma model advanced in this paper as another example of unfolding and reduction of physical systems (see \cite{Carinena:2015} for more details about the subject): even if locally one could describe the bracket in terms of a bivector field $\Lambda$ and a vector field $E$ on $M$, the proper geometrical structure characterizing a Jacobi bundle is a tensor field on an enlarged space with certain properties under a suitable group action. Then, the quotient under the $\Rt$-action allows to obtain the `reduced' model in terms of fields taking values in the first jet-bundle $\sJ^1 L $ of the vector bundle $L$ (the construction will be outlined in Section \ref{HomPoi}).

Let us consider now, with more details, the structure of the paper. In Section 2, we fix the notation, and Section 3 contains basics on Lie algebroids and Lie algebroid morphisms. In Section 4, the properties of $G$-algebroids and their morphisms are presented in their general form, as a preparation for the description of Jacobi algebroids. In sections 5 and 6, the relation between $\Rt$-principal bundles and line bundles is presented, and the concept of the \emph{category of line bundles} is defined. In this framework, some lifting procedures are described which will be used in the rest of the paper. In sections 7 and 8, we introduce the concepts of Jacobi bundles and Jacobi algebroids; these will be the basic ingredients to construct an action functional for the Jacobi sigma model. Section 9 contains the main results of the paper: three proposals for an action functional describing the Jacobi sigma model are advanced and it is shown that the corresponding solutions are in one-to-one correspondence with each other. In section 10 we extend the previous model to the case of almost Poisson and almost Jacobi structures, where the brackets among functions and sections are not required to satisfy the Jacobi identity. However, while this model is constructed via a different bivector field, in literature almost Poisson and almost Jacobi structures appear from the coupling with WZW models including a non-closed H-field. Some concluding remarks complete the work.

\section{Notation}
All objects we will consider are in smooth categories. Let us fix some notation.

\mn If $M$ is a manifold and $\sp:E\to M$ is a fiber bundle over $M$, then
\begin{itemize}
\item with $\Ci(M)$ we denote the algebra of (real) smooth functions on $M$;
\item with $\zt_M:\sT M\to M$ we denote the tangent bundle of $M$;
\item with $\zp_M:\sT^*M\to M$ we denote the cotangent bundle of $M$;
\item with $\Sec(E)$ we denote the space of sections of $E$. If $E$ is a vector bundle, then $\Sec(E)$ is canonically a $\Ci(M)$-module.  With $\mathcal A^i(E)=\on{Sec}(\wedge^iE)$ we will denote the $\Ci(M)$-module of sections of the vector bundle bundle $\wedge^iE$, and with
$$\mathcal A(E)=\bigoplus_{i\in \mathbb Z}\mathcal A^i(E)$$
the Grassmann algebra of multisections of $E$. Here, of course, $\A^0(E)=\Ci(M)$ and $\A^i(E)=\{0\}$ for $i<0$.
\item if $E$ is a vector bundle, then with $\sp_*:E^*\to M$ we denote the dual vector bundle, and for a section $\ze$ of $E$ with $\zi_\ze$ we denote the corresponding linear function on $E^*$;
\item with $\sj^1\sp:\sJ^1E\to M$ we denote the bundle of first jets of sections of $E$. If $E$ is a vector bundle, then $\sJ^1E\to M$ is a vector bundle. In this case, sections of the dual bundle $\sj^1_*\sp:\big(\sJ^1E\big)^*\to M$ represent linear first-order differential operators  $\sD:\Sec(E)\to\Ci(M)$.
    If $F\to M$ is another vector bundle over $M$, then sections of $\big(\sJ^1E\big)^*\ot_MF$ represent linear first-order differential operators $\sD:\Sec(E)\to\Sec(F)$;
\item if $\ze$ is a section of $E$, then with $\sj^1\ze\in\Sec(\sJ^1E)$ we denote the first-jet prolongation of $\ze$.
\end{itemize}

\section{Lie algebroid morphisms}
Let $\zt:E\to M$ be a vector bundle and $\zp:E^*\to M$ be its dual. If $(x^a)$ are local coordinates in an open subset $U\subset M$, then we will use affine coordinates $(x^a,\zx_i)$ on  $\zp^{-1}(U)\subset E^*$, and the dual coordinates $(x^a,y^i)$ on $\zt^{-1}(U)\subset E$, as associated with dual bases, $(e_i)$ and $(e^i)$, of local sections of $E$ and $E^*$, respectively.

\begin{definition}  A {\it Lie algebroid} structure on $E$ is given by a linear Poisson tensor
$\zP$ on $E^*$.
\end{definition}
\no Let us recall that the linearity of $\zP$ means that it is homogeneous of degree $-1$ with respect to the Euler vector field $\nabla=\zx_i\pa_{\zx_i}$ on $E^*$ or, equivalently, that the corresponding Poisson bracket $$\{f,g\}_\zP=\zP(\xd f,\xd g)$$
is closed on linear functions. In local coordinates,
\be\label{SA} \Pi =c^k_{ij}(x)\zx_k
\partial _{\zx_i}\otimes \partial _{\zx_j} + \zr^a_i(x) \partial _{\zx_i}
\wedge \partial _{x^a}\,,
\ee
where $c^k_{ij}=-c^k_{ji}$. We will call the local functions $c_{ij}^k$ and $\zr^a_i$ the \emph{structure coefficients} of $\zP$.
\begin{theorem}\label{Lal}
A Lie algebroid structure $(E,\Pi)$ can be equivalently defined
\begin{itemize}
\item as a Lie bracket
$[\cdot ,\cdot]_\Pi $ on the space $\Sec(E)$ of sections of $E$, together with a vector bundle morphisms\ $\zr
\colon E\rightarrow T M$ covering the identity on $M$ (the \emph{anchor}), such that
\be\label{qd} [X,fY]_\Pi =\zr(X)(f)Y +f [X,Y]_\Pi,
\ee
for all $f \in C^\infty (M)$, $X,Y\in \Sec(E)$. The Lie algebroid bracket $[\cdot,\cdot]_\Pi$ and the anchor $\zr$ are related to the Poisson bracket $\{\cdot,\cdot\}_{\zP}$
according to the formulae
\beas
        \zi_{[X,Y]_\Pi}&= \{\zi_X, \zi_Y\}_{\zP},  \\
        \zp^*\big(\zr(X)(f)\big)       &= \{\zi_X, \zp^*f\}_{\zP}\,.
                                                   \eeas
\item or as a homological derivation $\xdp:\A(E^*)\to \A(E^*)$ of degree 1 in the Grassmann algebra
$\A(E^*)$ (the {\it Lie algebroid de Rham differential}), i.e.,
$\xdp$ maps $\A^i(E^*)$ into $\A^{i+1}(E^*)$, satisfies the graded Leibniz rule
\[\xdp(\za\we\zb)=\xdp\za\we\zb+(-1)^a\za\we\xdp\zb\,,\]
for $\za\in\A^a(E^*)$, and satisfies the homological condition $\xd^2_\zP=0$. Such a Lie algebroid de Rham differential is uniquely determined by its action on functions on $M$ and sections of $E^*$ (linear functions on $E$) given by
$$\bk{\xdp f}{X}=\zr(X)(f),\quad \xdp\za(X,Y)=\zr(X)(\Bk{\za}{Y})-\zr(Y)(\Bk{\za}{X})-\Bk{\za}{[X,Y]}.$$
\end{itemize}
\end{theorem}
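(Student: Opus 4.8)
The plan is to build a dictionary between functions on $E^*$ and algebroid data and then transport each piece of structure across it. The assignment $X\mapsto\zi_X$ is a linear isomorphism from $\Sec(E)$ onto the space of fiber-wise linear functions on $E^*$, and $f\mapsto\zp^*f$ identifies $\Ci(M)$ with the basic (fiber-wise constant) functions. Although these two families do not generate $\Ci(E^*)$ as an algebra, their differentials $\xd x^a$ and $\xd\zx_i$ span $\sT^*E^*$ at each point, so a Poisson bivector on $E^*$ is completely determined by the brackets of linear and basic functions. Because $\zP$ is homogeneous of degree $-1$ with respect to $\nabla=\zx_i\pa_{\zx_i}$, the bracket $\{\cdot,\cdot\}_\zP$ lowers fiber-degree by one; hence $\{\zi_X,\zi_Y\}_\zP$ is again linear, $\{\zi_X,\zp^*f\}_\zP$ is basic, and $\{\zp^*f,\zp^*g\}_\zP=0$ (there are no nonzero smooth functions of fiber-degree $-1$). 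This is exactly what makes the two displayed formulas legitimate definitions of a bracket $[\cdot,\cdot]_\Pi$ on $\Sec(E)$ and an anchor $\zr$.

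First I would pass from $\zP$ to $([\cdot,\cdot]_\Pi,\zr)$. Skew-symmetry and the Jacobi identity of $[\cdot,\cdot]_\Pi$ transport directly from $\{\cdot,\cdot\}_\zP$ through the injective map $X\mapsto\zi_X$. That $\zr$ is $\Ci(M)$-linear, hence a genuine vector bundle morphism, follows from $\zi_{fX}=\zp^*f\cdot\zi_X$ together with $\{\zp^*f,\zp^*g\}_\zP=0$, which kills the unwanted term when $\{\zi_{fX},\zp^*g\}_\zP$ is expanded by Leibniz. The Leibniz rule \eqref{qd} is the same computation one step further:
\[\zi_{[X,fY]_\Pi}=\{\zi_X,\zp^*f\cdot\zi_Y\}_\zP=\zp^*\big(\zr(X)(f)\big)\,\zi_Y+\zp^*f\,\{\zi_X,\zi_Y\}_\zP=\zi_{\zr(X)(f)Y+f[X,Y]_\Pi},\]
and injectivity of $\zi$ yields \eqref{qd}.

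For the converse I would reconstruct $\zP$ from $([\cdot,\cdot]_\Pi,\zr)$ by prescribing it on generators, $\{\zx_i,\zx_j\}_\zP=c^k_{ij}\zx_k$, $\{\zx_i,\zp^*x^a\}_\zP=\zr^a_i$ and $\{\zp^*x^a,\zp^*x^b\}_\zP=0$, where $c^k_{ij}$ and $\zr^a_i$ are read off from $[e_i,e_j]_\Pi=c^k_{ij}e_k$ and $\zr(e_i)=\zr^a_i\pa_{x^a}$. By the spanning remark this determines a unique bivector, manifestly of the form \eqref{SA} and homogeneous of degree $-1$; note that \eqref{qd} is then automatic, being a consequence of the Leibniz rule for $\{\cdot,\cdot\}_\zP$ exactly as in the previous paragraph. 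The step I expect to be the main obstacle is the Jacobi identity $[\zP,\zP]=0$. As $[\zP,\zP]$ is a trivector, it suffices to evaluate the Poisson Jacobiator on triples of generators, and the result decomposes according to the number of fiber-linear arguments: on $(\zx_i,\zx_j,\zx_k)$ it is the Jacobi identity for $[\cdot,\cdot]_\Pi$ written in the frame, on $(\zx_i,\zx_j,\zp^*x^a)$ it is the compatibility $\zr([e_i,e_j]_\Pi)=[\zr(e_i),\zr(e_j)]$, and on triples with two basic entries it vanishes identically. Since the compatibility is itself forced by the Jacobi identity and \eqref{qd}, we obtain $[\zP,\zP]=0$ if and only if $[\cdot,\cdot]_\Pi$ satisfies the Jacobi identity, i.e. if and only if $(\Sec(E),[\cdot,\cdot]_\Pi,\zr)$ is a Lie algebroid; this closes the equivalence of the first two descriptions.

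Finally, for the homological derivation I would define $\xdp$ on $\A^0(E^*)=\Ci(M)$ and on $\A^1(E^*)=\Sec(E^*)$ by the two displayed formulas and extend it to all of $\A(E^*)$ as the unique degree-$1$ graded derivation; uniqueness is automatic because degree $0$ and degree $1$ elements generate the Grassmann algebra. Well-definedness reduces to checking that $\xdp f$ is a genuine section of $E^*$ and that $\xdp\za$ is tensorial in each slot, both resting on the $\Ci(M)$-linearity of $\zr$ and on \eqref{qd}. The only remaining content is $\xd^2_\zP=0$: evaluating $\xd^2_\zP f$ on a pair of sections returns the compatibility $\zr([X,Y]_\Pi)=[\zr(X),\zr(Y)]$, while $\xd^2_\zP\za$ on a triple of sections returns the Jacobi identity for $[\cdot,\cdot]_\Pi$ — the standard Chevalley--Eilenberg computation. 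Conversely, the same two formulas recover $\zr$ and $[\cdot,\cdot]_\Pi$ from any such $\xdp$, the bracket being obtained by solving the second formula for $[X,Y]$; thus $\xd^2_\zP=0$ matches the Lie algebroid axioms term by term, completing the equivalence of all three descriptions.
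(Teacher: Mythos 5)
Your argument is correct and complete: the paper states Theorem \ref{Lal} without proof, recalling it as a standard fact from the literature, and your reconstruction is precisely the standard argument (determining the bivector by its brackets on fiber-linear and basic functions, matching the Poisson Jacobiator on generator triples with the algebroid axioms, and the Chevalley--Eilenberg computation for $\xd_\zP^2=0$). All the delicate points --- that degree $-1$ homogeneity forces $\{\zp^*f,\zp^*g\}_\zP=0$, that the Jacobiator need only be tested on spanning covectors, and that the anchor compatibility $\zr([X,Y]_\Pi)=[\zr(X),\zr(Y)]$ is a consequence of the Jacobi identity and \eqref{qd} rather than an extra axiom --- are handled correctly.
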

In a local basis of sections and the corresponding local coordinates, we have
$$[e_i,e_j]_\Pi=c^k_{ij}e_k,\quad \zr(e_i)=\zr^a_i\pa_{x^a}$$
and
\[\xdp f=\zr^a_i\frac{\pa f}{\pa x^a}e^i,\quad \xdp e^k=\half c^k_{ji}\,e^i\we e^j\,.
\]
Identifying local sections $e^i$ of $E^*$ with local linear functions $y^i=\zi_{e^i}$ on $E$, we will also write
\be\label{drd}\xdp f=\zr^a_i\frac{\pa f}{\pa x^a}y^i,\quad \xdp y^k=\half c^k_{ji}\,y^i\we y^j\,.
\ee
An explicit form of $\xdp$ is given by the following formula, analogous to the well-known Cartan's formula for the standard de Rham derivative:
\beas && \xd_\zP\zm(X_1,\dots,X_{k+1}) = \sum_i (-1)^{i+1}\zr(X_i)\big(\zm(X_1,\dots,\widehat{X}_i,\dots, X_{k+1})\big)\\
&&+\sum_{i<j}(-1)^{i+j}\zm\big([X_i,X_j]_\Pi,X_1,\dots,\widehat{X}_i,\dots,\widehat{X}_j,\dots,X_{k+1}\big)\,.
\eeas
\begin{remark}
$\R$-linear operators $D:\Sec(E)\to\Sec(E)$, acting on sections of a vector bundle $E\to M$ and satisfying
the condition
\be\label{Qder} D(fY)=fD(Y)+\hat D(f)Y,
\ee
where $\hat D$ is a (uniquely determined) vector field on $M$, are known in the literature under various names, e.g., \emph{$Q$-derivations, derivative endomorphisms, covariant differential operators} (see \cite{Grabowski:2003} for a short survey and references). We will use the name \emph{vector bundle derivations} (\emph{VB-derivations} in short). Of course, VB-derivations are particular linear first-order differential operators. They can be identified with linear vector fields on the dual bundle $E^*$. This identification corresponds to the identification of sections $\zs$ of the vector bundle $\zt:E\to M$ with linear functions $\zi_\zs$ on $E^*$. VB-derivations are sections of a vector bundle $\sD E$ over $M$ called the \emph{gauge algebroid of $E$}, since it is a canonical Lie algebroid associated with the vector bundle. Of course, VB-derivations are particular linear first-order differential operators from $E$ to $E$, so
$$\sD E\subset\big(\sJ^1E\big)^*\ot_ME.$$
If $E$ is of rank 1 (a line bundle), then the above inclusion is the identity.

Note that (\ref{Qder}) can be rewritten in the form
$$[D,m_f]=D\circ m_f-m_f\circ D=m_{\hat D(f)},$$
where $m_f$ is the operator in $\Sec(E)$ of multiplication by $f$.
In fact, to characterize VB-derivations, it is enough to require that $[D,m_f]=m_g$ for some $g\in\Ci(M)$. The fact that $g=X(f)$ for some vector field $X$ comes automatically (see \cite[Theorem 1]{Grabowski:2003}).

\mn The property (\ref{qd}) of Lie algebroid brackets means exactly that the adjoint operators $\textrm{ad}_X=[X,\cdot]_\zP$ are VB-derivations. In fact, if the vector bundle $E$ has rank $>1$, then a Lie algebroid structure on $E$ can be simply defined as a Lie bracket $\lb$ on the space $\Sec(E)$ such that $\textrm{ad}_X$ is a VB-derivation for each $X\in\Sec(E)$ \cite{Grabowski:2003}.

\no Note also that the de Rham derivative $\xdp$ defines a Lie algebroid cohomology in the standard way. In the case of a Lie algebra, $\xdp$ is the \emph{Chevalley-Eilenber cohomology operator}, whereas $\xdp$ is the standard de Rham differential in the case of the canonical Lie algebroid $E=\sT M$.
\end{remark}

\mn To define what is a morphism of Lie algebroids one can try to follow the Lie algebra case and say that it is a morphism of the corresponding vector bundles
\be\label{Lam}
\xymatrix@C+10pt{
E_1 \ar[r]^{\zF}\ar[d]_{\zt_1} & E_2\ar[d]^{\zt_2} \\
M_1 \ar[r]^{\zf} & M_2}
\ee
respecting the Lie algebroid bracket. However, this causes problems, as the morphism (\ref{Lam}) does not generally define any push-forward of sections. On the other hand, (\ref{Lam}) defines properly a pull-back of sections of the dual bundles, so a map $\zF^*:\A(E^*_2)\to\A(E^*_1)$. The corresponding de Rham differentials completely determine the Lie algebroid structures, so an obvious solution is the following.
\begin{definition}
A \emph{morphism of Lie algebroids} $(E_1,\zP_1)\to(E_2,\zP_2)$ is a morphism (\ref{Lam}) of the corresponding vector bundles such that $\zF^*:\A(E^*_2)\to\A(E^*_1)$ intertwines the corresponding de Rham differentials
\be\label{mor} \xd_{\zP_2}\circ\zF^*=\zF^*\circ\xd_{\zP_1}.
\ee
\end{definition}
\no Note that (local) sections $\za$ of $E_2^*$ are identified with (local) linear functions $\zi_\za$ on $E_2$, and $\zF^*(\za)$ corresponds to the linear function $\zF^*(\zi_\za)=\zi_\za\circ\zF$. If $(x^a,y^i)$ are affine coordinates in $\tilde U=\zt_2^{-1}(U)$, for a neighbourhood $U$ of a point $\zf(p)$, $p\in M_1$, then $\bar y^i=y^i\circ\zF$ are linear functions on $\zF^{-1}(\tilde U)$. Consequently, writing $\bar f=f\circ\zf$ for functions $f$ on $M_2$, the local condition (\ref{mor}) assuring that $\zF$ is a Lie algebroid morphism can be written in the form (cf. (\ref{drd}))
$$\xd_{\zP_1}\bar x^a=\bar\zr^a_i\,\bar y^i,\quad \xd_{\zP_1}\bar y^k=\half\bar c^k_{ji}\,\bar y^i\we \bar y^j\,,
$$
where $c_{ij}^k$ and $\zr^a_i$ are the structure coefficients of $\zP_2$. In other words, the above identities look formally like the structure equations (\ref{drd}) for $\zP_2$, with the only difference that the de Rham differential is the one for $\zP_1$. In what follows, it will be convenient to skip `bars' and view $(x^a,y^i)$ as functions on $E_1$, if $\zF$ is known and fixed.

\mn In particular, if $E_1=\sT\zS$ is the canonical Lie algebroid over a manifold $\zS$, then $\xd_{\zP_1}$ is the standard de Rham differential on differential forms, and the local condition assuring that $\zF:\sT\zS\to E=E_2$ is a Lie algebroid morphism reads
\be\label{moreq}
\xd x^a=\zr^a_i\,y^i,\quad \xd y^k=\half c^k_{ji}\,y^i\we y^j\,.
\ee
Here, of course, we interpret the linear functions $y^i$ on $\sT\zS$ as 1-forms on $\zS$.

\subsection{Poisson sigma models}
The Lie algebroid structure on $\sT^*M$, associated with a Poisson tensor $\zL$ on $M$, is usually described in terms of the \emph{Koszul bracket}
$$
[\za,\zb]_\zL=\Ll_{{\zL}^\#(\za)}\zb-\Ll_{{\zL}^\#(\zb)}\za-\xd\la\zL,\za\we\zb\ran\,,
$$
where $\zL^\#:\sT^*M\to\sT M$ is the VB-morphism induced by the Poisson tensor $\zL$.
In a more geometric approach, this Lie algebroid structure corresponds to the linear Poisson tensor $\dt\zL$ on $\sT M$, where $\dt\zL$ is the tangent lift of $\zL$  (cf. \cite{Grabowski:1995,Grabowski:1997,Grabowski:1997a}).
In fact, all tensor fields on a manifold $M$ can be lifted to $\sT M$ \cite{Yano:1973}, and the tangent lifts of multivector fields respect the Schouten-Nijenhuis bracket \cite{Grabowski:1995}.

If the Poisson tensor $\zL$ locally reads
$$\zL=\half\zL^{ij}(x)\pa_{x^i}\we\pa_{x^j},\quad\zL^{ij}=-\zL^{ji},
$$
then
\be\label{lift}
\dt\zL=\zL^{ij}(x)\pa_{x^i}\we\pa_{\dot x^j}+\half\frac{\pa \zL^{ij}}{\pa x^k}(x)\dot x^k\pa_{\dot x^i}\we\pa_{\dot x^j}.
\ee
In other words, the structure coefficients of the Lie algebroid $\sT^*M$ in the adapted coordinates $(x^i,\dot x^j)$ are
\[\zr^i_j=\zL^{ji},\quad c_{ij}^k=\half\zL^{ij}_{,k}.
\]
\no Hence, in the realm of Poisson sigma models, Lie algebroid morphisms $\zF:\sT\zS\to\sT^*M$ covering $\zf:\zS\to M$ are locally characterized by (cf. (\ref{moreq}))
\be\label{fe}
\xd X^i=\big(\zL^{ji}\circ\zf\big)\zh_j,\quad \xd \zh_k=\half\big(\zL^{ij}_{,k}\circ\zf\big)\zh_i\we \zh_j\,,
\ee
where $X^i=x^i\circ\zf$ and $\zh_j=p_j\circ\zF$, for Darboux coordinates $(x^i,p_j)$ on $\sT^*M$.

\mn When $\zS$ is a surface (with or without boundary), these are exactly the field equations for the Poisson sigma model associated with $\zL$, whose fields are VB-morphisms
\[
\xymatrix@C+30pt@R+10pt{
\sT\zS \ar[r]^{\zF}\ar[d]_{\zt_\zS} & \sT^*M\ar[d]^{\zp_M} \\
\zS \ar[r]^{\zf} & M\,,}
\]
and the action functional is
\[S(\zF)=\int_\zS \La\zF\,\overset{\we}{,}\,\Big(\sT\zf+\half\zL^\#\circ\zF\Big)\Ra,
\]
where the pairing is that between $\sT^*M$ and $\sT M$.
The action functional in local coordinates reads
$$S(X,\zh)=\int_\zS\Big(\zh_i\we\xd X^i+\half\zL^{ij}(X)\zh_i\we\zh_j\Big).
$$

\section{$G$-algebroids and their reductions}
If $\zt:P\to M=P/G$ is a principal bundle with the structure Lie group $G$ acting on $P$ by
$$h:G\ti P\to P, \quad h(g,p)=h_g(p),$$
then the $G$-action on $P$ can be lifted to a principal action $\hh$ on $\sT P$ by $\hh_g=\sT h_g$, and the corresponding principal bundle is
$$\hzt:\sT P\to\sA P:=\sT P/G.$$
We call $\hh$ the \emph{complete lift} of $h$. Note that $G$ acts on $\sT P$ \emph{via} $\hh$ by VB-automorphisms. The manifold $\sA P$ is canonically a vector bundle over $M$, and we have the following canonical morphism of vector bundles
\[
\xymatrix@C+30pt@R+10pt{
\sT P\ar[d]^{\zt_P}\ar[r]^{\hzt} & \sA P\ar[d]^{\hzt_P} \\
P\ar[r]^{\zt} & M\,.}
\]
Sections of $\sA P$ are canonically identified with $G$-invariant vector fields on $P$ which, clearly, are closed with respect to the Lie bracket. Therefore, the reduced vector bundle $\sA P$ inherits the structure of a Lie algebroid from $\sT P$, called traditionally the \emph{Atiyah algebroid} of the principal bundle $P$,
and $\hzt$ is a Lie algebroid morphism. The anchor map $\zr_P:\sA P\to\sT M$ is uniquely determined by the commutativity of the diagram
$$
\xymatrix@C+10pt@R+10pt{
\sT P\ar[rd]^{\sT\zt}\ar[rr]^{\hzt} && \sA P\ar[ld]^{\zr_P} \\
&\sT M &\,.}
$$
Note also that any morphism $\zc:P_1\to P_2$ of $G$-principal bundles, covering $\zc_0:M_1\to M_2$, induces canonically a Lie algebroid morphism of the corresponding Atiyah algebroids
$$
\xymatrix@C+40pt@R+20pt{
\sA P_1\ar[d]^{\hzt_{P_1}}\ar[r]^{\sA\psi} & \sA P_2\ar[d]^{\hzt_{P_2}}\\
 M_1\ar[r]^{\psi_0}& M_2}
$$
by the reduction of $\sT\zc$,
$$
\xymatrix@C+40pt@R+20pt{
\sT P_1\ar[d]^{\hzt_1}\ar[r]^{\sT\psi} & \sT P_2\ar[d]^{\hzt_2}\\
\sA P_1\ar[r]^{\sA\psi}& \sA P_2\,.}
$$
An abstract analog of the double bundle structures on $\sT P$ that we have just described is the following (cf. \cite[Definition 3.1]{Grabowski:2013} and \cite[Definition 5.36]{Grabowska:2021}).
\begin{definition}\label{def}  A principal $G$-bundle $\sp:E\to P$ with the $G$-action $\sh:G\ti E\to E$, endowed simultaneously with a vector bundle structure, is called a \emph{$G$-vector bundle} if $G$ acts \emph{via} vector bundle automorphisms. In the case when $E\to P$ is additionally a Lie algebroid, we call $E$ a $G$-algebroid if $G$ acts by Lie algebroid automorphisms.
In this case, the vector bundle (resp., Lie algebroid) structure and the $G$-principal bundle structure on $E$ are called \emph{compatible}.
\end{definition}
\no Note that if $(E,\sh)$ is a $G$-vector bundle, then the dual $E^*$ is also a $G$-vector bundle with the \emph{dual $G$-action} $\sh^*$ defined by $\sh^*_g=\big(\sh_{g^{-1}}\big)^*$. Its base manifold is the vector bundle $E^*/G=\sA^*P$, dual to $\sA P$.

%%%%%%%%%%%%%%%%%%%%%%%%%%
\begin{remark}\label{rem1} The compatibility between the vector and principal bundle structures in Definition \ref{def} can be equivalently formulated (see \cite{Grabowski:2013}) as the commutation of the maps $\sh_g$ with the multiplications by reals in the vector bundle. This is based on the observation that any vector bundle structure is completely determined by the multiplication by reals (homotheties) \cite{Grabowski:2009}. In this sense, $E$ is a double, vector-principal bundle, called in \cite{Grabowska:2021} \emph{VB-principal bundle}. This immediately implies that the base $E_0=E/G$ of the principal bundle $\zt_0:E\to E_0$ carries an induced vector bundle structure $\sp_0:E_0\to M$, and $\sh$ induces a $G$-principal action $h$ on $P$, with the projection $\zt:P\to M$; the bases of the projections $\sp_0$ and $\zt$ can be canonically identified. Moreover, we have the commutative diagram
\[
\xymatrix@C+30pt@R+10pt{
E\ar[d]^{\sp}\ar[r]^{\zt_0} & E_0\ar[d]^{\sp_0} \\
P\ar[r]^{\zt} & M\,,}
\]
whose horizontal arrows define a morphism of vector bundles, and the vertical arrows a morphism of $G$-principal bundles. Of course, sections of $E_0$ can be viewed as $G$-invariant sections of $E$.
\end{remark}

\begin{theorem}\label{m1}\cite{Grabowska:2021,Grabowski:2013} Let $\sp:E\to M$ be a $G$-vector bundle. Then the map
\be\label{vRt} \Theta=\big(\zt_0,\sp\big):E\to E_0\ti_M P\ee
is a diffeomorphism identifying the vector and the $G$-principal bundle structures on $E$ with the product of the corresponding structures on $E_0\ti_M P$.

In the case when $E$ is additionally a Lie algebroid, $G$ acts by Lie algebroid automorphisms if and only if the Lie algebroid bracket $\lb$ is closed on $G$-invariant sections of $E\to P$, i.e., sections of $E_0$, and the anchor map $\zr:E\to\sT P$ is equivariant (cf. Remark \ref{rem1}),
$$\zr\circ\sh_g=\hh_g\circ\zr\quad\text{for all}\quad g\in G.$$
The latter means that $\zr(\ze)$ are invariant vector fields on $P$ for all invariant sections $\ze$ of $E\to P$.
As a result, we get a Lie algebroid structure on the reduced bundle $E_0$ -- the `Atiyah algebroid' of a $G$-algebroid.

Moreover, the anchor map $\zr:E\to\sT P$, being equivariant, goes down to a vector bundle morphism
$\zr^\#:E_0=E/G\to\sA P$,
\[
\xymatrix@C+30pt@R+10pt{
E\ar[d]^{\zt_0}\ar[r]^{\zr} & \sT P\ar[d]^{\hzt} \\
E_0\ar[r]^{\zr^\#} & \sA P\,,}
\]
covering the identity on $M$.
\end{theorem}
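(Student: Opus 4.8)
The plan is to prove the three assertions in turn, relying throughout on the double-bundle description recalled in Remark \ref{rem1}: $\zt_0:E\to E_0=E/G$ is the $G$-principal projection, $\sp:E\to P$ is the vector bundle, $\sp$ is $G$-equivariant over $h$, $\zt_0$ is a fibrewise-linear vector bundle morphism covering $\zt:P\to M$, and the square of Remark \ref{rem1} commutes.

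For the first claim, I would first check that $\Theta=(\zt_0,\sp)$ lands in $E_0\ti_M P$, since $\sp_0\circ\zt_0=\zt\circ\sp$ by commutativity of that square. Injectivity follows because $\zt_0(e)=\zt_0(e')$ forces $e'=\sh_g(e)$ for some $g$, and then $\sp(e')=\sp(e)$ together with equivariance of $\sp$ and freeness of the $G$-action on $P$ forces $g$ to be the identity. Surjectivity is obtained by choosing any $e_1\in\zt_0^{-1}(\bar e)$, noting that $\sp(e_1)$ and the prescribed $p$ lie over the same point of $M$, and using transitivity of $G$ on the fibres of $P$ to move $\sp(e_1)$ to $p$ by some $\sh_g$; smoothness of $\Theta$ and of its inverse is local and routine. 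It then remains to observe that $\Theta$ is fibrewise linear over $P$ (because $\zt_0$ is a vector bundle morphism, hence a fibre isomorphism once $\Theta$ is a bijection) and $G$-equivariant for the action on the $P$-factor (because $\zt_0$ is invariant and $\sp$ is equivariant), which is exactly the claimed identification of structures.

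For the equivalence in the second claim, the forward direction is immediate: if each $\sh_g$ is a Lie algebroid automorphism it covers $h_g$, hence intertwines $\zr$ with $\sT h_g=\hh_g$, giving equivariance of the anchor, and it preserves the bracket, so the bracket of two $\sh_g$-invariant sections is again invariant. The substance is the converse. Here I would fix a local frame $(\ze_a)$ of $E_0$ over $M$ and pull it back along $\zt:P\to M$ to a local frame $(\hat\ze_a)$ of $E\to P$ consisting of invariant sections, so that every section is $f^a\hat\ze_a$ with $f^a\in\Ci(P)$. Assuming closedness of the bracket on invariant sections and equivariance of the anchor, bracket preservation holds on the frame since $[\hat\ze_a,\hat\ze_b]$ is invariant and $\sh_{g*}\hat\ze_a=\hat\ze_a$. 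The hard part is to propagate this to arbitrary sections: writing $\Delta(X,Y)=\sh_{g*}[X,Y]-[\sh_{g*}X,\sh_{g*}Y]$, I would expand $\Delta(fX,Y)$ using the Leibniz rule (\ref{qd}) (with antisymmetry) and the identity $\sh_{g*}(fX)=(h_{g^{-1}}^*f)\,\sh_{g*}X$. The tensorial term reproduces $(h_{g^{-1}}^*f)\,\Delta(X,Y)$, while the two derivative terms cancel precisely when $h_{g^{-1}}^*\big(\zr(Y)(f)\big)=\zr(\sh_{g*}Y)\big(h_{g^{-1}}^*f\big)$; this is nothing but the anchor equivariance $\zr\circ\sh_g=\hh_g\circ\zr$ rewritten for the vector field $\zr(Y)$. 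Thus $\Delta$ is $\Ci(P)$-bilinear and vanishes on the frame, hence vanishes identically, so each $\sh_g$ preserves the bracket and, with anchor equivariance, is a Lie algebroid automorphism. I expect this tensoriality check to be the main obstacle, since it is exactly where the two hypotheses must conspire.

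Finally, for the induced structures, the two conditions say precisely that invariant sections — the sections of $E_0$ by Remark \ref{rem1} — are closed under $\lb$ and have invariant, hence projectable, anchors; the bracket and the projection to $M$ of those anchors therefore descend to a Lie bracket and an anchor $E_0\to\sT M$ satisfying the Leibniz rule and Jacobi identity by restriction, giving the Atiyah algebroid. For the last diagram, equivariance of $\zr$ means it carries $G$-orbits to $G$-orbits compatibly, so by the universal property of the principal quotients $\zt_0$ and $\hzt$ it descends to a unique vector bundle morphism $\zr^\#:E_0\to\sA P$ covering the identity on $M$ and making the square commute, with linearity and the covering property inherited from $\zr$.
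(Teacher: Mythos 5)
Your argument is correct and complete. Note that the paper itself gives no proof of Theorem \ref{m1} --- it is imported from \cite{Grabowska:2021,Grabowski:2013} --- so there is no in-paper argument to compare against; what you have written is essentially the standard proof one finds in those references. The decomposition $\Theta=(\zt_0,\sp)$ via freeness and fibrewise transitivity of the $G$-action on $P$, and the descent of $\zr$ to $\zr^\#$ by the universal property of the two quotients, are exactly as expected. The one genuinely delicate point is the converse direction of the equivalence, and you identified and resolved it correctly: the defect $\Delta(X,Y)=\sh_{g*}[X,Y]-[\sh_{g*}X,\sh_{g*}Y]$ vanishes on an invariant frame by the closedness hypothesis, and the Leibniz expansion of $\Delta(X,fY)$ produces the residual term $\big(h_{g^{-1}}^{*}(\zr(Y)(f))-\zr(\sh_{g*}Y)(h_{g^{-1}}^{*}f)\big)\sh_{g*}X$ (up to sign and symmetry), which is killed precisely by the anchor equivariance $\zr\circ\sh_g=\hh_g\circ\zr$; this makes $\Delta$ tensorial, hence identically zero. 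No gaps.
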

\no We call $E=E_0\ti_M P$ the \emph{canonical decomposition} of the $G$-vector bundle $E$. Of course, for any vector bundle $E_0\to M$ and any principal $G$-bundle $P\to M$ the fiber product $E=E_0\ti_M P$ is a $G$-vector bundle in an obvious way. It is a $G$-algebroid if and only if $E_0$ possesses a Lie algebroid structure with the anchor map $\zr_0:E_0\to\sT M$, and we have
a Lie algebroid morphism $\zr^\#:E_0\to \sA P$, covering the identity at the level of anchors,
$$
\xymatrix@C+30pt{
E_0\ar[d]^{\zr_0}\ar[r]^{\zr^\#} & \sA P\ar[d]^{\zr_P} \\
\sT M\ar@{=}[r] & \sT M\,.}
$$
In the above correspondence, we interpret sections $\ze$ of the vector bundle $E_0\to M$ as
$G$-invariant sections of the vector bundle $E\to P$, and $\zr^\#(\ze)$, being sections of $\sA P$, as invariant vector fields on $P$. If $\lb_0$ is the Lie bracket on sections of $E_0$, then the Lie bracket $\lb$ on sections of $E\to P$ is uniquely characterized by
$$[\ze_1,f\cdot\ze_2]=f\cdot[\ze_1,\ze_2]_0+\zr^\#(\ze_1)(f)\cdot\ze_2,$$
where $f$ is an arbitrary function on $P$. The Lie algebroid $(E_0,\lb_0)$ will be called the \emph{reduced Lie algebroid} of $(E,\lb)$.

\begin{definition} \emph{Morphisms} of $G$-algebroids are Lie algebroid morphisms which are simultaneously morphisms of the $G$-principal bundle structures.
\end{definition}
\no For such morphisms of $G$-algebroids, with the anchors $\zr_1,\zr_2$ we clearly have the following commutative diagram of Lie algebroid morphisms (cf. (\ref{mor}))
\be\xymatrix@R-8pt{
 & E^1 \ar[rrr]^{\Psi} \ar[dr]^{\zr_1}
 \ar[ddl]_{\zt^1_0}
 & & & E^2\ar[dr]^{\zr_2}\ar[ddl]_/-20pt/{\zt^2_0}
 & \\
 & & \sT P^1\ar[rrr]^/-20pt/{\sT\zc}\ar[ddl]_/-20pt/{\hzt^1}
 & & & \sT P^2 \ar[ddl]_{\hzt^2}\\
 E^1_0\ar[rrr]^/-20pt/{\Psi_0}\ar[dr]^{\zr_1^\#}
 & & & E^2_0\ar[dr]^{\zr_2^\#} & &  \\
 & \sA P^1\ar[rrr]^{\sA\zc}& & & \sA P^2 &
}\label{morr}
\ee
\no It is therefore easy to see the following.
\begin{proposition}\label{propmor} Let $E^i\to P^i$, $i=1,2$, be $G$-algebroids. Then a $G$-vector bundle morphism
$\Psi:E^1\to E^2$ is a morphism of $G$-algebroids if and only if, in the corresponding decompositions $E^i=E^i_0\ti_{M^i}P^i$, we have $\Psi=(\Psi_0,\psi)$, where the VB-morphism $\Psi_0:E^1_0\to E^2_0$ is a Lie algebroid morphism, and the VB-morphism $\sA\zc:\sA P^1\to\sA P^2$, induced by the morphism $\psi:P^1\to P^2$ of $G$-principal bundles, both covering the same map $\psi_0:M^1\to M^2$, is a Lie algebroid morphism which satisfies
\be\label{cond}\zr_2^\#\circ\Psi_0=\sA\psi\circ\zr_1^\#.\ee
\end{proposition}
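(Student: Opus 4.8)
The plan is to reduce everything to the two pieces of data that, by Theorem \ref{m1}, determine a $G$-algebroid -- the reduced bracket $\lb_0$ on $\Sec(E_0)$ and the VB-morphism $\zr^\#:E_0\to\sA P$ -- and to exploit that $\Psi$ is $G$-equivariant so that every condition may be tested on $G$-invariant sections. First I would record the decomposition. Since $E^i=E^i_0\ti_{M^i}P^i$ is a double (vector--principal) bundle in the sense of Remark \ref{rem1}, and $\Psi$ by hypothesis commutes with the homotheties of the vector bundle structures and with the $G$-action, it descends to the two bases of these structures: it induces a VB-morphism $\Psi_0:E^1_0\to E^2_0$ (reduction modulo $G$) and a $G$-principal bundle morphism $\psi:P^1\to P^2$ (restriction to the VB bases), both covering one and the same $\psi_0:M^1\to M^2$, and $\Psi=(\Psi_0,\psi)$ under the identification $\Theta$ of Theorem \ref{m1}. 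Because $\psi$ is a morphism of $G$-principal bundles, the functoriality of the Atiyah construction recalled above makes $\sA\psi:\sA P^1\to\sA P^2$ a Lie algebroid morphism automatically; thus that clause of the statement carries no extra content, and only \ref{cond} together with the Lie-morphism property of $\Psi_0$ remains to be matched with the Lie-morphism property of $\Psi$.

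Next I would translate ``$\Psi$ is a Lie algebroid morphism'' into bracket and anchor compatibility via Theorem \ref{Lal}. A $G$-VB-morphism $\Psi$ over $\psi$ is a Lie algebroid morphism precisely when (i) the anchors are $\Psi$-related, $\sT\psi\circ\zr_1=\zr_2\circ\Psi$, and (ii) brackets of $\Psi$-related sections are again $\Psi$-related. Because $\Psi$ is equivariant and the $G$-invariant sections of $E^i\to P^i$ generate $\Sec(E^i)$ over $\Ci(P^i)$ -- they are exactly the sections of $E^i_0$ -- both (i) and (ii) may be checked on invariant sections, while the $\Ci(P^i)$-linearity of the anchor and the Leibniz rule $[\ze_1,f\ze_2]=f[\ze_1,\ze_2]_0+\zr^\#(\ze_1)(f)\ze_2$ propagate them to all sections. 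The key observation is that an invariant section $\ze$ of $E^1$, corresponding to $\bar\ze\in\Sec(E^1_0)$, is $\Psi$-related to an invariant section of $E^2$ if and only if the corresponding reduced sections are $\Psi_0$-related; hence (ii) restricted to invariant sections says exactly that $\Psi_0$ preserves the reduced brackets $\lb_0$.

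Finally I would identify the anchor condition with \ref{cond}. Restricting (i) to invariant sections and reducing modulo $G$, the anchor $\zr_i$ becomes $\zr_i^\#:E^i_0\to\sA P^i$ and $\sT\psi$ becomes its reduction $\sA\psi$ (its defining property in the cube \ref{morr}), so (i) descends to $\zr_2^\#\circ\Psi_0=\sA\psi\circ\zr_1^\#$, i.e. \ref{cond}; conversely \ref{cond} lifts back to (i) because invariant sections span each fiber. It then remains to note that the anchor compatibility needed for $\Psi_0$ to be a Lie algebroid morphism, namely $\zr_{0,2}\circ\Psi_0=\sT\psi_0\circ\zr_{0,1}$ into $\sT M^2$, follows from \ref{cond} by composing with $\zr_{P^2}:\sA P^2\to\sT M^2$ and using $\zr_{P^i}\circ\zr_i^\#=\zr_{0,i}$ together with the anchor compatibility $\zr_{P^2}\circ\sA\psi=\sT\psi_0\circ\zr_{P^1}$ of the Lie algebroid morphism $\sA\psi$. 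Assembling the two directions yields the equivalence. The main obstacle, and the only genuinely delicate point, is step (ii): since Lie algebroid morphisms do not push sections forward, one must phrase bracket compatibility through $\Psi$-relatedness and verify carefully that invariant sections suffice and that the passage to $\Psi_0$-relatedness is faithful; the remaining verifications are bookkeeping along the cube \ref{morr}.
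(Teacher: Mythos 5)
Your overall strategy is the one the paper intends: the paper offers no written proof beyond the diagram (\ref{morr}) and the phrase ``it is therefore easy to see'', and the intended argument is exactly yours, namely that by Theorem \ref{m1} the $G$-algebroid structure on $E^i\to P^i$ is reconstructed from the reduced bracket on $\Sec(E^i_0)$ together with $\zr_i^\#$ via the Leibniz rule, so an equivariant $\Psi$ respects the structure if and only if it respects these reduction data. Your treatment of the anchor condition is complete: equivariance descends $\sT\psi\circ\zr_1=\zr_2\circ\Psi$ to (\ref{cond}), the converse holds because $\hzt^2:\sT P^2\to\sA P^2$ restricts to a linear isomorphism on each tangent fibre (equivalently, because invariant vector fields span each tangent space), and composing (\ref{cond}) with $\zr_{P^2}$ recovers the base anchor condition for $\Psi_0$. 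You are also right that the clause about $\sA\psi$ is automatic.

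The one step that does not hold up as written is your step (ii). For a base-changing VB-morphism, ``anchors related and brackets of $\Psi$-related sections are $\Psi$-related'' is strictly weaker than being a Lie algebroid morphism in the sense of (\ref{mor}): in general there are too few $\Psi$-related pairs of sections for the condition to have any force (when $\psi$ is not injective there may be none at all), and the correct section-level criterion involves decompositions $\Psi\circ\ze=\sum_i u_i\,(\ze'_i\circ\psi)$ with $u_i\in\Ci(P^1)$ and the corresponding formula for $\Psi\circ[\ze,\zh]$. You flag this as the delicate point but do not close it. The repair is available precisely because of the $G$-structure: invariant sections of $E^2\to P^2$ span every fibre, and for an invariant section $\ze$ of $E^1$ equivariance forces $\Psi\circ\ze$ to descend to $\Psi_0\circ\bar\ze$ along $\psi$, so the decomposition condition for $\Psi$, tested on invariant sections with coefficients pulled back from $M^1$, is verbatim the decomposition condition for $\Psi_0$, and the Leibniz rule extends it to arbitrary sections and coefficients. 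Alternatively, and closer to the paper's own definition, check the intertwining of the de Rham differentials directly on $\Ci(P^2)$ (which yields the anchor condition) and on invariant sections of the dual bundle (which yields the condition for $\Psi_0$), since these generate $\A\big((E^2)^*\big)$ over $\Ci(P^2)$. With either repair your argument is complete.
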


\begin{example}\label{excan}
A \emph{canonical $G$-algebroid} over a manifold $M$ is $E=\sT M\ti G$, i.e., $E_0=\sT M$ and $P=M\ti G$, with the canonical Lie algebroid structure on $\sT M$, and the canonical Lie algebroid morphism
$\zr_E^\#:\sT M\to \sA P=\sT M\ti\g$ given by $\zr_E^\#(X)=(X,0)$, where $\g$ is the Lie algebra of the Lie group $G$. In particular, the anchor map $\zr_E:E\to\sT M\ti \sT G$ is $\zr_E(v,g)=(v,0_g)$.
\end{example}
\begin{example}\label{exp} With every $G$-principal bundle $\zt:P\to M$ there is canonically associated a $G$-algebroid, namely $E=\sT P$ with the complete lift $\hh$ of the $G$-action $\sh$ on $P$. In this case, the reduced Lie algebroid $E_0$ is the Atiyah algebroid of $P$, and $\zr^\#:\sA P\to\sA P$ is the identity.
\end{example}
\begin{remark} $G$-algebroids and $G$-groupoids have been introduced in \cite[Section 4]{Bruce:2017}. There is a structural result on $G$-groupoids \cite[Theorem 4.10]{Bruce:2017}, whose infinitesimal version is Theorem \ref{m1}.
\end{remark}
\no Since a Lie algebroid structure on $E$ corresponds to a linear Poisson structure $\zL$ on the dual bundle $\spi:E^*\to P$, it is easy to see that $G$ acts on $E$ by Lie algebroid automorphisms if and only if $G$ acts on $E^*$ \emph{via} the dual action $\sh^*$ by Poisson automorphisms. Such Poisson structures on $G$-principal bundles will be called \emph{$G$-Poisson structures}. In our case, $\zL$ is additionally linear, so it is a \emph{linear $G$-Poisson structure}. Thus, we have the following.
\begin{proposition}
There is a canonical one-to-one correspondence between $G$-algebroid structures on a $G$-vector bundle $E$ and
linear $G$-Poisson structures on $E^*$ with the dual $G$-action.
\end{proposition}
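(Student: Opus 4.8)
The plan is to assemble two ingredients, both essentially at hand from the preceding material, and then combine them: the first is a purely structural bijection, the second the matching of the two $G$-actions. I expect all the real (if modest) content to sit in checking that the two automorphism conditions correspond.

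First I would recall the underlying bijection: by the definition of a Lie algebroid in Section~3, a Lie algebroid structure on the vector bundle $E\to P$ is exactly a linear Poisson tensor $\zL$ on the dual $\spi\colon E^*\to P$, linearity meaning homogeneity of degree $-1$ with respect to the Euler vector field $\nabla$ of $E^*$. Under this bijection the bracket and the anchor are recovered from $\zL$ by the formulas of Theorem~\ref{Lal},
\[\zi_{[X,Y]_\zL}=\{\zi_X,\zi_Y\}_\zL,\qquad \spi^*\!\big(\zr(X)(f)\big)=\{\zi_X,\spi^*f\}_\zL,\]
and conversely a linear Poisson tensor is determined by its brackets among the linear functions $\zi_X$ (i.e.\ sections $X$ of $E$) and the pulled-back functions $\spi^*f$ (i.e.\ functions $f$ on $P$), since together with the homogeneity constraint these generate $\Ci(E^*)$ as a Poisson algebra.

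Next I would match the actions. Given a $G$-action $\sh$ on $E$ by vector bundle automorphisms, the dual action $\sh^*_g=(\sh_{g^{-1}})^*$ on $E^*$ is again by vector bundle automorphisms (see the note following Definition~\ref{def}); in particular each $\sh^*_g$ preserves $\nabla$ and hence the class of degree-$(-1)$ homogeneous bivectors, so for a dual action the \emph{linearity} of a Poisson tensor is automatically $G$-invariant and the only thing left to compare is invariance of the bracket. The crux is then to verify, for each $g\in G$, that $\sh_g$ is a Lie algebroid automorphism of $E$ if and only if $\sh^*_g$ is a Poisson automorphism of $(E^*,\zL)$. I would argue this on generators: the VB-automorphism $\sh_g$ acts by push-forward on sections of $E$, which under the identification of sections with linear functions is precisely the action of $\sh^*_g$ on the $\zi_X$, and compatibly on the $\spi^*f$; reading the two displayed formulas, the statement that $\sh_g$ intertwines $[\cdot,\cdot]_\zL$ and $\zr$ becomes verbatim the statement that $\sh^*_g$ preserves $\{\cdot,\cdot\}_\zL$ on the generators $\zi_X$ and $\spi^*f$. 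Since these generate the linear Poisson bracket, preservation on them is equivalent to $\sh^*_g$ being a global Poisson map.

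Combining the two steps yields the claimed canonical one-to-one correspondence: a $G$-algebroid structure on $(E,\sh)$ --- a compatible Lie algebroid structure together with $G$ acting by Lie algebroid automorphisms --- corresponds exactly to a linear Poisson tensor on $E^*$ for which the dual action $\sh^*$ is by Poisson automorphisms, that is, to a linear $G$-Poisson structure; the correspondence is canonical because both passages $\zL\leftrightarrow[\cdot,\cdot]_\zL$ and $\sh\leftrightarrow\sh^*$ are. The main obstacle, such as it is, is the bookkeeping in the crux step: fixing once and for all how a VB-automorphism of $E$ dualizes to $E^*$ (the $g\mapsto g^{-1}$ convention) and confirming that ``respects bracket and anchor'' translates into ``preserves $\{\cdot,\cdot\}_\zL$ on $\zi_X$ and $\spi^*f$''. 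Once this dictionary is in place the equivalence is formal.
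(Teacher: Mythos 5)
Your proposal is correct and follows essentially the same route as the paper, which disposes of this proposition in the single remark preceding it: the identification of Lie algebroid structures with linear Poisson tensors on the dual, combined with the observation that $\sh_g$ is a Lie algebroid automorphism iff $\sh^*_g$ is a Poisson automorphism. Your expansion of the "it is easy to see" step --- checking the equivalence on the generating functions $\zi_X$ and $\spi^*f$ and noting that the dual action automatically preserves homogeneity --- is exactly the intended argument, just written out.
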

\begin{example}
The linear $G$-Poisson structure on the dual $E^*=\sT^*M\ti G$ of the \emph{canonical $G$-algebroid} $E=\sT M\ti G$ is $\zL_M\ti 0$ - the product of the canonical Poisson structure on $\sT^*M$ (the `inverse' of the canonical symplectic form) and the trivial Poisson structure on $G$.
\end{example}
\begin{example} The linear $G$-Poisson tensor on the dual $\sT^*P$ of the $G$-algebroid $\sT P$ with the $G$ action $\hh$ is the canonical Poisson structure $\zL_P$ on $\sT^*P$. It is automatically invariant with respect to the dual action $\hhs$.
\end{example}
\no In what follows, we put $G=\Rt$, as this is the Lie group closely related to Jacobi structures.
Hence, $\Rt$-algebroids will be called \emph{Jacobi algebroids}, a concept introduced in \cite{Grabowski:2001}.
\section{Line vs $\R^\ti$-principal bundles}
%The content of this section is well known (see e.g. \cite{Grabowski:2013} and \cite{Grabowska:2022}).
Vector bundles with one-dimensional fibers are called \emph{line bundles}. If $\zt: L\rightarrow M$ is a line bundle over a manifold $M$, then the submanifold $P=L^\times\subset L$ of non-zero vectors, $L^\times=L\setminus 0_M$, is canonically a principal bundle over $M$ with the structure group $(\R^\times, \cdot)$, i.e., the group of non-zero reals with multiplication, $(\Rt,\cdot)=\on{GL}(1;\R)$, and the projection $\zt:P\to M$ being the restriction of $\zt$. Of course, the $\R^\times$-action on $P=L^\times$ comes from the multiplication $h:\Rt\ti L\to L$ by non-zero reals in $L$.

Conversely, if we start with an $\Rt$-principal bundle $\zt: P\rightarrow M$, with the $\Rt$-action
$$\sh:\Rt\ti P\to P,\quad \sh(\zn,p)=\sh_\zn(p),$$
we can construct the line bundle $\zt:L_P\to M$ as the vector bundle associated with the standard $\Rt$-action on $\R$,
$$\R^\times\times\R\ni(\zn,r)\mapsto \zn r\in\R.$$
This means that elements of $L_P$ are the equivalence classes $[(p,r)]$ of pairs $(p,r)\in P\times \R$,
$$[(p,r)]=\{(\sh_\zn(p),r/\zn),\; \zn\in\R^\times\}.$$
A fundamental observation in this context is the following.
\begin{proposition}
There is a canonical principal bundle isomorphism between $L_P^{\times}$ and $P$. In particular, the line bundle $L_P$ is trivial if and only if the principal bundle $P$ is trivial.
\end{proposition}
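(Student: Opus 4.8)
The plan is to exhibit an explicit $\Rt$-equivariant diffeomorphism $\Phi\colon P\to L_P^\times$ and check that it intertwines the two principal actions. The natural candidate is $\Phi(p)=[(p,1)]$, the class in $L_P=(P\ti\R)/\Rt$ of the pair $(p,1)$. Since every representative in the orbit $\{(\sh_\zn(p),1/\zn):\zn\in\Rt\}$ has nonzero second coordinate, this class indeed lands in the nonzero part $L_P^\times$, and $\Phi$ covers the identity on $M$ because $[(p,1)]$ sits in the fibre over $\zt(p)$.

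First I would check that $\Phi$ is a bijection onto $L_P^\times$ by producing an inverse. Given a class $[(p,r)]$ with $r\neq 0$, applying the defining relation with $\zn=r$ gives $[(p,r)]=[(\sh_r(p),1)]=\Phi(\sh_r(p))$, so the assignment $[(p,r)]\mapsto\sh_r(p)$ is a two-sided inverse. It is well defined on classes because $\sh_{r/\zn}(\sh_\zn(p))=\sh_r(p)$, and both $\Phi$ and its inverse are smooth, being induced through the quotient projection by the smooth maps $p\mapsto(p,1)$ and $(p,r)\mapsto\sh_r(p)$; hence $\Phi$ is a diffeomorphism.

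The crux is equivariance. The principal $\Rt$-action on $L_P^\times$ is multiplication by nonzero reals in the line bundle $L_P$, which on classes reads $\zm\cdot[(p,r)]=[(p,\zm r)]$ (well defined by the same cancellation as above). I would then compute $\Phi(\sh_\zm(p))=[(\sh_\zm(p),1)]=[(p,\zm)]=\zm\cdot[(p,1)]=\zm\cdot\Phi(p)$, the middle equality coming from the defining relation applied with $\zn=\zm$. This yields $\Phi\circ\sh_\zm=\zm\cdot\Phi$, so $\Phi$ is $\Rt$-equivariant, and combined with the previous step it is an isomorphism of principal $\Rt$-bundles. I expect the only real obstacle here to be bookkeeping: one must keep the conventions straight, since the associated-bundle action carries $r\mapsto r/\zn$ while scalar multiplication carries $r\mapsto\zm r$, and one has to place the divisions correctly so that genuine equivariance (rather than anti-equivariance) holds.

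Finally, the triviality statement follows formally. A principal $\Rt$-bundle is trivial precisely when it admits a global section, and a line bundle is trivial precisely when it admits a nowhere-vanishing section, i.e. a global section of its nonzero part regarded as a principal $\Rt$-bundle. Since $\Phi$ identifies $P$ with $L_P^\times$ as principal bundles, $P$ has a global section if and only if $L_P^\times$ does, if and only if $L_P$ has a nowhere-vanishing section; hence $L_P$ is trivial if and only if $P$ is trivial.
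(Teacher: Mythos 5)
Your proof is correct and complete. The paper actually states this proposition without proof, adding only the remark that the isomorphism amounts to saying $L_P$ and $P$ can be glued from local trivial pieces using the same transition functions --- implicitly a local-trivialization argument. Your global, coordinate-free construction $\Phi(p)=[(p,1)]$ with inverse $[(p,r)]\mapsto\sh_r(p)$ is the canonical way to make this precise: the well-definedness check $\sh_{r/\zn}(\sh_\zn(p))=\sh_r(p)$, the smoothness of both maps via the quotient submersion, and the equivariance computation $[(\sh_\zm(p),1)]=[(p,\zm)]=\zm\cdot[(p,1)]$ are all right, including the sign/placement of the division $r\mapsto r/\zn$ in the associated-bundle relation, which is exactly the bookkeeping point you flagged. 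The deduction of the triviality statement from the existence of global sections is also standard and correct. If anything, your argument has the small advantage over the transition-function viewpoint of exhibiting the canonicity of the isomorphism directly, with no choice of local trivializations.
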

\no The above isomorphism means that by gluing $L_P$ and $P$ out of local trivial bundles we can use the same transition functions. The dual bundle $\zt_*:L^\ast_P\to M$ can be characterized as the vector bundle associated with the opposite action of $\Rt$ on $\R$,
$$\R^\times\times \R\ni(\zn,r)\mapsto r/\zn\in\R.$$
Elements of $L^\ast_P$ are then equivalence classes of pairs $(p,q)\in P\times\R$, $$[(p,q)]=\{(h_\zn(p),\zn q),\; \zn\in\R^\times\},$$
and it is easy to see that each equivalence class $[(p,q)]$ corresponds to a homogeneous function of degree $1$ on the fibre $P_{\zt(p)}$. Consequently, (local) sections of $L^\ast_P$ correspond to (local) $1$-homogeneous functions on $P$. If $\zs$ is such a section, then the corresponding 1-homogeneous function will be denoted $\zi_\zs$. In what follows, to indicate the related line bundle, we will write $\Lst$ for $P$.

Note that in the case of a line bundle $L\to M$, every linear first-order differential operator $D:\Sec(L)\to\Sec(L)$ is automatically a VB-derivation, so $\sD L=(\sJ^1L)^*\ot_ML$. Moreover, the Lie algebroid $\sD L$ is exactly the Atiyah algebroid of the $\Rt$-principal bundle $\Lst=\big(L^*\big)^\ti$, i.e., $\sD L=\sA\Lst$, so we have the canonical identifications
\[ \sA\Lst=(\sJ^1L)^*\ot_ML=\sD L,
\]
and thus a canonical nondegenerate pairing
\[
\pair_L:\sJ^1L\ti_M\sD L\to L
\]
with values in the line bundle $L$.

The correspondence between line bundles and $\Rt$-bundles induces the concept of the \emph{category of line bundles} which, however, is not a full subcategory of vector bundles. Namely, as morphisms of $\Rt$-bundles are diffeomorphisms on fibers, morphisms in this category are those VB-morphisms which are linear isomorphisms on fibers. Therefore, not every VB-morphism of line bundles is a morphism in the category of line bundles. More precisely,
\begin{definition}\label{d1}
A \emph{morphism in the category of line bundles} is a VB-morphism $\zf:L^1\to L^2$ of line bundles, covering $\zf_0:M^1\to M^2$, such that
$$\zf_{x_0}:L^1_{x_0}\to L^2_{\zf_0(x_0)}$$
is a linear isomorphism for all $x_0\in M^1$.
\end{definition}
\no In particular, any section $\zs$ can be pulled back to a section $\zf^*(\zs)$ of $L^1$,
$$\zf^*(\zs)(x_0)=\zf_{x_0}^{-1}\Big(\zs\big(\zf_0(x_0)\big)\Big).$$
\begin{proposition}[\cite{Le:2018}] Any morphism $\zf:L^1\to L^2$ in the category of line bundles induces a morphism of Lie algebroids $\sD\zf:\sD L^1\to\sD L^2$, defined by
\[ (\sD\zf)(D)(\zs)=\zf\Big(D\big(\zf^*(\zs)\big)\Big).
\]
\end{proposition}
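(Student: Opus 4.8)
The plan is to avoid verifying the cohomological condition (\ref{mor}) directly and instead reduce the statement to the functoriality of the Atiyah-algebroid construction already established in Section 4. The key is the identification $\sD L=\sA\Lst$ with $\Lst=\big(L^*\big)^\ti$ recalled above: a VB-morphism of gauge algebroids is the same datum as a VB-morphism of the Atiyah algebroids of the associated $\Rt$-principal bundles, and it is a Lie algebroid morphism in one picture if and only if it is in the other. So it suffices to realize $\sD\zf$ as the Atiyah prolongation $\sA\psi$ of a suitable morphism $\psi$ of $\Rt$-principal bundles, and then to match $\sA\psi$ with the given formula.

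First I would produce $\psi$. Since $\zf:L^1\to L^2$ is a morphism in the category of line bundles, each $\zf_{x}:L^1_{x}\to L^2_{\zf_0(x)}$ is a linear isomorphism; dualizing and inverting fibrewise gives $\big(\zf_x^*\big)^{-1}:\big(L^1_x\big)^*\to\big(L^2_{\zf_0(x)}\big)^*$, a fibrewise linear isomorphism covering $\zf_0$. Restricting to nonzero covectors yields a map $\psi:P^1\to P^2$, where $P^i=\big((L^i)^*\big)^\ti=\big(L^i\big)^\boxtimes$, covering $\zf_0:M^1\to M^2$. Being fibrewise linear, $\psi$ commutes with multiplication by nonzero reals, hence is $\Rt$-equivariant, so it is a morphism of $\Rt$-principal bundles. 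By the construction in Section 4, the reduction of the tangent prolongation $\sT\psi:\sT P^1\to\sT P^2$ descends to a morphism of Atiyah algebroids $\sA\psi:\sA P^1\to\sA P^2$; since $\sT\psi$ is a Lie algebroid morphism (naturality of the de Rham differential) and the Atiyah reductions $\hzt$ are Lie algebroid morphisms, $\sA\psi$ is a Lie algebroid morphism.

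It then remains to identify $\sA\psi$ with the operator $\sD\zf$ defined by $(\sD\zf)(D)(\zs)=\zf\big(D(\zf^*\zs)\big)$. Under the identification of sections of $\sA\Lst$ with $\Rt$-invariant vector fields on $\Lst$, equivalently with linear vector fields on $L^*$, equivalently with VB-derivations of $L$, I would check on generators (functions pulled back from the base, and the fibre-linear functions $\zi_\zs$ for $\zs\in\Sec(L^*)$, i.e., the $1$-homogeneous functions on $\Lst$) that the reduced map $\sA\psi$ carries a VB-derivation $D$ of $L^1$ to $\zf\circ D\circ\zf^*$, understood pointwise as in the statement. This uses precisely that $\zf^*$ on sections of $L$ is dual to the action of $\psi$ on $1$-homogeneous functions on $\Lst$. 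The remaining points are routine: the formula is well defined, since $\zf^*\zs$ near $x$ depends only on the germ of $\zs$ near $\zf_0(x)$; and $(\sD\zf)(D)$ is again a VB-derivation whose associated vector field is $\zf_0$-related to that of $D$, which follows from the Leibniz rule together with the identity $\zf^*(f\zs)=(f\circ\zf_0)\,\zf^*\zs$ for $f\in\Ci(M^2)$.

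The main obstacle is the matching step: threading the several dual identifications (sections of $L$ and of $L^*$ versus homogeneous functions on $\Lst$, VB-derivations versus invariant or linear vector fields) so that the inverse-dual conventions defining $\psi$ recombine to give exactly $\zf\circ D\circ\zf^*$ and not its inverse-transpose. One could alternatively argue entirely inside the category of line bundles: well-definedness and the Leibniz property are as above, while the Lie algebroid morphism condition — which cannot be phrased as a naive bracket homomorphism, since $\zf_0$ need not be a diffeomorphism — is obtained from the relatedness formulation underlying the definition of a morphism of Lie algebroids, the operators $\zf\circ D\circ\zf^*$ being $\zf$-related to $D$ in the sense appropriate to gauge algebroids. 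This relatedness is, however, exactly what the reduction of $\sT\psi$ already encodes, which is why I prefer the principal-bundle route.
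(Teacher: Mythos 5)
The paper offers no proof of this proposition --- it is imported verbatim from \cite{Le:2018} --- so there is nothing internal to compare against; I will therefore assess your argument on its own terms. Your proof is correct, and it is the natural one given the machinery this paper sets up: you realize $\sD\zf$ as the Atiyah prolongation $\sA\psi$ of the $\Rt$-principal bundle morphism $\psi=(\zf^*)^{-1}:\big(L^1\big)^\boxtimes\to\big(L^2\big)^\boxtimes$ and invoke the fact, already recorded in Section 4, that any morphism of $G$-principal bundles induces a Lie algebroid morphism of Atiyah algebroids by reduction of its tangent map. The one step you flag as delicate --- the matching of conventions --- does go through cleanly and with no hidden inverse-transpose: since $\langle(\zf_x^*)^{-1}\xi,w\rangle=\langle\xi,\zf_x^{-1}w\rangle$, the pullback of $1$-homogeneous functions satisfies $\psi^*(\zi_\zs)=\zi_{\zf^*(\zs)}$ with $\zf^*(\zs)(x)=\zf_x^{-1}\big(\zs(\zf_0(x))\big)$ exactly as in Definition \ref{d1}, and then for an invariant tangent class $[v]$ over $\xi\in L^1_x{}^*$ representing $D_x$ one computes $\sT\psi(v)(\zi_\zs)=v(\zi_{\zf^*\zs})=\big\langle\xi,D_x(\zf^*\zs)\big\rangle=\big\langle\psi(\xi),\zf_x\big(D_x(\zf^*\zs)\big)\big\rangle$, which identifies $\sA\psi(D_x)$ with $\zs\mapsto\zf\big(D(\zf^*\zs)\big)$. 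Your remaining checks (well-definedness through first jets, the Leibniz rule with symbol $\sT\zf_0(\hat D)$ --- which for line bundles is even automatic, as the paper notes) are routine and correctly handled. The purely operator-theoretic alternative you sketch at the end would also work, but, as you say, the relatedness needed there is precisely what the principal-bundle reduction packages for free, so the route you chose is the more economical one within this paper's framework.
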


\section{Lifting $\Rt$-bundles}
As we already know, the $\Rt$-action $\sh:\Rt\ti\Lst\to\Lst$ can be lifted to an $\Rt$-principal action $\hh$ on $\sT\Lst$, and the dual action $\hhs$ on $\sT^*\Lst$. The base manifolds of these principal bundles are $\sD L$ and its dual $\sD^*L$, respectively, and the corresponding diagrams look like
\[
\xymatrix@C+30pt@R+10pt{
\sT \Lst\ar[d]^{\zt_\Lst}\ar[r]^{\hzt} & \sD L\ar[d]^{\hzt_L} \\
\Lst\ar[r]^{\zt} & M\,,}
\qquad\qquad\xymatrix@C+30pt@R+10pt{
\sT^* \Lst\ar[d]^{\zp_\Lst}\ar[r]^{\hts} & \sD^*L\ar[d]^{\hat\zp_L} \\
\Lst\ar[r]^{\zt} & M\,,}
\]
which leads to the identification
\be\label{A}
(\hzt,\zt_\Lst):\sT \Lst\to\sD L\ti_M\Lst,\quad\quad (\hts,\zp_\Lst):\sT^*\Lst\to\sD^*L\ti_M\Lst\,,
\ee
being particular cases of (\ref{vRt}) for $G=\Rt$.

To describe these actions in coordinates, let us start with local coordinates $(x^i)$ on an open subset $U$ of $M$, and coordinates $(x^i,s)$ in $\Lst$, associated with a local trivialization $\zt^{-1}(U)\simeq U\ti\Rt$, so $\sh_\zn(x^i,s)=(x^i,\zn\cdot s)$. We get then the adapted local coordinates $(x^i,s,\dot x^j,\dot s)$ in $\sT \Lst$, and the dual coordinates $(x^i,s,\zp_j,z)$ in $\sT^*\Lst$, in which
\be\label{sT}
\hh_{\zn}\big(x^i,s,\dot x^j,\dot s\big)=\big(x^i,\zn s,\dot x^j,\zn\dot s\big),\qquad
\hhs_\zn\big(x^i,s,\zp_j,z\big)=\big(x^i,\zn s,\zp_j,\zn^{-1}z\big)\,.
\ee
Therefore, in $\sD L$ we can use coordinates  $(x^i,\dot x^j,t=\dot s/s)$, and in $\sD^*L$ the coordinates
$(x^i,\zp_j,r=sz)$, so the projections $\hzt$ and $\hts$ read
\bea\label{dots}
&\hzt:\sT \Lst\to\sD L,\qquad (x^i,s,\dot x^j,\dot s\big)\longmapsto(x^i, \dot x^i, t={\dot s}/{s}),\\
& \hts:\sT^*\Lst\lra\sD^*L,\qquad \big(x^i,s,\zp_j,z\big)\longmapsto(x^i,\zp_j,r=sz)\,.\nn
\eea
An important observation is that in the case of the group $\Rt$, besides the tangent $\hzt$ and cotangent $\hts$ lifts of the principal action $\sh$, we have a whole family of such lifts and their duals (see, e.g., \cite{Bruce:2017,Grabowski:2013}). For instance, we can define $\Rt$-principal actions $\bh$ and $\hs$ on $\sT \Lst$ and $\sT^*\Lst$, respectively, by
\be\label{sT1}\bh_\zn=\zn^{-1}\cdot \sT \sh_\zn=\zn^{-1}\cdot \hh_\zn,\quad\qquad \h^*_\zn=\zn\cdot\big(\sT \sh_{\zn^{-1}}\big)^*=\zn\cdot\hhs_\zn.
\ee
These definitions work because $\Rt\subset\R$, and they have no direct analogs for a general Lie group $G$, because we cannot multiply vectors and covectors by group elements.

The lifted actions, $\bh$ and $\hs$, respect, like $\hh$ and $\hhs$, the canonical pairing between $\sT \Lst$ and $\sT^*\Lst$. The base manifolds of the principal bundles $(\sT^*\Lst,\h^*)$ and $(\sT \Lst,\h)$ are the first jet bundle $\sJ^1L^\ast_\Lst=\sJ^1L$ and its dual $\big(\sJ^1L^\ast_\Lst\big)^*=\big(\sJ^1L\big)^*$, respectively (cf. \cite{Bruce:2017,Grabowska:2024,Grabowski:2013}). The corresponding $\Rt$-vector bundles are

\[
\xymatrix@C+30pt@R+10pt{
\sT^* \Lst\ar[d]^{\zp_\Lst}\ar[r]^{\bt^*} & \sJ^1L\ar[d]^{\sj^1\zt} \\
\Lst\ar[r]^{\zt} & M\,,}
\qquad\qquad
\xymatrix@C+30pt@R+10pt{
\sT \Lst\ar[d]^{\zt_\Lst}\ar[r]^{\bt} & \big(\sJ^1L\big)^*\ar[d]^{(\sj^1\zt)^*} \\
\Lst\ar[r]^{\zt} & M\,,}
\]
which leads to another set of identifications (cf. (\ref{A}))
\[ (\bt,\zt_\Lst):\sT \Lst\to \big(\sJ^1L\big)^*\ti_M \Lst\quad\text{and}\quad
(\bt^*,\zp_\Lst):\sT^* \Lst\to \sJ^1L\ti_M \Lst,
\]
which refer to the principal as well as to the vector bundle structures.

\mn In local coordinates, we have
\[
\h_{\zn}\big(x^i,s,\dot x^j,\dot s\big)=\big(x^i,\zn s,\zn^{-1}\dot x^j,\dot s\big),\qquad
\h^*_\zn\big(x^i,s,\zp_j,z\big)=\big(x^i,\zn s,\zn\zp_j,z\big),
\]
and
\bea\label{J1L}
&\bt^*:\sT^\ast\Lst\lra\sJ^1L,\qquad (x^i,s,\zp_j,z)\longmapsto(x^i, p_j={\zp_j}/{s},z),\\
&\bt:\sT\Lst\lra\big(\sJ^1L\big)^*,\qquad (x^i,s,\dot x^j,\dot s)\longmapsto (x^i, \dot\x^j=s\dot x^j,\dot s).\nn
\eea
It is easy to see that the canonical pairing $\pair$ on $\sT^*\Lst\ti_\Lst\sT\Lst$ is not invariant with respect to the $\Rt$-action $\hs\ti\hh$, as it produces 1-homogeneous functions on fibers of $\Lst$, so elements in the fibers of $L$. In other words, we have a nondegenerate pairing of vector bundles
\[
\pair_L:\sJ^1L\ti_M\sD L\to L.
\]
In local coordinates associated with a local trivialization of $L$, this pairing reads
\[\La(x^i,\dot x^j,t),(x_i,p_k,z)\Ra_L=(x^i,\dot x^jp_j+tz).\]

\section{Jacobi bundles}\label{sec:Jacobi_bundles}
A local Lie bracket $J=\pb$ on sections of a line bundle $\zt:L\to M$ was introduced by Kirillov \cite{Kirillov:1976} under the name a \emph{local Lie algebra} and proven to be actually a bi-differential operator of the first order; for a purely algebraic generalization of this fact see \cite{Grabowski:1992}. For this reason, such brackets have been called in the literature \emph{Kirillov brackets} \cite{Bruce:2017, Grabowski:2013}. In the case of the trivial line bundle $L=M\ti\R$, the Kirillov bracket is defined on the space $\Ci(M)$ of functions on the manifold $M$ and commonly called a \emph{Jacobi bracket}. It is represented by a pair of tensor fields $J=(\zL,E)$ on $M$, where $\zL$ is a bi-vector field and $E$ is a vector field. The Jacobi bracket is then defined by
\[ \{f,g\}=\{f,g\}_J=\bk{\zL}{\xd f\we\xd g}+fE(g)-gE(f).
\]
In local coordinates,
\[\{f,g\}(x)=\zL^{ij}(x)\frac{\pa f}{\pa x^i}(x)\frac{\pa g}{\pa x^j}(x)+
{E^i}(x)\left(f(x)\frac{\pa g}{\pa x^i}(x)-\frac{\pa f}{\pa x^i}(x)g(x)\right),
\]
where $\zL^{ij}=-\zL^{ji}$. The Jacobi identity for the bracket can be expressed in terms of the Schouten-Nijenhuis  bracket $\lna \cdot,\cdot \rna_{SN}$ as
$$\lna E,{\zL}\rna_{SN}=0\,,\quad \lna{\zL},{\zL}\rna_{SN}=-2E\we{\zL}\,.
$$
\no In the full generality, i.e., for brackets on sections of line bundles, the corresponding structures $(L,\pb)$ were studied in detail by Marle \cite{Marle:1991} and called \emph{Jacobi bundles}. The Lie bracket $\pb$ is represented by an anti-symmetric  bi-differential operator of first order
$$J:\Sec(L)\ti\Sec(L)\to\Sec(L).$$
This implies that, for every section $\zs$ of $L$, the operator
$$D_\zs:\Sec(L)\to\Sec(L),\quad D_\zs(\zs')=\{\zs,\zs'\},$$
is a linear first-order differential operator, i.e., a section of the vector bundle $\sD L=(\sJ^1L)^*\ot_ML$.
We already know that such operators $D$ are VB-derivations in $L$: there is a vector field $\hat D$ such that
$$D(f\zs)-fD(\zs)=\hat D(f)\zs,$$
for any section $\zs$ of $L$ and any function $f$ on $M$. The vector field $\hat D$ is called the \emph{symbol} of $D$. The symbol map $\sb:\sD L\to\sT M$ extends to a short exact sequence of VB-morphisms covering the identity on $M$,
\[
0\lra M\ti\R\hlra\sD L\overset{\sb}{\lra}\sT M\lra 0.
\]

\mn Note that one can identify the Jacobi structure also with a vector bundle morphism $J:\we^2\sJ^1L\to L$ or, similarly to the Poisson case, a vector bundle morphism
\[
\xymatrix@C+25pt@R+20pt{
\sJ^1L \ar[d]^{\sj^1\zt}\ar[r]^{J^\#} & \sD L\ar[d]^{\hzt_\Lst}\\
M \ar@{=}[r] & M\,.}
\]
Therefore, with every section $\zs$ of $L$ we have associated the vector field
\begin{equation}
\label{Hvf}
X_\zs=\sb\big(J^\#(\,\sj^1(\zs)\big),
\end{equation}
called the \emph{Hamiltonian vector field} of $\zs$ \cite{Marle:1991}. The corresponding Jacobi-Hamiltonian mechanics is a rapidly developing subject, especially in the context of contact geometry  (see \cite{Grabowska:2022,Grabowska:2023,Grabowska:2024} and references therein). It is easy to see the following.
\begin{proposition} If in a chosen local trivialization of $L$ the Jacobi structure $J$ is represented by tensors $(\zL,E)$, then $J^\#:\sJ^1L\to\sD L=\sA\Lst $ takes in the adapted local coordinates the form
\be\label{redmap}J^\#(x^i,p_j,z)=\big(x^i,\zL^{kj}p_k+E^jz,-E^kp_k\big).\ee
\end{proposition}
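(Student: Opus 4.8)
\medskip\noindent
The plan is to read off $J^\#$ by a pointwise computation in the trivialization, unwinding the definition of $D_\zs$ and then expressing the resulting VB-derivation in the coordinates $(x^i,\dot x^j,t)$ on $\sD L=\sA\Lst$ coming from $\hzt$ (see \eqref{dots}). First I fix the trivialization $L\simeq M\ti\R$, identify sections $\zs$ with functions $f\in\Ci(M)$, and write the Kirillov bracket as the Jacobi bracket
\[
\{f,g\}=\zL^{ij}\,\pa_{x^i}f\,\pa_{x^j}g+E^i\big(f\,\pa_{x^i}g-\pa_{x^i}f\,g\big).
\]
Since $J$ is an antisymmetric bidifferential operator of the first order, the value $\{\zs,\zs'\}(x)$ depends on both arguments only through their $1$-jets at $x$; in particular the pointwise data of the operator $D_\zs=\{\zs,\cdot\}$ at $x$ depend on $\zs$ only through $(\sj^1\zs)(x)$. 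This is exactly what makes $J^\#\colon\sJ^1L\to\sD L$, $\,\sj^1\zs\mapsto D_\zs$, a well-defined vector bundle morphism and what licenses the purely pointwise calculation that follows.

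\medskip\noindent
Next I rewrite $D_f(g)=\{f,g\}$ by collecting the terms acting on $g$ into a first-order part and a zero-order part,
\[
D_f(g)=\big(\zL^{kj}\,\pa_{x^k}f+fE^j\big)\pa_{x^j}g-\big(E^k\,\pa_{x^k}f\big)g.
\]
Thus $D_f$ is the VB-derivation whose symbol is the vector field $\big(\zL^{kj}\,\pa_{x^k}f+fE^j\big)\pa_{x^j}$ and whose zero-order (multiplication) part is $-E^k\,\pa_{x^k}f$.

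\medskip\noindent
It remains to match this with the coordinates on $\sD L$. Under the identification $\sD L=\sA\Lst$, an invariant vector field $\dot x^j\pa_{x^j}+t\,s\pa_s$ on $\Lst$ acts on the $1$-homogeneous function $\zi_\zs=s\,f$ by $\big(\dot x^j\pa_{x^j}+t\,s\pa_s\big)(s\,f)=s\big(\dot x^j\pa_{x^j}f+t\,f\big)$, so in the coordinates $(x^i,\dot x^j,t)$ the corresponding VB-derivation is the operator $g\mapsto\dot x^j\pa_{x^j}g+t\,g$; equivalently this is forced by the pairing $\La(x^i,\dot x^j,t),(x^i,p_k,z)\Ra_L=\dot x^jp_j+tz$. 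Reading that same pairing against $D(g)=\dot x^j\pa_{x^j}g+t\,g$ also fixes the jet coordinates: $z$ records the value $f$ of the section and $p_j=\pa_{x^j}f$ its first derivatives (consistently with $\bt^*$ in \eqref{J1L}). Comparing the two descriptions of $D_f$ now yields $\dot x^j=\zL^{kj}p_k+E^jz$ and $t=-E^kp_k$, which is precisely \eqref{redmap}. I expect the only delicate point to be the bookkeeping in this last paragraph---verifying that the scalar part of a VB-derivation is exactly the fibre coordinate $t=\dot s/s$, and that it is $z$, not $p_j$, that carries the value of the section---since a swap or a sign error there would corrupt the formula, whereas everything else reduces to a single differentiation.
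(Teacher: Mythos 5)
Your proof is correct and is exactly the direct verification the paper leaves to the reader (the proposition is prefaced only by ``It is easy to see the following''): you unwind $D_f=\{f,\cdot\}$ into its symbol and zero-order part and match them against the coordinates $(x^i,\dot x^j,t=\dot s/s)$ on $\sD L$ and $(x^i,p_j,z)$ on $\sJ^1L$, with the conventions fixed by the pairing $\pair_L$. The result also agrees with the paper's later consistency check, where \eqref{redmap} is recovered by reducing $\zP^\#$ from \eqref{zPmap} under the $\Rt$-actions.
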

\section{Poisson $\Rt$-bundles and Jacobi algebroids}
One can identify smooth sections $\zs$ of a line bundle $\zp:L\to M$ with smooth homogeneous functions $\zi_\zs$ of degree one on $\zt:L^*\to M$, and further also with homogeneous functions of degree one on the principal $\R^\ti$-bundle $\zt:\Lst\to M$, i.e., functions $f:L^\ti\to\R$ such that $f(\sh_s(v)):=f(s.v)=s\cdot f(v)$.

Given a Jacobi bracket $\pb$ on sections of $L$, we can try to define a Poisson bracket $\{\cdot,\cdot\}_\zP$, associated with a linear Poisson structure $\zP$ on $L^*$, using the identity
\begin{equation}\label{e1}
\iota_{\{\zs,\zs'\}} = \{\iota_{\zs}, \iota_{\zs'}\}_{\zP}.
\end{equation}
However, unlike the case of a Lie algebroid, this bracket is generally singular at points of the zero-section. Therefore, one has to define a Poisson tensor on the $\Rt$-principal bundle $P=\Lst=L^*\setminus 0_M$ instead. Indeed, in coordinates $(x^i,s)$ on $L^*$, dual to $(x^i,z)$ on $L$, we get
\begin{equation}\label{homP}
\zP(x,s)=\frac{1}{2s}\zL^{ij}(x)\pa_{x^i}\we\pa_{x^j}+{E^i}(x)\pa_s\we\pa_{x^i}\,.
\end{equation}
This identification allows for a very useful characterization of Jacobi brackets in terms of Poisson brackets (cf. \cite{Grabowski:2013, Marle:1991}).
\begin{definition}[\cite{Grabowski:2013}]
A \emph{Poisson $\R^\ti$-bundle} is a principal $\R^\ti$-bundle $(P,h)$ equipped with a Poisson  structure $\zP$ which is homogeneous of degree $-1$, i.e., $(\sh_s)_*(\zP)=s^{-1}\cdot\zP$. A \emph{morphism of Poisson $\Rt$-bundles} $\phi: P\rightarrow P'$ is a morphism of principal $\Rt$-bundles which is simultaneously a Poisson morphism.
\end{definition}
\no In what follows, a Poisson tensor on an $\Rt$-bundle $P$ which is homogeneous of degree $-1$ will be simply called \emph{homogeneous}. The corresponding Poisson bracket is closed on homogeneous (degree 1) functions. In particular, $\R^n\ti\R^\ti$ with coordinates $(x^i,s)$ and trivial $\Rt$-principal bundle structure, equipped with a Poisson tensor of the form (\ref{homP}), are basic examples of Poisson $\R^\ti$-bundles. Summarizing our observations we get the following,
\begin{proposition}
There is a one-to-one correspondence between Jacobi brackets $\pb$ on a line bundle $L\rightarrow M$ and homogeneous Poisson tensors $\zP$ on the principal bundle $P=\Lst$, given by (\ref{e1}).
%The vector bundle map $\zP^\#:\sT^*P\to\sT P$ reduces to the base manifolds and induces a vector bundle map (\ref{Jmap}).
\end{proposition}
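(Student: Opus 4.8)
The plan is to base everything on a single spanning observation and then read off both directions of the correspondence from it. First I would record the identification recalled at the start of the section: in a local trivialization $\zt^{-1}(U)\simeq U\ti\Rt$ with fibre coordinate $s$, a section $\zs$ of $L$ corresponds to the $1$-homogeneous function $\zi_\zs(x,s)=s\,a(x)$, where $a$ represents $\zs$ in the trivialization. The key lemma is that the differentials $\{\xd\zi_\zs(p)\}$, taken over all sections $\zs$, span $T^*_p\Lst$ at every point $p=(x,s)$ with $s\ne 0$: indeed $\xd\zi_\zs=s\,\xd a+a\,\xd s$, so choosing $a$ with $a(x)=0$ and arbitrary $\xd a$ produces every horizontal covector, while a nonvanishing constant $a$ produces $\xd s$. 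Consequently any bivector, and more generally any $k$-vector, on $\Lst$ is completely determined by its contractions with such differentials.

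Granting this, the direction from a homogeneous Poisson tensor to a Jacobi bracket is immediate. If $\zP$ is homogeneous of degree $-1$, then for $1$-homogeneous $f,g$ the function $\{f,g\}_\zP$ is again $1$-homogeneous (a degree count: $1+1-1=1$), so (\ref{e1}) consistently defines a bracket $\pb=J$ on $\Sec(L)$. Skew-symmetry and the Jacobi identity are inherited from $\{\cdot,\cdot\}_\zP$; and because $\zP$ is a bi-derivation, $\{\zi_\zs,\zi_{\zs'}\}_\zP$ depends only on the first jets of $\zi_\zs,\zi_{\zs'}$, hence only on the first jets of $\zs,\zs'$, so $J$ is a first-order bi-differential operator, i.e.\ a Jacobi bracket. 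Injectivity of $\zP\mapsto J$ is precisely the spanning lemma: the values $\{\zi_\zs,\zi_{\zs'}\}_\zP=\zi_{\{\zs,\zs'\}}$ determine $\zP$ on a spanning set of covectors, hence determine $\zP$.

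For surjectivity I would start from a Jacobi bracket $J$, write it in a trivialization as a pair $(\zL,E)$, and take the explicit local bivector (\ref{homP}). A direct computation with $\zi_{\zs_1}=s\,a$ and $\zi_{\zs_2}=s\,b$ gives $\{\zi_{\zs_1},\zi_{\zs_2}\}_\zP=s\big(\zL^{ij}\pa_{x^i}a\,\pa_{x^j}b+E^i(a\pa_{x^i}b-b\pa_{x^i}a)\big)=\zi_{\{\zs_1,\zs_2\}}$, and one checks directly that (\ref{homP}) is homogeneous of degree $-1$ (under $\nabla=s\pa_s$ both summands have weight $-1$). Since the induced bracket on $1$-homogeneous functions is the globally defined object $\zi_{\{\zs,\zs'\}}$, independent of the trivialization, the spanning lemma forces the local bivectors on overlaps to coincide, so they glue to a global homogeneous bivector $\zP$. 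Finally $\zP$ is genuinely Poisson: $\lna\zP,\zP\rna_{SN}$ is a degree-$(-2)$ trivector whose contraction with $\xd\zi_{\zs_1},\xd\zi_{\zs_2},\xd\zi_{\zs_3}$ equals, up to a constant, the Jacobiator of $J$ on $\zs_1,\zs_2,\zs_3$, which vanishes; by the spanning lemma $\lna\zP,\zP\rna_{SN}=0$ identically.

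The main obstacle is not any single computation but obtaining the global statement cleanly: one must be sure the construction does not depend on the chosen trivialization, and that the Poisson (Jacobi-identity) condition, which a priori concerns all functions on $\Lst$, is controlled by the behaviour on $1$-homogeneous functions alone. Both issues are dissolved by the spanning lemma, which reduces gluing and the vanishing of the Schouten square to an identity checked only on differentials of $1$-homogeneous functions. The remaining care is bookkeeping: verifying the first-order bi-differential character of $J$ and the degree count that makes (\ref{e1}) well defined.
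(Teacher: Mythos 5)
Your proposal is correct and follows essentially the same route as the paper, which identifies sections of $L$ with $1$-homogeneous functions on $\Lst$, defines the correspondence by (\ref{e1}), and exhibits the local formula (\ref{homP}) together with the degree count showing the bracket is closed on degree-one functions; the paper merely summarizes these observations and defers details to the references. Your explicit spanning lemma (differentials of $1$-homogeneous functions span $\sT^*_p\Lst$) is exactly the point the paper leaves implicit, and it correctly handles injectivity, the gluing over trivializations, and the reduction of $\lna\zP,\zP\rna_{SN}=0$ to the Jacobi identity of $J$.
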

\begin{example}
Let $(M,C)$ be a contact manifold, i.e., $C$ is a `maximally non-integrable' corank 1 distribution on $M$. The standard example is $C$ being the kernel of the canonical contact 1-form $\zh=\xd z-p_i\xd q^i$ on $\R^{2n+1}$ with coordinates $(q^i,p_j,z)$.
The maximal non-integrability can be described as the fact that the canonical symplectic form $\zw_M$ on $\sT^*M$ is still symplectic when restricted to $(C^o)^\ti$, where $C^o\subset\sT^*M$ is the annihilator of $C$. Since $C^o=L^*$ for the line bundle $L=\sT M/C$, this defines a Jacobi structure on $L$, corresponding to the homogeneous Poisson structure $\zP=\zw^{-1}$ on $\Lst$ being the inverse of the symplectic form $\zw=\zw_M\,\big|_{(C^o)^\ti}$. In other words, contact structures correspond to Poisson $\R^\ti$-bundles with invertible Poisson tensors \cite{Marle:1991,Bruce:2017,Grabowski:2013}. For $C=\ker(\zh)$ as above, we get the homogeneous Poisson tensor on the trivial $\Rt$-bundle $\Lst=\R^{2n+1}\ti\Rt$ of the form
$$\zP(q^i,p_j,z,s)=\pa_s\we\pa_z+\frac{1}{s}\pa_{p_i}\we\big(\pa_{q^i}+p_i\pa_z\big)\,.$$
This `homogeneous' approach to contact geometry is much simpler than the one dominating the literature and finds interesting applications in geometric mechanics \cite{Grabowska:2022,Grabowska:2024}.
\end{example}
\no Let us recall that by a Jacobi algebroid we mean an $\Rt$-algebroid (Definition \ref{def}), and $\sT\Lst$ is canonically a Jacobi algebroid (cf. \ref{exp}). The Lie algebroid structure on $\sT^*\Lst$, associated with the homogeneous Poisson tensor (\ref{homP}) on $\Lst$, corresponds to the tangent lift $\dt\zP$ of $\zP$ which lives on $\sT\Lst$.
Starting with bundle coordinates $(x^i,s)$ on $\Lst$ and using the adapted coordinates $(x^i,s,\pi_j,z)$ in $\sT^*\Lst$ and $(x^i,s,\dot x^j,\dot s)$ on $\sT\Lst$, we have, for $\zP$ as in (\ref{homP}),
\be\label{zPmap}\zP^\#:\sT^*\Lst\to\sT\Lst, \quad \zP^\#(x^i,s,\zp_j,z)=\big(x^i,s,s^{-1}\zL^{kj}\zp_k+E^jz, -E^k\zp_k\big).
\ee
Moreover, $\zP^\#:\sT^*\Lst\to\sT\Lst$ is a Lie algebroid morphism. Doing the reduction by $\Rt$-actions, $\hs$ on the left and $\hh$ on the right, we reconstruct $J^\#$. Indeed
the $\hs$ reduction in coordinates reads
$$(x^i,s,\zp_j,z)\mapsto(x^i,p_j=\zp_j/s,z),$$
and the $\hh$ reduction in coordinates reads
$$(x^i,s,\dot x^j,\dot s)\mapsto(x^i,\dot x^j,\dot s/s),$$
that easily implies that on the reduced bundles we get (\ref{redmap}).

\mn According to (\ref{lift}), in the adapted local coordinates $(x^i,s,\dot x^j,\dot s)$ on $\sT\Lst$, we have
\bea\label{Plift} \dt\zP&=&\Big(\frac{1}{2s}\frac{\pa \zL^{ij}}{\pa x^k}(x)\dot x^k-\frac{\dot s}{2s^2}\zL^{ij}(x)\Big)\pa_{\dot x^i}\we\pa_{\dot x^j}+\frac{1}{s}\zL^{ij}(x)\pa_{x^i}\we\pa_{\dot x^j}\\
&& +\frac{\pa E^i}{\pa x^k}(x)\dot x^k\pa_{\dot s}\we\pa_{\dot x^i}+E^i(x)\big(\pa_s\we\pa_{\dot x^i}+\pa_{\dot s}\we\pa_{x^i}\big).\nn
\eea
For the lifted action $\hh$ of the $\Rt$-action $\sh$ on $\Lst$, the coordinates $(x^i,s,\dot x^j,\dot s)$ are homogeneous of degrees $(0,1,0,1)$ (cf. \ref{sT}), so $\dt\zP$ is homogeneous and defines a Jacobi bracket on the line bundle `dual' to the $\Rt$-bundle $\sT\Lst\to\sT\Lst/\Rt=\sD L$, i.e., on the line bundle $\sD L\ti_M L\to\sD L$. This is the \emph{Jacobi lift} of the Jacobi structure defined in \cite{Grabowski:2001}.

\mn But for the lifted action $\h$, the coordinates $(x^i,s,\dot x^j,\dot s)$ are homogeneous of degrees $(0,1,1,0)$ (cf. \ref{sT1}), so $\dt\zP$ is $\Rt$-invariant. Hence, the dual $\Rt$-vector bundle is $\sT^*\Lst$ with the lifted $\Rt$-action $\hs$, and $\sT^*\Lst$ is a Jacobi algebroid with respect to this action, i.e., $\h^*_\zn$ are Lie algebroid automorphisms for all $\zn\in\Rt$. This, in turn, is the \emph{Poisson lift} of our Jacobi structure, as defined in \cite{Grabowski:2001}. Consequently, $\sJ^1L=\sT^*\Lst/\Rt$ inherits a Lie algebroid structure, and $\bt^*:\sT^*\Lst\to\sJ^1L$ is a Lie algebroid morphism.
This proves the following.
\begin{proposition}
Let $(L,J)$ be a Jacobi bundle, and let $\zP$ be the corresponding homogeneous Poisson tensor on the $\Rt$-principal bundle $\zt:\Lst\to M$. Then, the $\Rt$-bundle $(\sT^*\Lst,\hs)$ is canonically a Jacobi algebroid, and $\zP^\#:\sT^*\Lst\to\sT\Lst$ is a Jacobi algebroid morphism with respect to the $\Rt$-principal actions $\hs$ and $\hh$ on $\sT^*\Lst$ and $\sT\Lst$, respectively. In particular,
\be\label{commutat}\zP^\#\circ\h^*_\zn=\hh_\zn\circ\zP^\#.
\ee
The corresponding diagram consists of Lie algebroid morphisms and reads
\[\xymatrix@R-8pt@C-4pt{
 & \sT^*\Lst \ar[rrr]^{\zP^\#} \ar[dr]^{\bt^*}
 \ar[ddl]_{\zp_\Lst}
 & & & \sT\Lst\ar[dr]^{\hzt}\ar[ddl]_/-20pt/{\zt_\Lst}\ar[rr]^{\sT\zt}
 & &\sT M\ar@{=}[dr]&\\
 & & \sJ^1L\ar[rrr]^/-20pt/{J^\#}\ar[ddl]_/-20pt/{\sj^1\zt}
 & & & \sD L \ar[ddl]_{\hzt_L}\ar[rr]^{\sb}&&\sT M\ar[ddl]_{\zt_M}\\
 \Lst\ar@{=}[rrr]\ar[dr]^{\zt}
 & & & \Lst\ar[dr]^{\zt} & & && \\
 & M\ar@{=}[rrr]& & & M\ar@{=}[rr] & &M&\,.
}
\]

\mn In particular, the anchor map of the Lie algebroid $\sJ^1L$ is
\[\sb\circ J^\#:\sJ^1L\to\sT M.
\]
\end{proposition}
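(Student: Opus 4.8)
The plan is to organize the argument into three layers: the Jacobi algebroid structure on $(\sT^*\Lst,\hs)$, the statement that $\zP^\#$ is a morphism of Jacobi algebroids, and the identification of the reduced anchor. For the first layer I would invoke the correspondence, established earlier, between $\Rt$-algebroid structures on an $\Rt$-vector bundle and linear $\Rt$-Poisson structures on its dual equipped with the dual $\Rt$-action. By construction the Lie algebroid structure on $\sT^*\Lst$ is the one whose dual linear Poisson tensor is the tangent lift $\dt\zP$ on $\sT\Lst$, written out in (\ref{Plift}). The key point is that $\dt\zP$ is invariant under the action $\h$: using the homogeneity weights recorded in (\ref{sT1})---equivalently, computing the pushforward $(\h_\zn)_*$ on each monomial---one checks term by term that every summand of (\ref{Plift}) carries total weight zero, so $(\h_\zn)_*\dt\zP=\dt\zP$. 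Since $\h$ is precisely the dual action of $\hs$, this presents $\dt\zP$ as a linear $\Rt$-Poisson structure on the dual of $(\sT^*\Lst,\hs)$, and the correspondence yields the desired Jacobi algebroid structure.

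For the second layer I would unwind the definition: a morphism of Jacobi algebroids is a Lie algebroid morphism that is simultaneously a morphism of $\Rt$-principal bundles. That $\zP^\#$ is a Lie algebroid morphism is the standard property of the sharp map of a Poisson tensor, already recorded above: viewing $\sT^*\Lst$ as the cotangent Lie algebroid of the Poisson manifold $\Lst$ and $\sT\Lst$ as the canonical tangent Lie algebroid, $\zP^\#$ intertwines the two de Rham differentials precisely because $\zP$ obeys the Jacobi identity. The $\Rt$-equivariance (\ref{commutat}) I would verify directly in coordinates from the explicit form (\ref{zPmap}) of $\zP^\#$ together with the action formulas (\ref{sT}) and (\ref{sT1}): applying either composite to $(x^i,s,\zp_j,z)$ produces $(x^i,\zn s,\,s^{-1}\zL^{kj}\zp_k+E^jz,\,-\zn E^k\zp_k)$, the powers of $\zn$ cancelling exactly as the weights predict. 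Together these two facts make $\zP^\#$ a morphism of Jacobi algebroids, covering on the quotient bases a map $J^\#:\sJ^1L\to\sD L$.

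For the diagram and the anchor I would read off each face as an instance of the reduction machinery of Theorem \ref{m1}. The vertical reductions $\bt^*:\sT^*\Lst\to\sJ^1L$ and $\hzt:\sT\Lst\to\sD L$ are Lie algebroid morphisms because the relevant $\Rt$-actions act by Lie algebroid automorphisms---for $\hs$ this is the content of the first layer, while for $\hh$ it holds because $\hh=\sT\sh$ is a tangent lift, $\hzt$ being the associated Atiyah reduction onto $\sD L=\sA\Lst$. The equivariance (\ref{commutat}) forces $\zP^\#$ to descend to the quotients, and the coordinate reductions recalled just before the proposition identify the descended map with the intrinsic $J^\#$ of (\ref{redmap}), which is therefore itself a Lie algebroid morphism; commutativity of the central square then follows at once. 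The maps $\sT\zt$ and $\sb$ are Lie algebroid morphisms as, respectively, the tangent prolongation of a smooth map and the anchor (symbol) of the gauge algebroid $\sD L=\sA\Lst$. Finally, Theorem \ref{m1} identifies the anchor of the reduced Lie algebroid $\sJ^1L$ with $\zr_P\circ\zr^\#$, where $\zr^\#=J^\#$ is the reduced anchor map and $\zr_P=\sb$ is the anchor of the Atiyah algebroid $\sD L$; hence the anchor of $\sJ^1L$ is $\sb\circ J^\#$, as asserted.

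The only genuinely computational step is the equivariance (\ref{commutat}), and even that reduces to a short bookkeeping of homogeneity weights. I expect the main obstacle to be organizational rather than analytical: one must check that the descent of $\zP^\#$ through the two different liftings $\hs$ and $\hh$ is mutually consistent and reproduces exactly the intrinsically defined $J^\#$, so that the reduced front square genuinely closes. Once that identification is secured, every remaining assertion is a direct application of the reduction theorem already at our disposal.
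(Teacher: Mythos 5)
Your proposal is correct and follows essentially the same route as the paper: the Jacobi algebroid structure on $(\sT^*\Lst,\hs)$ is obtained from the $\h$-invariance of the tangent lift $\dt\zP$ (checked by weight bookkeeping on the terms of (\ref{Plift})), the equivariance (\ref{commutat}) is verified in the adapted coordinates from (\ref{zPmap}), and the reduced square together with the identification of the anchor of $\sJ^1L$ as $\sb\circ J^\#$ follows from the reduction machinery of Theorem \ref{m1}. Your coordinate computation of both composites, yielding $(x^i,\zn s,\,s^{-1}\zL^{kj}\zp_k+E^jz,\,-\zn E^k\zp_k)$, is exactly right.
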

\no The Lie algebroid structure on $\sJ^1L$ we constructed here coincides locally with the Lie algebroid structure on $\sT^*M\ti\R$ obtained from the Jacobi structure $(\zL,E)$ on the trivial bundle $L=M\ti\R$ in \cite{Kerbat:1993}.
The authors there construct not the linear Poisson tensor (\ref{Plift}) but the Lie algebroid bracket $[(\za,f),(\zb,g)]$ on sections of the vector bundle $\sT^*M\ti\R$ by
\bea
[(\za,f),(\zb,g)]&=&\big(\Ll_{\zL^\#_\za}\zb-\Ll_{\zL^\#_\zb}\za-\xd<\zL,\za\we\zb>
+f\Ll_E\zb-g\Ll_E\za-i_E(\za\we\zb),\nn \\
&&\La\zL,\zb\we\za\Ra+\zL^\#_\za(g)-\zL^\#_\zb(f)+fE(g)-gE(f)\big). \label{J1-bracket}
\eea
Note that there is a simple characterization of the Lie algebroid bracket $\lb_{\zP^c}$ on $\sJ^1L$ (cf. \cite[Example 2.7]{Le:2018}).
\begin{proposition}
The Lie algebroid bracket $\lb_{\dt\zP}$ on $\sJ^1L$ is uniquely characterized as a Lie algebroid with the anchor $\zr\big(\,\sj^1(\zs)\big)=X_\zs$, where $X_\zs$ is the Hamiltonian vector field of $\zs$ (cf. (\ref{Hvf})), and the Lie bracket satisfying
$$[\,\sj^1(\zs),\sj^1(\zs')]=\sj^1\big(\{\zs,\zs'\}\big).$$
\end{proposition}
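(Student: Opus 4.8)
The plan is to split the statement into an existence part (the structure $\lb_{\dt\zP}$ does satisfy the two displayed conditions) and a uniqueness part (any Lie algebroid structure on $\sJ^1L$ satisfying them must coincide with it). The crucial algebraic input for uniqueness is that the prolongations $\{\sj^1(\zs):\zs\in\Sec(L)\}$ generate $\Sec(\sJ^1L)$ as a $\Ci(M)$-module. I would prove this using the short exact sequence $0\to\sT^*M\otimes L\to\sJ^1L\to L\to 0$: the projection of $\sj^1(\zs)$ to $L$ is $\zs$ itself, so the sections $f\,\sj^1(\zs)$ already surject onto $L$, while the differences $\sj^1(f\zs)-f\,\sj^1(\zs)=\xd f\otimes\zs$ exhaust the subbundle $\sT^*M\otimes L$ as $f$ and $\zs$ vary, because the differentials $\xd f$ span $\sT^*M$ pointwise.

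Uniqueness then follows formally. The anchor of a Lie algebroid is a vector bundle morphism, hence $\Ci(M)$-linear on sections, so the prescription $\zr(\sj^1(\zs))=X_\zs$ together with the generation lemma fixes $\zr$ on all of $\Sec(\sJ^1L)$ and therefore as a bundle map. For the bracket, I would invoke the Leibniz rule (\ref{qd}): writing arbitrary sections as $\Ci(M)$-combinations of prolongations one has
$$[f\,\sj^1(\zs),\,g\,\sj^1(\zs')]=fg\,[\sj^1(\zs),\sj^1(\zs')]+f\,\zr(\sj^1(\zs))(g)\,\sj^1(\zs')-g\,\zr(\sj^1(\zs'))(f)\,\sj^1(\zs),$$
so once $\zr$ and the values $[\sj^1(\zs),\sj^1(\zs')]=\sj^1(\{\zs,\zs'\})$ are fixed, bilinearity and this identity determine the bracket of all sections.

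For the existence part, the anchor condition is immediate: by the preceding proposition the anchor of $\sJ^1L$ is $\sb\circ J^\#$, and $\sb(J^\#(\sj^1(\zs)))=X_\zs$ is exactly the definition (\ref{Hvf}) of the Hamiltonian vector field. The bracket condition I would obtain through the reduction $\bt^*:\sT^*\Lst\to\sJ^1L$. The key observation, which I would verify in the adapted coordinates of (\ref{J1L}), is that $\sj^1(\zs)=\bt^*(\xd\zi_\zs)$, where $\zi_\zs$ is the degree-one homogeneous function on $\Lst$ associated with $\zs$; equivalently, $\xd\zi_\zs$ is an $\hs$-invariant section of $\sT^*\Lst$ that descends to $\sj^1(\zs)$. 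On the cotangent Lie algebroid $\sT^*\Lst$ the Koszul bracket satisfies $[\xd\zi_\zs,\xd\zi_{\zs'}]=\xd\{\zi_\zs,\zi_{\zs'}\}_\zP$, and $\{\zi_\zs,\zi_{\zs'}\}_\zP=\zi_{\{\zs,\zs'\}}$ by the defining identity (\ref{e1}) of the homogeneous Poisson tensor (the degree count $1+1-1=1$ guarantees the result is again degree one). Hence $[\xd\zi_\zs,\xd\zi_{\zs'}]=\xd\zi_{\{\zs,\zs'\}}$, and since $\bt^*$ realises $\sJ^1L$ as the $\Rt$-reduction of $\sT^*\Lst$ (Theorem \ref{m1}), the bracket of the $\hs$-invariant sections $\xd\zi_\zs,\xd\zi_{\zs'}$ descends to the reduced bracket of their images, giving $[\sj^1(\zs),\sj^1(\zs')]=\bt^*(\xd\zi_{\{\zs,\zs'\}})=\sj^1(\{\zs,\zs'\})$.

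The main obstacle I anticipate is bookkeeping rather than conceptual: one must apply the reduction-of-invariant-sections statement correctly, since the paper's notion of Lie algebroid morphism is phrased via pullback of the de Rham differential and does not a priori push sections forward. The justification is that $\bt^*$ is precisely the quotient map of the $\hs$-action, for which Theorem \ref{m1} guarantees that invariant sections are closed under the bracket and descend to the reduced bracket on $\sJ^1L$. Everything else reduces to the coordinate identity $\sj^1(\zs)=\bt^*(\xd\zi_\zs)$ and the elementary generation lemma.
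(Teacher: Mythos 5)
Your argument is correct. Note that the paper itself offers no proof of this proposition (it merely cites \cite[Example 2.7]{Le:2018}), so there is nothing to compare against; but your route is exactly the one the paper's own construction suggests. The two key ingredients check out: (i) the generation lemma via $\sj^1(f\zs)-f\,\sj^1(\zs)=\xd f\otimes\zs$ and the sequence $0\to\sT^*M\otimes L\to\sJ^1L\to L\to 0$ gives uniqueness once one combines it with $\Ci(M)$-linearity of the anchor and the Leibniz rule (the locality of Lie algebroid brackets, implicit in your use of local generators, is standard and follows from the Leibniz rule); and (ii) for existence, the coordinate identity $\sj^1(\zs)=\bt^*(\xd\zi_\zs)$, the Koszul-bracket identity $[\xd f,\xd g]_\zP=\xd\{f,g\}_\zP$, the degree count $1+1-1=1$, and the defining relation (\ref{e1}) combine with Theorem \ref{m1} (invariant sections are closed under the bracket and descend to the reduced bracket, which is by definition the bracket $\lb_{\dt\zP}$ on $\sJ^1L$) to give $[\,\sj^1(\zs),\sj^1(\zs')]=\sj^1(\{\zs,\zs'\})$, while the anchor condition is immediate from (\ref{Hvf}) and the preceding proposition.
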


\begin{remark} An interesting description of Poisson and Jacobi structures in terms of the corresponding tangent lifts can be found in \cite{Grabowski:2003a,Grabowski:2001}. Note that the Lie algebroid structure on $\sJ^1L$ associated with a Jacobi structure on the line bundle $L$ is the `Jacobi analog' of the Lie algebroid structure on $\sT^*M$ associated with a Poisson structure on $M$.
\end{remark}
\section{Jacobi sigma models}
\subsection{The `constrained' approach}
Let us start with analyzing the Jacobi sigma model proposed in \cite{Bascone:2021,Bascone:2021a,Vitale:2023}. It is done for the trivial line bundle $L=M\ti\R$ and starts properly from a Poissonization (\ref{homP}) of a Jacobi structure $(\zL,E)$. After performing the standard Poisson sigma model procedure, and obtaining the field equations, the authors put \emph{ad hoc} one variable equal to zero and reduce the equations to fields, understood as vector bundle morphisms $\sT\zS\to\sT^*M\ti\R$. Locally, for Darboux coordinates $(x^i,p_j,z)$ on $\sT^*M\ti\R$, the fields are represented by 1-forms $(\zh_i,\z)$ on $\zS$, that correspond to linear coordinates $(p_i,z)$, and functions $X^i$ on $\zS$ corresponding to base coordinates $(x^i)$ on $M$. The main result of the paper is finding an action functional producing these field equations, namely
$$S(X,\zh,\z)=\int_\zS\Big(\zh_i\we\xd X^i+\half\zL^{ij}(X)\zh_i\we\zh_j-E^i(X)\zh_i\we\z\Big).
$$
It is clear that this action functional is not gauge invariant, i.e., it depends on the choice of a trivialization of $L$ (various trivializations are possible even for trivial bundles).
\subsection{The homogeneous Poisson approach}\label{HomPoi}
A natural approach is clearly \emph{via} Poisson sigma models applied to Poissonizations of given Jacobi bundles. For a Jacobi structure $J$ on the line bundle $\zt:L\to M$ consider the corresponding homogeneous Poisson structure $\zP$ on $\Lst$, which locally reads as in (\ref{homP}). The fields of the corresponding Poisson sigma model are vector bundle morphisms
\[
\xymatrix@C+30pt@R+10pt{
\sT\zS \ar[r]^{\zF}\ar[d]_{\zt_\zS} & \sT^*\Lst\ar[d]^{\zp_\Lst} \\
\zS \ar[r]^{\zf} & \Lst\,.}
\]
The Poisson tensor $\zP$ is represented by a vector bundle morphism
\[
\xymatrix@C+30pt@R+10pt{
\sT^*\Lst\ar[d]_{\zp_\Lst} \ar[r]^{\zP^\#} & \sT\Lst\ar[d]^{\zt_\Lst} \\
\Lst \ar@{=}[r] & \ \Lst\,,}
\]
and the action functional is the standard one,
\be\label{af}S(\zF)=\int_\zS \La\zF\,\overset{\we}{,}\,\Big(\sT\zf+\half\zP^\#\circ\zF\Big)\Ra.\ee
It is well known that solutions of the corresponding field equations are those $\zF$ which are Lie algebroid morphisms.

\mn The pull-backs by $\zf$ of local coordinates $(x^i,s)$ on $\Lst$ are functions $(X^i,\mathfrak{s})$ on $\zS$, $\s\ne 0$, and linear coordinates $(\pi_i,z)$ on $\sT^*\Lst$ represent 1-forms $(\bpi_i,\z)$ on $\zS$.
Hence, the action functional (\ref{af}) takes in coordinates the form (cf. (\ref{zPmap}))
\be\label{hpa}S(\zF)=\int_\zS\Big(\bpi_i\we\xd X^i+\z\we\xd\s+\frac{1}{2\s}\zL^{ij}(X)\bpi_i\we\bpi_j+E^j(X)\z\we\bpi_j\Big).
\ee
Let us notice that, apart from a coefficient multiplying $E$, this action functional has the same expression as the functional in the work \cite{Chatzistavrakidis:2020}, where, however, the authors considered only a trivializable line bundle and $\s$ was assumed to be a positive function. The Euler-Lagrange equations of motion can be written as follows:
\begin{align}\label{ELhomog}
& \xd X^i + \frac{1}{\s}\zL^{ij}(X)\bpi_j - \z E^i(X) = 0; \nonumber\\
& \xd \s + E^j(X)\bpi_j =0; \\
& \xd \bpi_k - \frac{1}{2\s}\frac{\partial \zL^{ij}}{\partial x^k}(X)\bpi_j \we \bpi_k - \frac{\partial E^j}{\partial x^k}(X)\z\we \bpi_j = 0; \nonumber \\
& \xd \z + \frac{1}{2\s^2}\zL^{ij}(X)\bpi_i \we \bpi_j = 0\nonumber \,,
\end{align}
showing that the morphism $\zF$ must be a Lie algebroid morphism (see \eqref{moreq}).
\begin{remark} A first impression is that the above Jacobi sigma model is nothing but a Poisson sigma model, and the fact that it is associated with a Jacobi structure is not visible in the action functional (\ref{af}). The point is that the fields are VB-morphisms into $\sT^*\Lst$, which is not just a Lie algebroid but a Jacobi algebroid, and $\Lst$ is not just a manifold but a principal bundle. This opens possibilities of defining equivalent variants of the Poisson sigma model in question, in which the Jacobi bundle sources can be explicitly seen.
\end{remark}

\mn Alternatively, one can obtain the equation of motion using the Hamiltonian formalism, as shown, for instance, in \cite{Cattaneo:2001}. Following this procedure, we consider a surface $\zS$ which is rectangle $\left[ 0,1 \right]\times \left[ -T,T \right] = I_u\times I_t$, with two coordinate functions $(u,t)$. The 1-forms $(\bpi_i,\z)$ can be written as $(\bpi^{(u)}_i\xd u + \bpi^{(t)}_i\xd t, \z^{(u)}\xd u + \z^{(t)}\xd t)$, with the boundary condition that $\bpi^{(t)}_i$ and $\z^{(t)}$ vanish on $\partial I_u$ (different boundary conditions have been investigated in \cite{Calvo:2006}). The action functional can be written as
\begin{align}\nn
&S(\p,\z,X,\s)=\int_{\zS}\Big( \bpi^{(u)}_i\partial_t X^i + \z^{(u)}\partial_t\s - \bpi^{(t)}_i\partial_u X^i - \z^{(t)} \partial_u\s \\
&+\frac{1}{\s}\zL^{ij}(X)\bpi^{(u)}_i\bpi^{(t)}_j + E^j(X)\big( \z^{(u)}\bpi^{(t)}_j - \z^{(t)}\bpi^{(u)}_j\big) \Big)\xd u\, \xd t\,.\nn
\end{align}
We interpret the model as a Hamiltonian system on a Banach manifold. The phase space is the space $\cP (\sT^*\Lst)\simeq \sT^*\cP (\Lst)$ of $C^1$ paths on $T^*\Lst$ such that the projected curves on $\Lst$ are of class $C^2$. This is a Banach manifold as explained in \cite{Crainic:2003}. The coefficients of the 1-form $\xd u$ in the first two equations in the system (\ref{ELhomog}), appear here as constraints and the forms $(\bpi^{(t)}_j, \z^{(t)})$ are Lagrange multipliers. From another point of view, the points of the phase space can be seen as morphisms of vector bundles $\eta = a\,\xd u\,\colon\, \sT I_u \,\rightarrow \, \sT^*\Lst$ so that the constraints appearing in the action functional mean that they are Lie algebroid morphisms. Interpreting these morphisms as curves, the curves that satisfy the constraints are called $A$-paths, following the nomenclature in \cite{Crainic:2003} and they form an embedded submanifold $\A (\Lst) \subset \sT^*\cP (\Lst)$.
\begin{definition}
Given a Lie algebroid $A$ with vector bundle projection $\pi$ and an interval $I_u=\left[ 0,1 \right]$, an $A$-path is a curve $a\,\colon\,I_u\,\rightarrow\,A$ on $A$ such that
\[
\frac{\xd}{\xd u}(\pi(a(u))) = \rho(a(u))\,,
\]
which implies that the vector bundle morphism $\eta = a \,\xd u \,\,\colon \, \sT I_u\,\rightarrow\, A$ is a Lie algebroid morphism.
\end{definition}
\mn The action functional contains the term,
\begin{equation*}
\int_0^1\Big( \bpi^{(u)}_i\pa_t X^i + \z^{(u)}\pa_t\s \Big) \xd u\,,
\end{equation*}
that we interpret as the tautological 1-form on $T^*\cP (\Lst)$, so the corresponding canonical symplectic structure $\omega_0$ provides the symplectic 2-form for this Hamiltonian description. Using a connection $\tilde{\nabla}$on $\Lst$ one can express the vectors tangent to the constrained submanifold $\A (\Lst)$ of $A$-paths in terms of a pair $(b, u)$ of a 1-form $b$ and a vector field $u$ along the $A$-path $a$ satisfying the following condition\cite{Crainic:2004}
\[
\zP^{\#}(b) = \zP^{\#}(\tilde{\nabla}_X a) - \left[ \zP^{\#}(a) , X \right]_{\Lst}\,,
\]
where $\left[ \cdot, \cdot \right]_{\Lst}$ denotes the bracket on the Lie algebroid $\sT^*\Lst$. The Hamiltonian function
\[
H_{(\bpi^t,\z^t)}= \int_0^1 \Big(\bpi^{(t)}_i \pa_u X^i - \z^{(t)} \pa_u\s +\frac{1}{\s}\zL^{ij}(X)\bpi^{(u)}_i \bpi^{(t)}_j + E^j(X)\big( \z^{(u)} \bpi^{(t)}_j - \z^{(t)} \bpi^{(u)}_j \big) \Big)\xd u
\]
defines a moment map for an action of the infinite-dimensional Lie algebra of 1-forms with the Lie algebra bracket $\left[ \cdot, \cdot \right]_{\Lst}$. Therefore, there is an associated integrable distribution $\cD \subset \sT\A (\Lst)$ which is generated by the tangent vectors $(b,u)\in \sT_a\A(\Lst)$ that can be written as
\[
u=\zP^{\#}(\beta)\,,\quad b=\nabla_{\zP^{\#}(\beta)} a + \left[ a, \beta \right]_{\Lst}\,.
\]
The quotient of the constrained submanifold by the action of the Lie algebra is a finite-dimensional topological manifold which is smooth if the Lie algebroid is integrable (see \cite{Crainic:2003} for the obstructions to the existence of a smooth structure on the quotient). It coincides with the manifold of equivalence classes of $A$-paths under the $A$-homotopy relation (see the definition below). In particular, this equivalence relation does not depend on the chosen connection on $\Lst$ \cite{Cattaneo:2001, Crainic:2004}.
\begin{definition}
A variation of $A$-paths is a family of $A$-paths $a_{t}\,\colon\,I_u\ti I_t\,\rightarrow \, A$ such that the projected family of curves $\pi(a_t)$ has fixed end-points. Alternatively, a variation $a_t$ of $A$-paths can be seen as a Lie algebroid morphism $a \xd u + \beta \xd t\,\colon\,\sT I_u\ti \sT I_t \,\rightarrow\, A$. We say that two paths $a_0$ and $a_1$ are $A$-homotopy equivalent if there is a variation of $A$-paths such that $a(u,0)=a_0$ and $a(u,1)=a_1$ with $\beta(0,t)=0=\beta(1,t)$.
\end{definition}
\no When the Poisson manifold $\Lst$ is integrable, the quotient manifold of $A$-paths under $A$-homotopies is a source simply-connected Lie groupoid $\cG$, with the symplectic structure $\omega_c$ obtained by reduction from the canonical symplectic form on the phase space. In the special case of an $\Rt$-algebroid the action $\hs$ of the group $\Rt$ on $\sT^*\Lst$ by Lie algebroid morphisms integrates to an action on the symplectic groupoid.

Indeed, for any $\nu\in \Rt$, if $a$ is an $A$-path, then the map $\hs \circ a\,\colon\,\sT I\,\rightarrow\,\sT^*\Lst$ defines another $A$-path. Moreover, if two $A$-paths $a_0$ and $a_1$ are $A$-homotopy equivalent, then the map $(\hs \circ a)\, \xd u + (\hs \circ \beta)\, \xd t$ is a $A$-homotopy between $\hs \circ a_0$ and $\hs \circ a_1$ because $\hs$ is a Lie algebroid homomorphism. Therefore, the symplectic groupoid $\cG$ is endowed with an $\Rt$-action $\mathfrak{h}\,\colon\, \Rt\times \cG\,\rightarrow\,\cG$ and it follows straightforwardly from the operations on the symplectic groupoid that the maps $\mathfrak{h}_{\nu}$ are groupoid homomorphisms, so that $\cG$ is an $\Rt$-groupoid. This result is in agreement with \cite[Theorem 4.7]{Bruce:2017}.

Eventually, since the canonical symplectic structure $\omega_{\Lst}$ on $\sT^*\Lst$ is homogeneous of degree 1, \cite[Proposition 8]{Grabowska:2023}, also the symplectic form $\omega_0$ on the phase space $\sT^*\cP(\Lst)$ is homogeneous of degree 1 and, eventually, the symplectic form $\omega_c$ on $\cG$ is also homogeneous of degree 1. Indeed, the symplectic structure on $\cG$ is obtained as follows:
\[
\omega_c\big(\hat{a})(\hat{\xi}_1,\hat{\xi}_2\big) = \int_0^1 \omega_{\Lst}\big(a(u)\big)\big(\xi_1(u),\xi_2(u)\big) \xd u\,,
\]
where $\hat{a}$ represents an equivalence class of $A$-paths, $\hat{\xi}_i$, with $j=1,2$, are equivalence classes of vectors, whereas, the corresponding expressions without the hat are generic elements of the equivalence class. Therefore, we have
\[
\begin{split}
\omega_c\big(\mathfrak{h}_{\nu}(\hat{a})\big)\big(\sT(\mathfrak{h}_{\nu})(\hat{\xi}_1),
\sT(\mathfrak{h}_{\nu})(\hat{\xi}_2)\big) &= \int_0^1 \omega_{\Lst}\Big(\hs_{\nu}\big(a(u)\big)\Big)\Big(\sT(\hs_{\nu})\big(\xi_1(u)\big),
\sT(\hs_{\nu})\big(\xi_2(u)\big)\Big) \xd u \\
&= \nu \int_0^1 \omega_{\Lst}\big(a(u)\big)\big(\xi_1(u),\xi_2(u)\big) \xd u\,.
\end{split}
\]
In the first equality, we have used the definition of the action on the quotient, whereas, in the second equality, we have used the homogeneity of the canonical symplectic structure on $\sT^*\Lst$. Summarizing these observations, we get the following.
\begin{theorem}
If $\Lst$ is an integrable Poisson manifold, the integrating symplectic groupoid $\cG$ is an $\Rt$-groupoid with a contact structure, i.e., it is endowed with an $\Rt$-action $\mathfrak{h}\,\colon\, \Rt\times \cG\,\rightarrow\,\cG$, where the maps $\mathfrak{h}_{\nu}$ are groupoid morphisms, and the symplectic structure is homogeneous of degree 1 with respect to this action.
\end{theorem}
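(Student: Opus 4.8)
The plan is to derive the three asserted properties of $\cG$ --- existence of the $\Rt$-action $\mathfrak{h}$, the fact that each $\mathfrak{h}_\nu$ is a groupoid morphism, and degree-$1$ homogeneity of $\omega_c$ --- directly from the behaviour of the Lie algebroid action $\hs$ on $\sT^*\Lst$ under the standard realization of $\cG$ as the space of $A$-paths modulo $A$-homotopy. All three follow from the single structural fact, established earlier, that $(\sT^*\Lst,\hs)$ is a Jacobi algebroid, i.e.\ each $\hs_\nu$ is a Lie algebroid automorphism, together with the degree-$1$ homogeneity of the canonical symplectic form $\omega_{\Lst}$.

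First I would construct the action. Because $\hs_\nu$ is a Lie algebroid automorphism, post-composition $a\mapsto\hs_\nu\circ a$ sends $A$-paths to $A$-paths: the defining relation $\frac{\xd}{\xd u}\pi(a(u))=\rho(a(u))$ is preserved since $\hs_\nu$ intertwines the anchors and covers the induced diffeomorphism of the base $\Lst$. The same reasoning applied to a homotopy, which is itself a Lie algebroid morphism $a\,\xd u+\beta\,\xd t$, shows that $A$-homotopies are carried to $A$-homotopies with the boundary conditions $\beta(0,t)=0=\beta(1,t)$ intact; hence $\hs_\nu$ descends to a well-defined map $\mathfrak{h}_\nu\colon\cG\to\cG$, and $\nu\mapsto\mathfrak{h}_\nu$ is a group action because $\hs$ is one on $\sT^*\Lst$.

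Next I would verify that each $\mathfrak{h}_\nu$ respects the groupoid structure. The operations on $\cG$ --- source and target (evaluation at the endpoints followed by the projection $\zp_\Lst$), units (trivial $A$-paths), inversion (path reversal), and multiplication (concatenation of $A$-paths) --- are all built functorially from the Lie algebroid $\sT^*\Lst$ and its anchor, and post-composition with the Lie algebroid morphism $\hs_\nu$ commutes with each of them; equivariance of source and target uses that $\hs_\nu$ covers the base action. Thus $\mathfrak{h}_\nu$ is a groupoid morphism. For the homogeneity of $\omega_c$, I would start from the reduction formula
\[
\omega_c(\hat a)(\hat\xi_1,\hat\xi_2)=\int_0^1\omega_{\Lst}(a(u))(\xi_1(u),\xi_2(u))\,\xd u,
\]
and pull back along $\mathfrak{h}_\nu$. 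Since $\mathfrak{h}_\nu$ is induced by post-composition with $\hs_\nu$ and $\hs_\nu^*\omega_{\Lst}=\nu\,\omega_{\Lst}$, the scalar $\nu$ factors out of the integrand pointwise, giving $\mathfrak{h}_\nu^*\omega_c=\nu\,\omega_c$, i.e.\ degree-$1$ homogeneity.

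The homogeneity computation is routine once the reduction formula for $\omega_c$ is granted, so the main obstacle is the descent to the quotient: one must ensure that $\mathfrak{h}_\nu$ is not merely set-theoretically well defined but a smooth map of the manifold $\cG$, and that it is so jointly in $(\nu,\hat a)$ to yield a genuine Lie group action. This is exactly where the integrability hypothesis is used, guaranteeing that $\cG$ is a smooth, source-simply-connected Lie groupoid rather than only a topological quotient; smoothness of $\mathfrak{h}$ then follows from smoothness of $\hs$ together with the smooth structure on the $A$-path space. Finally, to read off the full \emph{contact} conclusion one notes that the $\Rt$-action is free, inherited from the freeness of $\hs_\nu$ on $\sT^*\Lst$ (which holds because the fibre coordinate $s$ is nowhere zero on $\Lst$), so that $\mathfrak{h}$ together with the degree-$1$ symplectic form $\omega_c$ endows $\cG$ with the structure of a contact groupoid.
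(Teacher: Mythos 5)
Your proposal is correct and follows essentially the same route as the paper: post-composition with the Lie algebroid automorphisms $\hs_\nu$ carries $A$-paths to $A$-paths and $A$-homotopies to $A$-homotopies, hence descends to an $\Rt$-action by groupoid morphisms on $\cG$, and the degree-$1$ homogeneity of $\omega_c$ is read off from the same integral reduction formula combined with the homogeneity of $\omega_{\Lst}$ under $\hs$. Your additional remarks on smoothness of the descended action and on freeness are sensible refinements, but they do not alter the argument.
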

\no This means that the groupoid $\cG$ integrating the Lie algebroid $\sT^*\Lst$ endowed with the 2-form $\omega_c$ is a contact groupoid in the sense of \cite{Bruce:2017}.

\begin{example}\label{ex_1}
As a simple example of the previous construction, we can consider the trivial Jacobi bundle $L^{\ti}=\R^{2k+1} \times \Rt\simeq \Lst$ and $\zS=I_u\times I_t$. As we have already noticed, in this case, we have global coordinate functions $(x^i , s)$, with $i=0,\cdots ,2k$ on $\Lst$, whereas on $\zS$ we have coordinate functions $(u, t)$. Since the Jacobi bundle is trivial, as already explained in Section \ref{sec:Jacobi_bundles}, the pair ($\zL$, $E$) represents the Jacobi bracket globally. In particular, the bracket between two homogeneous functions of degree 1 will be
\[
\left\lbrace f_1s, f_2s\right\rbrace = \Big(\zL(\xd f_1, \xd f_2) + f_1 E(f_2)  - f_2 E(f_1)\Big)\cdot s = \left\lbrace f_1, f_2\right\rbrace_J \cdot s \,.
\]
Let us consider the case of a homogeneous Poisson structure that is invertible, so that it can be expressed in terms of the following tensor fields:
\[
\zL = \sum_{j=1}^k \Big(\partial_{x^j} - x^{k+j}\partial_{x^0} \Big) \we \partial_{x^{k+j}}\,,\quad E = \partial_{x^0}\,.
\]
The fields of the Jacobi sigma model are the functions $(X^i , \s)$ and the 1-forms
$$\bpi^{(u)}_i \xd u + \bpi^{(t)}_i \xd t\,,\ \z^{(u)} \xd u + \z^{(t)} \xd t$$
on $\zS$. Then, the action functional for the homogeneous approach can be written as
\[
\begin{split}
S(X^i , \s, \bpi  , \z) = \int_\zS\Big(\bpi^{(u)}_i\pa_t X^i &+\z^{(u)} \pa_u\s - \bpi^{(t)}_i \pa_u X^i - \z^{(t)} \pa_u\s + \\
&+\frac{1}{\s}\zL^{ij}(X)\bpi^{(u)}_i \bpi^{(t)}_j+E^j(X)\big(\z^{(u)} \bpi^{(t)}_j - \z^{(t)} \bpi^{(u)}_j\big) \Big)\xd u \, \xd t\,,
\end{split}
\]
from which one derives the equations of motion in Eq.\eqref{ELhomog}. Since the Poisson bracket is invertible, the symplectic groupoid corresponds to the monodromy groupoid of $\Lst$, which is the space of equivalence classes of paths in $\Lst$ under the standard homotopy equivalence relation. The quotient of $T^*{\Lst}$ under the $\Rt$ action is identified with the bundle $\sJ^1 L\simeq\sT^*\R^{2k+1}\times \R$ endowed with the Lie algebroid bracket in (\ref{J1-bracket}). Introducing the reduced data $(X^j,\s)$ and $(\eta_i = \bpi_i/\s,\z)$, the $A$-paths on $\sT^*\Lst$ satisfy the equations
\beas
&&\partial_{u}(X^0) = \z\; \\
&&{\partial_u}(X^j) = \zL^{jl}\eta_l\,,\quad j=1\,,\cdots\,,2k\,;\\
&&{\partial_u}\big(\log(s)\big) = E^j\eta_j = \eta_0\,,
\eeas
so that one recognizes that the first two sets of equations determine an $A$-path on the Lie algebroid $\sT^*\R^{2k+1}\ti \R$, whereas the last equation can be integrated to obtain
\[
s(1) = s(0)\exp\Big( \int_0^1 \big(E^j(X(u))\,\eta_j(X(u))\big)\xd u \Big)\,.
\]
As shown in \cite{Crainic:2006}, the integral on the right-hand side is a function of an $A$-path on $\sJ^1L$ which depends only on its homotopy class and can be identified with a multiplicative function of the groupoid $\cG_c$ integrating $\sJ^1L$, which means:
\[
\int_0^1 \big(E^j(X(u))\,\eta_j(X(u))\big)\xd u = f(\left[a\right])\,\quad \mathrm{and}\; f(\left[a_2 \circ a_1 \right]) = f(\left[a_2\right]) + f(\left[a_1\right])\,,
\]
where $\left[ a \right]$ denotes a homotopy class of $A$-paths on $\sJ^1 L$.

Therefore, the groupoid $\cG$ of equivalence classes of $A$-paths on $\sT^*\Lst$ under homotopies can be identified with the manifold $\cG \simeq \Rt\ti \R_+ \ti \R^{2k+1}\ti \R^{2k+1}$ which is a groupoid with the following source and target maps $\alpha\,,\beta\,\colon\,\cG \,\rightarrow \, \Lst$
\[
\alpha(\s,t , X_l^j, X_r^j)= (s, X^j_l)\,,\quad \beta(\s,t,X^j_l,X^j_r)=(t\s, X^j_r)\,,
\]
and the multiplication map defined by
\[
(t_1\s_1 , t_2 , X_r^j, Y^j)\circ (\s_1,t_1 , X_l^j, X_r^j) = (s_1, t_2 t_1, X^j_l, Y^j)\,.
\]
The action of the group $\Rt$ by the map $\hs_{\nu}$ on the Lie algebroid gives rise to the following action, say $\mathfrak{h}_{\nu}$, on the groupoid $\cG$:
\[
\mathfrak{h}_{\nu} (\s,t , X_l^j, X_r^j) = (\nu \s, t, X_l^j, X_r^j)\,,
\]
and one can check straightforwardly that it is an action via groupoid morphisms. Consequently, the quotient groupoid is identified with the groupoid $\cG_c = \R_+ \ti \R^{2k+1}\ti \R^{2k+1}$. The space $\Lst$ is endowed with a homogeneous symplectic structure $\omega_0 = - s \xd \theta + \xd s \we \theta$ where $\theta = \xd x^0 - \sum_{j=1}^{k}x^{k+j}\xd x^j$, so that, the symplectic groupoid $\cG$ will be endowed with the multiplicative symplectic structure
\[
\omega = \alpha^*(\omega_0) - \beta^*(\omega_0) = s\cdot\xd\big(\alpha^*(\theta)\big) + \xd s \we \alpha^*(\theta) - \xd(ts) \we \beta^*(\theta) + st\cdot \xd\big(\beta^*(\theta)\big)\,,
\]
which is homogeneous of degree 1. This symplectic structure is associated with a trivial $\Rt$-principal bundle which is a submanifold of $\sT^*\cG_c$. In particular, it can be identified with a contact 1-form $\theta_c$ on $\cG_c$ which is written as
\[
\theta_c = \xd X_l^0 - \sum_{j=1}^{k}X_l^{k+j}\xd X_l^j - t \Big( \xd X_r^0 - \sum_{j=1}^{k}X_r^{k+j}\xd X_r^j \Big)\,.
\]
Therefore, the groupoid $\cG_c$ is the groupoid that integrates the Lie algebroid $\sJ^1 L$ and it is endowed with a global contact 1-form. This reduced groupoid is a contact groupoid in the terminology used by some authors \cite{Crainic:2006}, but the corresponding definition is substantially more complicated than the `homogeneous' one in \cite{Bruce:2017}.
\end{example}

\subsection{Jacobi algebroid approach}
To put some `Jacobi flavour' to the above homogeneous Poisson sigma model, let us use the fact that $\sT^*\Lst$ is not only a Lie algebroid but a Jacobi algebroid. One could expect that taking the homogeneity into account, our fields should not be just vector bundle morphisms $\zF$, but morphisms of $\Rt$-vector bundles, with the natural domain being $E=\sT\zS\ti\Rt$ -- the canonical Jacobi algebroid for a manifold $\zS$, associated with the trivial $\Rt$-principal bundle $P=\zS\ti\Rt$ and the reduced Lie algebroid $E_0=\sT\zS$.
\be\label{Psi}
\xymatrix@C+30pt@R+10pt{
\sT\zS\ti\Rt \ar[r]^{\Psi}\ar[d]_{\zt_P} & \sT^*\Lst\ar[d]^{\zp_\Lst} \\
P=\zS\ti\Rt \ar[r]^{\zc} & \Lst\,.}
\ee
In the above, $\zc:\zS\ti\Rt\to \Lst$ is a morphism of $\Rt$-principal bundles, covering a map $\zf_0:\zS\to M$.
But the $\Rt$-vector bundle morphism (\ref{Psi}) is completely determined by a VB-morphism
\be\label{Phi}
\xymatrix@C+30pt@R+10pt{
\sT\zS \ar[r]^{\zF}\ar[d]_{\zt_\zS} & \sT^*\Lst\ar[d]^{\zp_\Lst} \\
\zS \ar[r]^{\zf} & \Lst\,,}
\ee

\no where $\zF(v)=\Psi(v,1)$, $\zf(\zg)=\zc(\zg,1)$, and $\zf:\zS\to\Lst$ covers $\zf_0$. In the decomposition $\sT^*\Lst=\sJ^1L\ti_M\Lst$, we have
$$\zF(v)=\big(\zF_0(v),\zf(v)\big),$$
where
$$\zF_0:\sT\zS\to\sJ^1L,\quad \zF_0=\bt^*\circ\zF,$$
is a VB-morphism covering $\zf_0:\zS\to M$.

\mn Conversely, any morphism (\ref{Phi}) induces a morphism (\ref{Psi}) of $\Rt$-vector bundles, defined by
\be\label{PF}\Psi(v,s)=\h^*_s\big(\zF(v)\big)=\Big(\zF_0(v),\sh_s\big(\zf(v)\big)\Big).
\ee
It covers the map
$$\zc:\zS\ti\Rt\to\Lst,\quad \zc(\zg,s)=\sh_s\big(\zf(\zg)\big).$$
This shows that the fields $\Psi$ (\ref{Psi}) are in a one-to-one correspondence with the fields $\zF$ (\ref{Phi}), which we have used as the fields in the homogeneous approach. A fundamental observation in this context is the following.
\begin{theorem}\label{the}
The VB-morphism (\ref{Phi}) is a Lie algebroid morphism if and only if (\ref{Psi}) is a morphism of Jacobi algebroids.
\end{theorem}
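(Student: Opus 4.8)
The plan is to strip the statement down to a pure Lie algebroid assertion and then exploit the \emph{foliated} nature of the canonical Jacobi algebroid $\sT\zS\ti\Rt$. First I would note that the map $\Psi$ produced from $\zF$ by (\ref{PF}) is \emph{automatically} a morphism of $\Rt$-vector bundles: since $\h^*$ is an action, $\Psi(v,\nu s)=\h^*_{\nu s}(\zF(v))=\h^*_\nu\big(\Psi(v,s)\big)$, and the covered map $\zc(\zg,s)=\sh_s(\zf(\zg))$ is an $\Rt$-principal bundle morphism for the same reason. Hence, by the definition of a morphism of Jacobi algebroids ($=$ Lie algebroid morphism that is also an $\Rt$-bundle morphism), the entire content of the theorem is the equivalence ``$\zF$ is a Lie algebroid morphism $\iff$ $\Psi$ is a Lie algebroid morphism.''

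For the forward implication I would write $\zF=\Psi\circ\zk$, where $\zk:\sT\zS\to\sT\zS\ti\Rt$, $\zk(v)=(v,1)$, is the inclusion of the slice over $s=1$; indeed $\Psi(v,1)=\h^*_1(\zF(v))=\zF(v)$. The point is that $\zk$ is itself a Lie algebroid morphism. This follows from Example \ref{excan}: the anchor of the canonical Jacobi algebroid is $\zr_1(v,g)=(v,0_g)$, which is tangent to the slices $\zS\ti\{g\}$, so each slice carries a sub-Lie-algebroid canonically isomorphic to $\sT\zS$ and $\zk$ intertwines the respective de Rham differentials. Composition of Lie algebroid morphisms then gives that $\zF$ is one as soon as $\Psi$ is.

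For the converse I would use that each $\h^*_s$ is a Lie algebroid automorphism of $\sT^*\Lst$ (this is exactly what makes $(\sT^*\Lst,\h^*)$ a Jacobi algebroid). Consequently $\Psi_s:=\h^*_s\circ\zF:\sT\zS\to\sT^*\Lst$ is a Lie algebroid morphism for each fixed $s$, with base map $\sh_s\circ\zf=\zc(\cdot,s)$. It then remains to prove that a VB-morphism $\sT\zS\ti\Rt\to\sT^*\Lst$ is a Lie algebroid morphism precisely when all its slice-restrictions $\Psi_s$ are. Here I would again invoke the foliated structure: the Lie algebroid de Rham differential $\xd_1$ of $\sT\zS\ti\Rt$ equals the leafwise de Rham differential along $\zS$, treating the fibre coordinate $s$ as a parameter (because the anchor and bracket are tangent to the slices). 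Writing the morphism conditions in the form (\ref{moreq}), namely $\xd_1\bar x^a=\bar\zr^a_i\,\bar y^i$ and $\xd_1\bar y^k=\tfrac12\bar c^k_{ji}\,\bar y^i\we\bar y^j$ in the pulled-back coordinates of $\sT^*\Lst$, these equations decouple over $s$; holding them for every leaf (i.e. for every $\Psi_s$) is then equivalent to holding them for $\Psi$, which settles the implication.

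The step I expect to be the main obstacle is making this ``decoupling over $s$'' genuinely rigorous, that is, proving cleanly that the de Rham differential of $\sT\zS\ti\Rt$ is the leafwise one and that fibrewise-linear functions restrict compatibly to slices; this is the only place that truly uses the product form of the canonical Jacobi algebroid rather than soft functoriality. A more formal alternative, which sidesteps this, is to feed the decompositions $\sT^*\Lst=\sJ^1L\ti_M\Lst$ and $\sT\zS\ti\Rt$ (with $E^1_0=\sT\zS$) directly into Proposition \ref{propmor}: one identifies $\Psi_0=\bt^*\circ\zF$, the reduced anchors $\zr_1^\#(X)=(X,0)$ and $\zr_2^\#=J^\#$, and then checks the three reduced conditions (that $\Psi_0$ be a Lie algebroid morphism, that $\sA\zc$ be a Lie algebroid morphism, and the anchor-compatibility (\ref{cond})) against the single condition that $\zF$ be a Lie algebroid morphism. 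I expect the foliation argument to be the shorter of the two, with Proposition \ref{propmor} serving as a useful cross-check.
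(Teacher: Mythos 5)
Your proof is correct, but its main line is genuinely different from the paper's. The paper only treats in detail the direction ``$\zF$ a Lie algebroid morphism $\Rightarrow$ $\Psi$ a Jacobi algebroid morphism'' (declaring the converse trivial), and does so by feeding the decomposition $\sT^*\Lst=\sJ^1L\ti_M\Lst$ into Proposition \ref{propmor}: since $\zF_0=\bt^*\circ\zF$ is automatically a Lie algebroid morphism, everything reduces to the single anchor-intertwining identity $\zP^\#\circ\Psi=\sT\zc\circ\zr_E$, which follows from $\zP^\#\circ\h^*_s=\hh_s\circ\zP^\#$ (formula (\ref{commutat})) together with the anchor condition $\sT\zf=\zP^\#\circ\zF$. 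You instead work directly with the definition of a Lie algebroid morphism via de Rham differentials, observing that $\sT\zS\ti\Rt$ is the foliation algebroid of the slicing of $\zS\ti\Rt$, so that the morphism condition for $\Psi$ decouples into the family of conditions for the slices $\Psi_s=\h^*_s\circ\zF$; the homogeneity of $\zP$ enters through the fact, also established in the paper, that each $\h^*_s$ is a Lie algebroid automorphism of $\sT^*\Lst$, rather than only through its anchor-level consequence (\ref{commutat}). The decoupling step you flag as the main obstacle is in fact unproblematic: in degrees $0$ and $1$, which determine the differential, both $\xd_E\circ\Psi^*$ and $\Psi^*\circ\xd_{\dt\zP}$ restrict to the slice over a fixed $s$ as the corresponding operators for $\Psi_s$, because $\xd_E$ differentiates only along $\zS$. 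Your route buys a uniform treatment of both implications, including an honest argument (composition with the slice inclusion $\zk$, itself a Lie algebroid morphism) for the direction the paper dismisses as trivial, and it avoids Proposition \ref{propmor}; the paper's route is shorter because that proposition has already packaged the reduction. Your closing alternative of checking the three conditions of Proposition \ref{propmor} is essentially the paper's own proof.
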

\begin{proof} The `only if' part is trivial, so suppose $\zF$ is a Lie algebroid morphism. As $\bt^*$ is a Lie algebroid morphism, also $\zF_0:\sT\zS\to\sJ^1L$ is a Lie algebroid morphism. In view of Proposition \ref{propmor}, it suffices to prove that the anchor maps intertwine $\Psi$ with $\sT\psi$, i.e., the diagram
\be\label{Psi1}
\xymatrix@C+40pt@R+20pt{
E=\sT\zS\ti\Rt \ar[r]^{\Psi}\ar[d]_{\zr_E} & \sT^*\Lst\ar[d]^{\zP^\#} \\
\sT P=\sT\zS\ti\sT\Rt \ar[r]^/15pt/{\sT\zc} & \sT\Lst}
\ee
is commutative. But the anchor map $\zr_E$ is (see Example \ref{excan}) $\zr_E(v,s)=(v,0_s)$ and
$$\sT\zc(v,0_s)=\sT\sh_s\big(\sT\zf(v)\big)=\hh_s\big(\sT\zf(v)\big).$$
On the other hand (see (\ref{PF})),
$$\zP^\#\circ\Psi(v,s)=(\zP^\#\circ\h^*_s)(\zF(v)).$$
But (cf. \ref{commutat}) $\zP^\#\circ\h^*_s=\hh_s\circ\zP^\#$,
so
$$\zP^\#\circ\Psi(v,s)=(\hh_s\circ\zP^\#)(\zF(v)).$$
Hence, the commutativity of (\ref{Psi1}) is equivalent to
$$\hh_s\circ\sT\zf=\hh_s\circ\zP^\#\circ\zF$$
and, consequently, to $\sT\zf=\zP^\#\circ\zF$. But this is satisfied, as $\zF$ is a Lie algebroid morphism.

\end{proof}
\no The above theorem shows that the Jacobi algebroid version of Jacobi sigma-models is equivalent to the homogeneous version if we use the same action functional.
\begin{definition}[alternative]
A \emph{Jacobi sigma model} associated with a Jacobi structure $J$ on a line bundle $\zp:L\to M$ is the sigma model in which fields are $\Rt$-vector bundle morphisms (\ref{Psi}) and the action functional is
\[S\big(\Psi\big)=\int_\zS \La\zF\,\overset{\we}{,}\,\Big(\sT\zf+\half\zP^\#\circ\zF\Big)\Ra,
\]
where $\zF=\Psi(\cdot,1):\sT\zS\to\sT^*\Lst$, $\zf=\zc(\cdot,1):\zS\to\Lst$, and $\zP$ is the homogeneous Poisson tensor on $\Lst$ associated with $J$ (cf. (\ref{homP})). It is exactly like (\ref{af}), so Theorem \ref{the} immediately implies the following.
\end{definition}
\begin{theorem}
Solutions of the field equations of this Jacobi sigma model are those $\Rt$-vector bundle morphisms which are simultaneously Jacobi algebroid morphisms.
\end{theorem}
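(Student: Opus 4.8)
The plan is to reduce the claim to two ingredients that are already in place: the variational analysis of the homogeneous Poisson sigma model, and Theorem~\ref{the}. First I would observe that, by construction, the action functional $S(\Psi)$ in the alternative definition is literally the functional (\ref{af}) evaluated on $\zF=\Psi(\cdot,1)$; that is, $S(\Psi)$ depends on $\Psi$ only through its restriction $\zF$. Moreover, the passage $\Psi\mapsto\zF=\Psi(\cdot,1)$ is a bijection between the admissible $\Rt$-vector bundle morphisms (\ref{Psi}) and the VB-morphisms (\ref{Phi}), with explicit inverse given by (\ref{PF}). Hence the variational problem for $S(\Psi)$ over the space of fields (\ref{Psi}) is equivalent to the variational problem for $S(\zF)$ over the fields (\ref{Phi}), and the two functionals have the same critical points under this correspondence.

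Next I would recall that the Euler--Lagrange equations of $S(\zF)$ are exactly the system (\ref{ELhomog}), whose content is precisely that $\zF\colon\sT\zS\to\sT^*\Lst$ is a Lie algebroid morphism for the Lie algebroid structure on $\sT^*\Lst$ induced by the homogeneous Poisson tensor $\zP$ (this is the standard Poisson-sigma-model statement, already used in the homogeneous approach). Thus the solutions $\zF$ of the field equations are characterized as the Lie algebroid morphisms $\sT\zS\to\sT^*\Lst$.

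Finally I would invoke Theorem~\ref{the}: under the correspondence $\Psi\leftrightarrow\zF$, the VB-morphism $\zF$ is a Lie algebroid morphism if and only if $\Psi$ is a morphism of Jacobi algebroids. Chaining the three steps, the critical fields $\Psi$ of $S(\Psi)$ are exactly those $\Rt$-vector bundle morphisms that are simultaneously Jacobi algebroid morphisms, which is the assertion.

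The only point requiring (minor) care, and the nearest thing to an obstacle, is the first step: one must verify that restricting the field space to $\Rt$-equivariant morphisms $\Psi$ does not shrink the set of critical points, i.e., that the bijection $\Psi\leftrightarrow\zF$ maps onto the full field space used in the homogeneous approach and is compatible with admissible variations. Because (\ref{PF}) is an explicit isomorphism of field spaces and $S$ factors through $\zF$, this compatibility holds automatically, so no genuine difficulty arises and the proof is indeed immediate from Theorem~\ref{the}.
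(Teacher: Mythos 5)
Your proposal is correct and follows essentially the same route as the paper: the authors likewise note that the functional $S(\Psi)$ is exactly the homogeneous functional (\ref{af}) evaluated on $\zF=\Psi(\cdot,1)$, so that the critical points are the Lie algebroid morphisms $\zF$, and then conclude immediately by Theorem~\ref{the}. Your extra remark that the bijection $\Psi\leftrightarrow\zF$ given by (\ref{PF}) identifies the two field spaces and their admissible variations is the (implicit) justification the paper relies on as well.
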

\no Note that the fact that $\Psi$ is a Jacobi algebroid morphism means that the following commutative diagram, which is just (\ref{morr}) in the present case, consists of Lie algebroid morphisms.
\be\xymatrix@R-2pt{
 & \sT\zS\ti\Rt \ar[rrr]^{\Psi} \ar[dr]^{\zr}
 \ar[ddl]_{\zt_1}
 & & & \sT^*\Lst\ar[dr]^{\zP^\#}\ar[ddl]_/-10pt/{\bt^*}
 & \\
 & & \sT\zS\ti\sT\Rt\ar[rrr]^/-20pt/{\sT\zc}\ar[ddl]_/-20pt/{\hzt_1}
 & & & \sT\Lst \ar[ddl]_{\hzt}\\
 \sT\zS\ar[rrr]^/+30pt/{\zF_0}\ar[dr]^{\on{id}\ti 0}
 & & & \sJ^1L\ar[dr]^{J^\#} & &  \\
 & \sT\zS\ti\R\ar[rrr]^{\sD\zf}& & & \sD L &
}\label{morr1}
\ee
Here, $\zr(v,s)=(v,0_s)$, $\zt_1(v,s)=v$, and $\hzt_1\big(v,(s,r)\big)=(v,r)$. The condition (\ref{cond}) takes the form
\be\label{cond1}
\sD_0\zf := \sD\zf\circ (\on{id}\ti 0) = J^\#\circ\zF_0\,.
\ee

\subsection{The reduced model}
If one wants to use the original data and not the homogeneous Poisson picture for a Jacobi structure, the most appropriate approach seems to be a model referring only to the latter. We already know that the manifold $M$ with its Poisson structure and the Lie algebroids $\sT^*M$ and $\sT M$, being the main geometrical objects in the Poisson sigma model, are replaced in the Jacobi picture by the line bundle $L\to M$, the Lie algebroid $\sJ^1L$, and the gauge algebroid $\sD L$. Consequently, the algebra $\Ci(M)$ of functions on $M$ is replaced by the $\Ci(M)$-module $\Sec(L)$, differentials of functions are replaced by jets of sections of $L$, and vector fields acting on functions are replaced by first-order differential operators action on sections of $L$. Like a Poisson structure on $M$ is encoded in a Lie algebroid morphism $\zL^\#:\sT^*M\to\sT M$, the Jacobi bundle structure on $L$ is encoded in the Lie algebroid morphism $J^\#:\sJ^1L\to\sD L$.

\mn Note that the smooth map $\zf:\zS\to\Lst\subset L^*$ covering $\zf_0:\zS\to M$, present in the homogeneous Poisson picture, is completely equivalent to a morphism $\phi:\zS\ti\R\to L$ in the category of line bundles (see Definition \ref{d1}), covering $\phi_0=\zf_0$. Indeed, $\zf$ and $\phi$ are related by
$$\Bk{\phi(\zg,1)}{\zf(\zg)}=1.$$
Alternatively, $\phi$ defines a trivialization of the pullback bundle $\phi_0^*L$, so we can identify sections of this pullback bundle with functions on $\zS$. More precisely, in pairwise dual local coordinates $(x^i,z)$ and $(x^i,s)$ on $L$ and $L^*$, respectively, the function $f(z)$ on $\zS$, associated with the local section $z=z\big(\phi_0(\zg)\big)$ of the pullback bundle $\phi_0^*L$, is given by
\be\label{funct}
f(z)=\s\cdot z,
\ee
where $\s=s\circ\zf$.

\mn Our fields will be pairs $(\zF_0,\phi)$, where $\zF_0:\sT\zS\to\sJ^1L$ and $\phi:\zS\ti\R\to L$ are VB-morphisms, the second assumed to be regular (i.e., a morphism in the category of line bundles), both covering the same map $\phi_0:\zS\to M$. Alternatively, the fields are VB-morphisms $(\zF_0,\zf)$ of $\sT\zS$ into the vector bundle $\sJ^1L\ti_M\Lst$ over $\Lst$.
\mn As we already know, the morphism $\phi$ induces a Lie algebroid morphism
\[\sD\phi:\sD(\zS\ti\R)=\sT\zS\ti\R\to\sD L
\]
which, reduced to $\sT\zS$, yields a Lie algebroid morphism
\[\sD_0\phi:\sT\zS\to\sD L.
\]
It is easy to see that $\sD_0\phi= \hzt\circ\sT\zf$,
\be\label{dd1}
\xymatrix@C+25pt{
\sT\zS \ar[r]^{\sT\zf}\ar[d]_{\zt_\zS}\ar@/^2pc/[rr]^{\sD_0\phi} & \sT\Lst\ar[d]^{\zt_\Lst}\ar[r]^{\hzt} & \sD L\ar[d]^{\hzt_L} \\
\zS \ar[r]^{\zf}\ar@/_2pc/[rr]_{\phi_0} & \Lst\ar[r]^{\zt} & M\,.}
\ee
Another VB-morphism of this type is
$$J^\#\circ\zF_0:\sT\zS\to\sD L,$$
and the corresponding diagram reads
\be\label{dd2}
\xymatrix@C+30pt@R+10pt{
\sT\zS \ar[r]^{\zF_0}\ar[d]_{\zt_\zS}\ar@/^2pc/[rr]^{J^\#\circ\zF_0} & \sJ^1L\ar[d]^{\sj^1\zt}\ar[r]^{J^\#} & \sD L\ar[d]^{\hzt_L} \\
\zS \ar[r]^{\phi_0}\ar@/_2pc/[rr]_{\phi_0} & M \ar@{=}[r] & M\,.}
\ee
Our action functional is now
\be\label{afr}
S(\zF_0,\phi)=\int_\zS \La\zF_0\,\overset{\we}{,}\,\Big(\sD_0\phi+\half J^\#\circ\zF_0\Big)\Ra_L.
\ee
The reader deserves an explanation. The canonical pairing $\pair_L:\sJ^1L\ti_M\sD L\to L$ takes values in $L$, but what we integrate is a two-form on $\zS$ with values in the pullback bundle $\phi_0^*L$, which is trivialized by $\phi$, so we can actually view it as a standard two-form on $\zS$.

Let us see what this two-form looks like in local coordinates. Given affine coordinates $(x^i,z)$ on the line bundle $L$, and the dual coordinates $(x^i,s)$ on the dual bundle, we have the adapted coordinates $(x^i,p_j,z)$ on $\sJ^1L$, and $(x^i,\dot x^j,t)$ on $\sD L$. With these coordinates, the map $\zf:\zS\to\Lst$ associates local functions $X^i=x^i\circ\phi_0$ and $\s=s\circ\zf$, and the VB-morphism $\zF_0$ associates local 1-forms $\p_j,\z$ on $\zS$.

The coordinate form (\ref{dots}) of the map $\hzt:\sT\Lst\to\sD L$ implies that, in turn, with $\sD_0\phi$ there are associated 1-forms $\xd X^i$ and $\xd\s/\s$, and
\be\label{1pair}
\La\zF_0\,\overset{\we}{,}\,\sD_0\phi\Ra_L=\p_i\we\xd X^i+\frac{1}{\s}\z\we\xd\s.
\ee
According to the form (\ref{redmap}) of $J^\#$, we have also
\be\label{2pair}
\La\zF_0\,\overset{\we}{,}\,\half J^\#\circ\zF_0\Ra_L=\half\zL^{ij}(X)\p_i\we\p_j+E^j(X)\z\we\p_j.
\ee
Both (\ref{1pair}) and (\ref{2pair}) are two-forms on $\zS$ with values in the pullback bundle $\phi_0^*L$, so they represent actual two-forms on $\zS$ after being multiplied by $\s$ (cf. (\ref{funct})). Consequently,
\[ S(\zF_0,\phi)=\int_\zS\Big(\s\p_i\we\xd X^i+\z\we\xd\s+\frac{\s}{2}\zL^{ij}(X)\p_i\we\p_j+\s E^j(X)\z\we\p_j\Big).
\]
Let us observe now that, according to (\ref{J1L}), $(x^i,s,\zp_j=sp_j,z)$ are the corresponding adapted coordinates on $\sT^*\Lst$, so our action functional is actually the same as the action functional (\ref{hpa}) in the homogeneous Poisson model, in which we just replace $\bpi_i$ with $\s\p_i$.
Solutions are then Lie algebroid morphisms $\zF:\sT\zS\to\sT^*\Lst$ which, according to Theorem \ref{the}, are the same as the corresponding morphisms $\Psi:\sT\zS\ti\Rt\to\sT^*\Lst$ of Jacobi algebroids. In other words, the diagram (\ref{morr1}) consists of Lie algebroid morphisms. The commutativity requires (\ref{cond1}), that reduces to
$\sJ^\#\circ\zF_0=\sD_0\phi$. This means that the side parts of diagrams (\ref{dd1}) and (\ref{dd2}) are the same.
Summing up, we get the following.

\begin{theorem}
Solutions induced by the action functional (\ref{afr}) are fields $(\zF_0,\phi)$ such that $\zF_0:\sT\zS\to\sJ^1L$ is a Lie algebroid morphism and $J^\#\circ\zF_0=\sD_0\phi$.
\end{theorem}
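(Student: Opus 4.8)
The plan is to reduce the variational problem to the homogeneous Poisson sigma model of Section~\ref{HomPoi}, whose solutions are already characterised as Lie algebroid morphisms, and then to translate that characterisation back through Theorem~\ref{the} and Proposition~\ref{propmor}. Since the bijective correspondence between the fields $(\zF_0,\phi)$ and the fields $\zF:\sT\zS\to\sT^*\Lst$ of the homogeneous model has been established above, I only need to check two things: that the two action functionals agree under this correspondence, and that the condition ``$\zF$ is a Lie algebroid morphism'' unwinds into the two stated conditions on $(\zF_0,\phi)$.

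First I would make the field correspondence explicit and match the functionals. A pair $(\zF_0,\phi)$ determines $\zf:\zS\to\Lst$ via $\Bk{\phi(\zg,1)}{\zf(\zg)}=1$, and under the canonical decomposition $\sT^*\Lst=\sJ^1L\ti_M\Lst$ the data $(\zF_0,\zf)$ assemble into a single VB-morphism $\zF=(\zF_0,\zf):\sT\zS\to\sT^*\Lst$. In adapted coordinates the $L$-valued integrand of (\ref{afr}), once trivialized by $\phi$ (equivalently, multiplied by $\s=s\circ\zf$), reproduces the scalar integrand of (\ref{hpa}) with $\bpi_i$ replaced by $\s\p_i$. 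The decisive observation here is that $\s$ is nowhere zero, because $\zf$ takes values in $\Lst=L^*\setminus 0_M$; hence $\bpi_i=\s\p_i$ is a fibrewise linear isomorphism of field configurations, and the two variational problems share the same critical points. It will then follow that $(\zF_0,\phi)$ solves (\ref{afr}) if and only if the associated $\zF$ solves (\ref{af}), i.e.\ is a Lie algebroid morphism.

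Next I would decode the Lie algebroid morphism condition on $\zF$. By Theorem~\ref{the}, $\zF$ is a Lie algebroid morphism precisely when the associated $\Rt$-vector bundle morphism $\Psi$ of (\ref{Psi}) is a morphism of Jacobi algebroids. Applying Proposition~\ref{propmor} to the decompositions $E^1=\sT\zS\ti\Rt$ and $E^2=\sT^*\Lst$ -- for which $\Psi_0=\zF_0$, $\zr_1^\#=\on{id}\ti 0$, $\zr_2^\#=J^\#$, and the induced map is $\sA\zc=\sD\phi:\sT\zS\ti\R\to\sD L$ -- turns this into the conjunction of (i) $\zF_0$ is a Lie algebroid morphism; (ii) $\sD\phi$ is a Lie algebroid morphism; and (iii) the compatibility (\ref{cond1}), namely $J^\#\circ\zF_0=\sD_0\phi$. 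Condition (ii) is automatic, since $\phi$ is a morphism in the category of line bundles and therefore induces a Lie algebroid morphism $\sD\phi$ by the Proposition of \cite{Le:2018}. What survives is exactly (i) and (iii), which is the assertion; diagrammatically, (iii) says that the outer boundaries of (\ref{dd1}) and (\ref{dd2}) coincide, so that (\ref{morr1}) becomes a diagram of Lie algebroid morphisms.

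The argument is essentially bookkeeping once Theorem~\ref{the} is in hand, so the main obstacle is not conceptual but the reparametrization step: I must confirm that $\bpi_i=\s\p_i$ is a genuine bijection of field spaces, so that critical points correspond without loss or gain. This is precisely where the passage from $L^*$ to $\Lst$ is indispensable -- the nonvanishing of $\s$ guarantees the equivalence of the two variational problems and rules out spurious solutions along the zero section.
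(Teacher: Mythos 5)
Your proposal is correct and follows essentially the same route as the paper: identify the integrand of (\ref{afr}), trivialized by $\phi$, with that of (\ref{hpa}) under the substitution $\bpi_i=\s\,\p_i$ (legitimate since $\s$ is nowhere zero), and then unwind the resulting Lie algebroid morphism condition on $\zF$ through Theorem~\ref{the} and Proposition~\ref{propmor} into the two stated conditions, with the condition on $\sD\phi$ holding automatically. Your explicit remark that the nonvanishing of $\s$ is what makes the field reparametrization a bijection is a useful point the paper leaves implicit.
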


\begin{example}
Let us consider as the source manifold of the Jacobi sigma model $ \Sigma = I_u\times I_v$, like in Example \ref{ex_1}, whereas as target manifold we take the M\"obius band $L$ which is a nontrivial line bundle over the circle $\tau\,\colon\,L\,\rightarrow \, S^1$. As explained in \cite{Grabowska:2024}, it is possible to describe this line bundle as the quotient of the trivial line bundle $\R^2\,\rightarrow\,\R$ under the action of the group $\Z$ given by
\begin{equation*}
k.(x,z) = (x+k,(-1)^k z)\,.
\end{equation*}
The line bundle structure is inherited from the quotient $L=\R^2/ \Z$, which is a non-trivializable line bundle over the quotient $S^1 \simeq \R/ \Z $, and whose points are equivalence classes $\left[ (x,s) \right]$. The bundle structure can be defined by using two charts. For $x\notin \Z$, in every equivalence class, it can be chosen a representative with $x\in \left] 0, 1\right[$, so that the first chart is defined on $\cO = \left\lbrace [(x,z)]\in L \mid x\in \left] 0, 1\right[ \right\rbrace$ by
\[
\xi_1 \,\colon\, \cO \, \rightarrow \, \R^2\,, \qquad \xi_1(\left[ (x,z) \right]) = (x,z)\,.
\]
Analogously, if $x \neq k+\frac{1}{2}$, then it is possible to choose in every equivalence class a representative with $x\in \left] \frac{1}{2}, \frac{3}{2}\right[$, so that the second chart will be defined on $\cU = \Big\lbrace [(x,z)]\in L \mid x\in \left] \frac{1}{2}, \frac{3}{2}\right[ \Big\rbrace$ by
\[
\xi_2 \,\colon\, \cU \, \rightarrow \, \R^2\,, \qquad \xi_2\big(\left[ (x',z') \right]\big) = (x',z')\,.
\]
Clearly, $\cU \cup \cO \simeq B$, and the transition functions are defined as
\[
\xi_2 \circ \xi_1^{-1} (x,z) = (x,z) \,\quad\text{for}\quad x \in \left] 1/2\,,\, 1\right[ \,,
\]
and
\[
\xi_2 \circ \xi_1^{-1} (x,z) = (x+1, -z) \,\quad\text{for}\quad x \in \left] 0\,, \,1/2\right[ \,.
\]
Analogously, for the dual line bundle $L^*\,\rightarrow\,S^1$ we have local coordinates $(x,s)$ and $(x',s')$ and formally the same transition functions. Also, the vector bundle structure on $\sJ^1L$ can be described by using two charts. Indeed, using the canonical projection $\sj^1_0\,\zt \,\colon\,\sJ^1L \,\rightarrow \, L$, one obtains the new domains $\overline{\cO} = \big(\sj^1_0\,\zt\big)^{-1}(\cO)$ and $\overline{\cU} = \big(\sj^1_0\,\zt\big)^{-1}(\cU)$, with the coordinate functions
\[
\Xi_1 \,\colon\, \overline{\cO} \, \rightarrow \, \R^3\,, \qquad \Xi_1(\zl) = (x,p,z)\,,
\]
and
\[
\Xi_2 \,\colon\, \overline{\cU} \, \rightarrow \, \R^3\,, \qquad \Xi_2(\zl') = (x',p',z')\,,
\]
where $\zl \in \overline{\cO}$ and $\zl' \in \overline{\cU}$. The transition functions are easily derived,
\[
\Xi_2 \circ \Xi_1^{-1} (x',p',z') = (x,p,z) \,\quad\text{for}\quad x \in \left] 1/2\,,\, 1\right[ \,,
\]
and
\[
\Xi_2 \circ \Xi_1^{-1} (x',p',z') = (x+1,-p, -z) \,\quad\text{for}\quad x \in \left] 0\,, \,1/2\right[ \,.
\]
It is important to notice that, although the line bundle $L$ is not trivializable, the first jet bundle $\sJ^1 L$ is, because it is possible to choose two linearly independent non-vanishing global sections (see again \cite{Grabowska:2024}).
The corresponding transition functions for local coordinates $(x,\dot x,t)$ and $(x',\dot x',t')$ on $\sD L$ are
\[
(x',\dot x',t') = (x,\dot x,t) \,\quad\text{for}\quad x \in \left] 1/2\,,\, 1\right[ \,,
\]
and
\[
 (x',\dot x',t') = (x+1,\dot x, t) \,\quad\text{for}\quad x \in \left] 0\,, \,1/2\right[ \,,
\]
so the bundle $\sD L$ is trivial, $\sD L=\sT S^1\ti\R$.

\mn The Jacobi bracket on sections of the line bundle $L$ is described in terms of a homogeneous Poisson tensor on $L^*$, which is locally written as
\[
\zP_{\cO}(x,s) = E_{\cO}(x)\partial_x \we \partial_s\,,
\]
and
\[
\zP_{\cU}(x',s') = E_{\cU}(x')\partial_{x'} \we \partial_{s'}\,,
\]
so that we have a well-defined Poisson tensor if only $E_{\cO}(x)=E_{\cU}(x)$ for $x \in \left] \frac{1}{2}\,,\, 1\right[$, and $E_{\cU}(x+1)= -E_{\cO}(x)$ for $x \in \left] 0\,, \,\frac{1}{2}\right[$. In particular, we can choose the Poisson bivector, locally written as
\[
\zP_{\cO}(x,s) = \cos(\zp x)\partial_x \we \partial_s\,.
\]

\mn In the reduced picture, the fields of the Jacobi sigma model are pairs of maps $(\zF_0,\phi)$, with $\zF_0\,\colon\,\sT \zS \, \rightarrow \, \sJ^1 L$ being a vector bundle morphism, and $\phi\,\colon\, \zS \times \R \,\rightarrow \, L$ being a regular morphism of line bundles, both covering the same map $\phi_0\,\colon\, \zS \, \rightarrow \, S^1$. Using local coordinates $(u,v;r)$ on $\zS \times \R$ and local coordinates $(x,z)$ on $L$, we can express the morphism $\phi$ as
$$\phi(u,v,r)=\big(X(u,v),\mathfrak{z}(u,v)r\big),$$
where $\mathfrak{z}\ne 0$.

Analogously, we can express the morphism $\zF_0$ in terms of the triple $(X,\bp, \bz)$ where $X = x\circ \phi_0$ as before is a function on $\zS$ and $(\bp, \bz)$ are 1-forms on $\zS$ associated with the linear coordinates $(p,z)$ on $\sJ^1L$. It will be convenient to use also the map $\zf:\zS\to\Lst$ defined by $\phi$, in coordinates
$\zf(u,v)=\big(X(u,v),\s(u,v)\big)$, where $\s=1/\mathfrak{z}$.

\mn The Jacobi map $J^\#:\sJ^1L\to\sD L$ in coordinates $(x,p,z)$ and $(x,\dot x,t)$ reads
$$J^\#(x,p,z)=\big(x,\cos(\zp x)p,-\cos(\zp x)z\big),$$
and it has exactly the same form in coordinates $(x',p',z')$ and $(x',\dot x',t')$. The map $\sD_0\phi:\sT\zS\to\sD L$ is represented on the bases by $X$, and on the linear coordinates $\dot x$ and $t$ by the 1-forms $\xd X$ and $\xd\s/\s$ and our action functional for this Jacobi sigma model is
\[
S(X,\bp , \s, \z)=\int_\zS\Big(\s\p\we\xd X+\z\we\xd\s+ \s \cos(\zp X)\z\we\p\Big)\,.
\]
It is easy to see that the equations of motion are:
\bea
&&\xd X = \cos(\zp X)\bz\,; \nonumber\\
&&\xd \s = - \s\cos(\zp X)\bp\,; \nonumber\\
&&\xd \z = 0\,; \nonumber\\
&&\xd \p = -\zp\big(\sin(\zp X)\big)\z\we \p \nonumber\,.
\eea
They describe Lie algebroid morphisms $\zF_0:\sT\zS\to\sJ^1L$ such that $\sD_0(\phi)=J^{\#} \circ \zF_0$, the last requirement corresponds to the first two equations of motion.
\end{example}

\section{Almost Poisson and almost Jacobi sigma models}\label{AlmJac}
Natural generalizations of the concept of Lie algebroid have been introduced in \cite{Grabowski:1997a,Grabowski:1999} (see also \cite{Grabowski:2013a}), called in \cite{Grabowski:1999} \emph{general} and \emph{skew-symmetric algebroids}, \emph{skew algebroids} in short.
Like Lie algebroids, general (resp., skew) algebroid structures on a vector bundle $\zt:E\to M$  correspond to linear 2-contravariant tensor fields $\zP$ on the dual bundle $\zp:E^*\to M$, which are assumed to be bivector fields in the skew-symmetric case. The Jacobi identity is not assumed for such algebroids, so $\zP$ need not be a Poisson tensor. Of course, any 2-contravariant tensor field $\zL$ on a manifold $M$ induce an $\R$-bilinear bracket $\pb_\zL$ on functions on $M$ in the standard way: $\{f,g\}=\zL(\xd f,\xd g)$. This bracket does not satisfy the Jacobi identity, but the Leibniz rule
$$\{f,gh\}_\zL=g\{f,h\}_\zL+\{f,g\}_\zL h$$
holds. Of course, $\pb_\zL$ is skew-symmetric if and only if $\zL$ is skew-symmetric (i.e., it is a bivector field). In \cite{Grabowski:1999}, the brackets $\pb_\zL$ are called \emph{Leibniz brackets}, and \emph{skew-symmetric Leibniz brackets} in the case $\z$ is skew-symmetric. For the latter, the name \emph{almost Poisson brackets} is nowadays commonly used.

\mn General and skew algebroids have been applied in geometric mechanics to define the corresponding generalizations of Hamiltonian and Lagrangian formalisms \cite{Grabowska:2008,Grabowska:2006} in the Tulczyjew's picture. Skew algebroids also form a useful description of nonholonomic systems \cite{Grabowski:2009a}. In what follows, we will consider only skew algebroids.
\begin{definition} A \emph{skew algebroid} structure on a vector bundle $E$ is given by a linear bivector field
$\zP$ on $E^*$.
\end{definition}
\no Any linear bivector field has in local coordinates the form (\ref{SA}).
We have also a full analog of Theorem \ref{Lal} (cf. \cite{Grabowski:2013a,Grabowski:1999}) with the only difference that the Lie algebroid bracket need not satisfy the Jacobi identity and, consequently, the differential $\xd_\zP$ is not homological, so generally $\xd_\zP^2\ne 0$.
In complete analogy with the Lie algebroid case, skew algebroid morphisms are VB-morphisms inducing pullbacks that intertwine the skew algebroid de Rham differentials (see (\ref{mor})).

\mn Since the complete tangent lift is defined for all tensor fields \cite{Grabowski:1995,Yano:1973}, the lifted tensor $\zP=\dt\zL$ is a linear almost Poisson tensor on $\sT^*M$ for every bivector field $\zL$ on $M$, so $\zP$ defines a skew algebroid structure on $\sT^*M$. The functional (\ref{af}), acting on VB-morphisms $\zF:\sT\zS\to\sT^*M$, makes sense also for an almost Poisson $\zL$. As the Jacobi identity is not used to determine the field equations (\ref{fe}), we still have equations (\ref{fe}) for an almost Poisson $\zL$, which say that $\zF$ is a skew algebroid morphism. This way, we get for free an extension of the Poisson sigma model to an almost Poisson sigma model. Of course, as $\sT\zS$ is actually a Lie algebroid, non-trivial solutions (skew algebroid morphisms) exist only for particular almost Poisson tensors $\zL$. In particular, the differential forms
\[
\big(\zL^{ji}\circ\zf\big)\zh_j\quad\text{and}\quad
\big(\zL^{ij}_{,k}\circ\zf\big)\zh_i\we \zh_j
\]
must be closed (cf. (\ref{moreq})). Note that this is satisfied automatically if $\zL$ is Poisson.

\mn Consequently, in the framework of Jacobi bundles, we replace homogeneous Poisson structures with homogeneous bivector fields on $\Lst$. Equivalently, we consider \emph{almost Jacobi brackets} on sections of the line bundle $L\to M$, i.e., skew symmetric brackets $\pb$ on $\Sec(L)$ such that $\{\zs,\cdot\}:\Sec(L)\to\Sec(L)$ is a VB-derivation (linear first-order differential operator). We define further \emph{almost Jacobi algebroids}, their morphisms, and \emph{almost Jacobi sigma models} in an obvious way.
\begin{example} Consider on $\R^3$ with coordinates $(x,y,z)$ the almost Poisson tensor
$$\zL=\pa_x\we\big(\pa_y+x\,\pa_z\big).
$$
It is easy to see that it is not Poisson. Its tangent lift $\dt\zL$ in the adapted coordinates $(x,y,z,\dot x,\dot y,\dot z)$ in $\sT\R^3$ reads
$$\zP=\dt\zL=\dot x\,\pa_{\dot x}\we\pa_{\dot z}+\pa_{\dot x}\we\big(\pa_y+x\pa_z\big)-x\,\pa_{\dot z}\we\pa_x-\pa_{\dot y}\we\pa_x.
$$
Hence, the anchor $\zr=\zL^\#$ of the corresponding skew-algebroid structure on $\sT^*\R^3$ is
\beas
\zr(\xd x)&=&\pa_y+x\pa_z,\\
\zr(\xd y)&=&-\pa_x,\\
\zr(\xd z)&=&-x\,\pa_x,
\eeas
and the only nontrivial bracket among $\xd x,\xd y,\xd z$ is
$$[\xd z,\xd x]_\zP=\xd x.
$$
The \emph{characteristic distribution}, i.e., the image of the anchor map $\zr:\sT^*\R^3\to\sT\R^3$, is in this case a rank 2 distribution
\be\label{dist}\cD=\on{span}\{\pa_x,\pa_y+x\pa_z\},
\ee
on $\R^3$ which is clearly not involutive (actually a contact one).

\mn Consequently, the skew-algebroid de Rham derivative $\xdp$ on $\mathcal A(\sT\R^3)$ is completely characterized by
\bea\label{fed1} &\xdp(x)=-\pa_y-x\,\pa_z,\quad \xdp(y)=\pa_x,\quad \xdp(z)=x\,\pa_x,\\
&\xdp(\pa_x)=\pa_z\we\pa_x,\quad \xdp(\pa_y)=0,\quad\xdp(\pa_z)=0.\label{fed2}
\eea
In the cotangent bundle $\sT^*\R^3$, with the dual coordinates $(x,y,z,p_x,p_y,p_z)$, the linear functions $(p_x,p_y,p_z)$ represent the sections $(\pa_x,\pa_y,\pa_z)$ of $\sT\R^3$. If $\zF:\sT\zS\to\sT^*\R^3$
is a VB-morphisms covering a smooth map $\zf:\zS\to\R^3$, then we use the standard notation
$$X=x\circ\zf,\quad Y=y\circ\zf,\quad Z=z\circ\zf$$
for functions on $\zS$ being pull-backs of coordinates on $\R^3$, and
$$\zh_x=p_x\circ\zF,\quad \zh_y=p_y\circ\zF,\quad \zh_z=p_z\circ\zF$$
for the corresponding 1-forms on $\zS$.
The action functional in this case reads
$$S(\zF)=\int_\zS\Big(\zh_x\we\xd X+\zh_y\we\xd Y+\zh_z\we\xd Z+X\zh_x\we\zh_z+\zh_x\we\zh_y\Big)
$$
and the field equations are (cf. (\ref{fed1}) and (\ref{fed2}))
\beas &\xd X=-\zh_y-X\,\zh_z,\quad \xd Y=\zh_x,\quad \xd Z=X\,\zh_x,\\
&\xd\zh_x=\zh_z\we\zh_x,\quad \xd\zh_y=0,\quad\xd\zh_z=0.
\eeas
Consequently,
$$\xd\zh_x=0,\quad \xd X\we\zh_z=0,\quad  \xd X\we\zh_x=0. $$
Let us consider two cases: $\xd X\ne 0$ and $\xd X=0$.

\mn In the first case, the identity $\xd X\we\zh_x=0$ implies that $\zh_x$ is proportional to $\xd X$, and from $\xd Y=\zh_x$ it follows that $Y$ is a function of $X$ only, $Y=f(X)$, so $\zh_x=f'(X)\,\xd X$. We have then $\xd Z=X\cdot f'(X)\,\xd X$, so $Z=X\cdot f(X)-g(X)$, where $g'=f$.

From $\xd X\we\zh_z=0$ we get that also $\zh_z$ is proportional to $\xd X$ and, being closed, it is of the form $\zh_z=h(X)\,\xd X$ for some function $h$. Finally,
$$\zh_y=-\xd X-X\zh_z=-(1+h(X))\,\xd X.$$
Now, we can describe skew algebroid morphisms as follows.

\mn Take a function $X:\zS\to\R$ such that $\xd X\ne 0$, and functions $g,h:\R\to\R$.
Let $\zvy=\zi_{\xd X}$ be the liner function on $\sT\zS$ corresponding to $\xd X$.
Viewing the function $X$ as a basic function on $\sT\zS$, the VB-morphism $\zF:\sT\zS\to\sT^*\R^3$, given in coordinates $(x,y,z,p_x,p_y,p_z)$ on $\sT^*\R^3$ by
$$\zF=\Big(X,g'\circ X,X\cdot(g'\circ X)-g\circ X,(g''\circ X)\cdot\zvy,(h\circ X)\cdot\zvy,-(1+h\circ X)\cdot\zvy\Big),$$
is a skew algebroid morphism.

\mn Let us consider now the case $\xd X=0$, say $X=1$, and assume $\xd Y\ne 0$. It follows immediately that
$\xd Y=\xd Z=\zh_x$ and $\zh_z=-\zh_y$. Moreover, $0=\xd\zh_x=\zh_z\we\xd Y$, so, like above, $\zh_z=-f'(y)\xd Y$ and $\zh_y=f'(Y)\xd Y$ for some function $f:\R\to\R$. This time the construction of a skew algebroid morphism is the following. Take a regular function $Y:\zS\to\R$, $\xd Y\ne 0$, and a function $f:\R\to\R$. Let $\zvy=\zi_{\xd y}$ be the linear function on $\sT\zS$ corresponding to $\xd Y$. Viewing the function $Y$ as a basic function on $\sT\zS$, the VB-morphism $\zF:\sT\zS\to\sT^*\R^3$, given in coordinates $(x,y,z,p_x,p_y,p_z)$ on $\sT^*\R^3$ by
$$\zF=\Big(1,Y,Y+c,\zvy,(f'\circ Y)\cdot\zvy,-(f'\circ Y)\cdot\zvy\Big),$$
where $c\in\R$ is an arbitrary constant, is a skew algebroid morphism.

\mn Note that the image of $\zS$ in $\R^3$ under $\zf$ is 1-dimensional in our constructions. This must be so, as $\zF$ must intertwine the anchor maps, so this image must be an integral submanifold of the characteristic distribution. But the characteristic distribution in our case (cf. (\ref{dist})) is of rank 2 and `maximally nonintegrable', so the integrable submanifolds cannot be of dimension two.
\end{example}

\section{Conclusions and outlook}
In this paper we have proposed three variants of Jacobi sigma models which turned out to be essentially equivalent. Starting from the observation that a Jacobi structure can be viewed as a homogeneous Poisson structure on a principal $\Rt$-bundle, an action functional for a Jacobi sigma model has been defined as a homogeneous Poisson sigma model. An important particular case is that of \emph{contact sigma models}. The fields are morphisms of vector bundles from $\sT\zS$ to $\sT^*\Lst$, where $\Lst$ is a principal $\Rt$-bundle endowed with a homogeneous Poisson bivector field of degree -1. Indeed, since a general Jacobi bracket is a bracket on the space of sections of a line bundle, the description of this bracket in terms of a bivector field and a Reeb vector field is only a local one. The use of the unfolded space $\sT^*\Lst$, however, allowed us to define an action functional with the necessary invariance properties. The solutions of the model are morphisms of vector bundles which are also Lie algebroid morphisms. We also presented the standard Hamiltonian formulation of the system for $\zS$ being a rectangle. In this case, the space of solutions of the model is finite-dimensional and it possesses the structure of a symplectic groupoid. However, this groupoid inherits the action of the group $\Rt$ from the corresponding Lie algebroid, and the symplectic structure is homogeneous of degree 1 with respect to this action. Adopting the point of view elaborated in \cite{Bruce:2017}, this groupoid is nothing but a contact groupoid.

Our homogeneous approach generalizes the approach `a la Kaluza-Klein' proposed in \cite{Chatzistavrakidis:2020} in the sense that it extends Jacobi brackets defined on functions to brackets defined on sections of possibly nontrivial line bundles. Pushing forward this idea, one could replace the action of the group $\Rt$ with other structure groups: this approach could be useful to construct new models coupling gauge symmetries associated with internal degrees of freedom and diffeomorphisms of the surface $\zS$. Further work in this direction would shed some light on the relations of such an approach with those proposed in \cite{Bojowald:2005}, where groupoid actions appear in order to formulate theories with the above mentioned symmetries, or in \cite{Ikeda:2019}, where a general analysis of first-class constraints is presented.
Since the homogeneous structure plays a crucial r\^ole when dealing with a nontrivial Jacobi bundle, an alternative approach to Jacobi sigma models is possible, this time for a model with morphisms of principal $\Rt$-bundles as fields whose solutions are morphisms of Jacobi algebroids. Because the Poisson structure on $\Lst$ is homogeneous, the tangent bundle $\sT^*\Lst$ is not only a Lie algebroid like in the standard case, but it carries a canonical $\Rt$-action itself, and $\Rt$ acts by Lie algebroid automorphisms; this is the structure of a Jacobi algebroid. However, as we have proved, there is a one-to-one correspondence between the solutions of this model and the solutions of the homogeneous approach.

Finally, in our third approach, the fields take values in the reduced data, i.e., they are represented by VB-morphisms of $\sT\zS$ to the first jet bundle $\sJ^1L$ which also carries a Lie algebroid structure inherited from $\sT^*\Lst$. This picture does not refer to the `Poissonization' of Jacobi structures, but again, it is equivalent to the first two models.

\mn Coming back to the solutions of the homogeneous approach, they are Lie algebroid morphisms. Therefore, when we look from a dual point of view, they are coisotropic submanifolds of the Poisson manifold $\sT \Lst \times \sT^*\zS$, with respect to the Poisson bracket which is the product of the Poisson structures defined on the duals of the Lie algebroids $\sT^* \Lst$ and $\sT\zS$, respectively (see \cite{Mackenzie:2005}). This approach shares some similarities with the one used in \cite{Witten:1988}, where fields of nonlinear sigma models are thought as sections of a suitable $M$-bundle, where $M$ is the target manifold of the model. As a future line of research, one could investigate the formulation of a dual approach to a nonlinear sigma model using dual relations to vector bundle morphisms. In this case, however, fields need not be sections, even if the support bundle would be trivial. Once identified a solution as a coisotropic submanifold of the whole Poisson manifold, then a non-commutative deformation of this solution could be constructed in terms of $A_{\infty}$ structures integrating a proper $P_{\infty}$ structure, as presented in \cite{Cattaneo:2007}.

Eventually, we have also introduced a generalization involving morphisms of skew-algebroids $(E,\zP)$. In this case, we consider brackets that do not satisfy the Jacobi identity, whose `Poissonization' leads to almost Poisson brackets. From the dual point of view, the corresponding graded `de Rham derivative' of the algebra $\A(E^*)$ is not a homological derivation. In the literature, almost Jacobi structures appear when considering a twist of the action functional via a WZW term involving a non-closed $H$-field \cite{Chatzistavrakidis:2020, Bascone:2024}. The action functional appearing in Section \ref{AlmJac}, however, is different from the models derived in this way. Further investigation is required to understand the relation with these models and also with the generalized approach to sigma models proposed in \cite{Chatzistavrakidis:2023}.

\section*{Acknowledgements}
FDC thanks the UC3M, the European Commission through the Marie Sk\l odowska-Curie COFUND Action (H2020-MSCA-COFUND-2017- GA 801538), and Banco Santander for their financial support through the CONEX-Plus Programme. He also thanks the Institute of Mathematics of the Polish Academy of Sciences for its kind hospitality during the development of this project.

The research of KG and JG was partially funded by the National Science Centre (Poland) within the project WEAVE-UNISONO, No. 2023/05/Y/ST1/00043.

\vskip.5cm
\noindent Fabio Di Cosmo\\\emph{Departamento de F\'{i}sica y Matem\'atica, Universidad de Alcal\'a}\\
{\small Ctra Madrid-Barcelona, 33,600. 28805 Alcal\'a de Henares, Madrid, Spain}\\
{\tt fabio.di@uah.es}\\ https://orcid.org/0000-0003-0256-5913
\\

\no Katarzyna Grabowska\\\emph{Faculty of Physics,
University of Warsaw,}\\
{\small ul. Pasteura 5, 02-093 Warszawa, Poland} \\{\tt konieczn@fuw.edu.pl}\\
https://orcid.org/0000-0003-2805-1849\\

\noindent Janusz Grabowski\\\emph{Institute of Mathematics, Polish Academy of Sciences}\\{\small ul. \'Sniadeckich 8, 00-656 Warszawa, Poland}\\
{\tt jagrab@impan.pl}\\  https://orcid.org/0000-0001-8715-2370
\end{document}